    \newcolumntype{L}{>{\raggedright\arraybackslash}X}
\newcommand{\be} {\begin{eqnarray*}}
\newcommand{\ee} {\end{eqnarray*}}
\newcommand{\wrho} {{\widetilde \rho}}
\m@th\displaystyle{##}$\hfil}
\m@th\displaystyle{##}$\hfil}
\newcommand{\field}[1]{\mathbb{#1}}
\newcommand{\E}{\field{E}}
\newcommand{\tran}{^{\top\kern-\scriptspace}}
\def\chakraborty{\textcolor{blue}}
\theoremstyle{example} \theoremstyle{remark} \theoremstyle{lemma}
\theoremstyle{definition} \theoremstyle{corol}
\theoremstyle{proposition} \theoremstyle{condition}
\theoremstyle{assumption}
\newtheorem{assumption}{\n{Assumption}}[section]
\newtheorem{theorem}{\n{Theorem}}[section]
\newtheorem{example}{\n{Example}}[section]
\newtheorem{remark}{\n{Remark}}[section]
\newtheorem{lemma}{\n{Lemma}}[section]
\newtheorem{definition}{\n{Definition}}[section]
\newtheorem{proposition}{\n{Proposition}}[section]
\newcommand{\Rmnum}[1]{\expandafter\romannumeral #1}
\def\bb{\mathbb}
\def\cal{\mathcal}
\def\dis{\displaystyle}
\font\n=cmcsc12
\def\cov{{\mbox{cov}}}
\def\var{{\mbox{var}}}
\def\symb{\mathlarger{\mathlarger{\varepsilon}}}
\newcommand{\bigCI}{\mathrel{\text{\scalebox{1.07}{$\perp\mkern-10mu\perp$}}}}
\newcommand{\nbigCI}{\centernot{\bigCI}}
\DeclareMathOperator*{\plim}{plim}
\renewcommand{\baselinestretch}{1}
\renewcommand\@biblabel[1]{}
\renewenvironment{thebibliography}[1]
     {\section*{\refname}%
      \@mkboth{\MakeUppercase\refname}{\MakeUppercase\refname}%
      \list{}%
           {\leftmargin0pt
            \@openbib@code
            \usecounter{enumiv}}%
      \sloppy
      \clubpenalty4000
      \@clubpenalty \clubpenalty
      \widowpenalty4000%
      \sfcode`\.\@m}
     {\def\@noitemerr
       {\@latex@warning{Empty `thebibliography' environment}}%
      \endlist}
\newcounter{alphasect}
\def\alphainsection{0}
\let\oldsection=\section
\def\section{%
  \ifnum\alphainsection=1%
    \addtocounter{alphasect}{1}
  \fi%
\oldsection}%
\renewcommand\thesection{%
 \ifnum\alphainsection=1%
   \Alph{alphasect}%
 \else
   \arabic{section}%
 \fi%
}%
\newenvironment{alphasection}{%
  \ifnum\alphainsection=1%
    \errhelp={Let other blocks end at the beginning of the next block.}
    \errmessage{Nested Alpha section not allowed}
  \fi%
  \setcounter{alphasect}{0}
  \def\alphainsection{1}
}{%
  \setcounter{alphasect}{0}
  \def\alphainsection{0}
}%
\begin{document}

\def\spacingset#1{\renewcommand{\baselinestretch}%
{#1}\small\normalsize} \spacingset{1}


  \title{\bf A New Framework for Distance and Kernel-based Metrics in High Dimensions}
  \author{Shubhadeep Chakraborty \\
    Department of Statistics, Texas A\&M University\\
    and \\
    Xianyang Zhang \\
    Department of Statistics, Texas A\&M University}
    \date{}
  \maketitle

\bigskip

\begin{abstract}

The paper presents new metrics to quantify and test for (i) the equality of distributions and (ii) the independence between two high-dimensional random vectors. We show that the energy distance based on the usual Euclidean distance cannot completely characterize the homogeneity of two high-dimensional distributions in the sense that it only detects
the equality of means and the traces of covariance matrices in the high-dimensional setup. We propose a new class of metrics which inherits the desirable properties of the energy distance and maximum mean discrepancy/(generalized) distance covariance and the Hilbert-Schmidt Independence Criterion in the low-dimensional setting and is capable of detecting the homogeneity of/completely characterizing independence between the low-dimensional marginal distributions in the high dimensional setup. We further propose t-tests based on the new metrics to perform high-dimensional two-sample testing/independence testing and study their asymptotic behavior under both high dimension low sample size (HDLSS) and high dimension medium sample size (HDMSS) setups. The computational complexity of the t-tests only grows linearly with the dimension and thus is scalable to very high dimensional data. We demonstrate the superior power behavior of the proposed tests for homogeneity of distributions and independence via both simulated and real datasets.

\end{abstract}

\noindent%
{\it Keywords:} Distance Covariance, Energy Distance, High Dimensionality, Hilbert-Schmidt Independence Criterion, Independence Test, Maximum Mean Discrepency, Two Sample Test, U-statistic.

\newpage
\spacingset{1.45} 

\section{Introduction}

Nonparametric two-sample testing of homogeneity of distributions has been a classical problem in statistics, finding a plethora of applications in goodness-of-fit testing, clustering, change-point detection and so on. Some of the most traditional tools in this domain are Kolmogorov-Smirnov test, and Wald-Wolfowitz runs test, whose multivariate and multidimensional extensions have been studied by Darling\,(1957), David\,(1958) and Bickel\,(1969) among others. Friedman and Rafsky\,(1979) proposed a distribution-free multivariate generalization of the Wald-Wolfowitz runs test applicable for arbitrary but fixed dimensions. Schilling\,(1986) proposed another distribution-free test for multivariate two-sample problem based on $k$-nearest neighbor ($k$-NN) graphs. Maa et al.\,(1996) suggested a technique for reducing the dimensionality by examining the distribution of interpoint distances. In a recent novel work, Chen and Friedman\,(2017) proposed graph-based tests for moderate to high dimensional data and non-Euclidean data. The last two decades have seen an abundance of literature on distance and kernel-based tests for equality of distributions. Energy distance (first introduced by Sz\'{e}kely (2002)) and maximum mean discrepancy or MMD (see Gretton et al.\,(2012)) have been widely studied in both the statistics and machine learning communities. Sejdinovic et al.\,(2013) provided a unifying framework establishing the equivalence between the (generalized) energy distance and MMD. Although there have been some very recent works to gain insight on the decaying power of the distance and kernel-based tests for high dimensional inference (see for example Ramdas et al.\,(2015a, 2015b), Kim et al. (2018) and  Li (2018)), the behavior of these tests in the high dimensional setup is still a pretty unexplored area.

Measuring and testing for independence between two random vectors has been another fundamental problem in statistics, which has found applications in a wide variety of areas such as independent component analysis, feature selection, graphical modeling, causal inference, etc. There has been an enormous amount of literature on developing dependence metrics to quantify non-linear and non-monotone dependence in the low dimensional context. Gretton et al.\,(2005, 2007) introduced a kernel-based independence measure, namely the Hilbert-Schmidt Independence Criterion (HSIC). Bergsma and Dassios\,(2014) proposed a consistent test of independence of two ordinal random variables based on an extension of Kendall's tau. Josse and Holmes\,(2014) suggested tests of independence based on the RV coefficient. Sz\'{e}kely et al.\,(2007), in their seminal paper, introduced distance covariance (dCov) to characterize dependence between two random vectors of arbitrary dimensions. Lyons\,(2013) extended the notion of distance covariance from Euclidean spaces to arbitrary metric spaces. Sejdinovic et al.\,(2013) established the equivalence between HSIC and (generalized) distance covariance via the correspondence between positive definite kernels and semi-metrics of negative type. Over the last decade, the idea of distance covariance has been widely extended and analyzed in various ways; see for example Zhou\,(2012), Sz\'{e}kely and Rizzo\,(2014), Wang et al.\,(2015), Shao and Zhang\,(2014), Huo and Sz\'{e}kely\,(2016), Zhang et al.\,(2018), Edelmann et al.\,(2018) among many others. There have been some very recent literature  which aims at generalizing distance covariance to quantify the joint
dependence among more than two random vectors; see for example Matteson and Tsay\,(2017), Jin and Matteson\,(2017), Chakraborty and Zhang\,(2018), B\"{o}ttcher\,(2017), Yao et al.\,(2018), etc. However, in the high dimensional setup, the literature is scarce, and the behavior of the widely used distance and kernel-based dependence metrics is not very well explored till date.  Sz\'{e}kely and Rizzo\,(2013) proposed a distance correlation based t-test to test for independence in high dimensions. In a very recent work, Zhu et al.\,(2018) showed that in the high dimension low sample size (HDLSS) setting, i.e., when the dimensions grow while the sample size is held fixed, the sample distance covariance can only measure the component-wise {\it linear dependence} between the two vectors. As a consequence, the distance correlation based t-test proposed by Sz\'ekely et al.\,(2013) for independence between two high dimensional random vectors has trivial power when the two random vectors are nonlinearly dependent but component-wise uncorrelated. As a remedy, Zhu et al.\,(2018) proposed a test by aggregating the pairwise squared sample distance covariances and studied its asymptotic behavior under the HDLSS setup.

This paper presents a new class of metrics to quantify the homogeneity of distributions and independence between two high-dimensional random vectors. The core of our methodology is a new way of defining the distance between sample points (interpoint distance) in the high-dimensional Euclidean spaces. In the first part of this work, we show that the energy distance based on the usual Euclidean distance cannot completely characterize the homogeneity of two high-dimensional distributions in the sense that it only detects the {\it equality of means and the traces of covariance matrices} in the high-dimensional setup. To overcome such a limitation, we propose a new class of metrics based on the new distance which inherits the nice properties of energy distance and maximum mean discrepancy in the low-dimensional setting and is capable of detecting the {\it pairwise homogeneity of the low-dimensional marginal distributions} in the HDLSS setup. We construct a high-dimensional two sample t-test based on the U-statistic type estimator of the proposed metric, which can be viewed as a generalization of the classical two-sample t-test with equal variances. We show under the HDLSS setting that the new two sample t-test converges to a central t-distribution under the null and it has nontrivial power for a broader class of alternatives compared to the energy distance.  We further show that the two sample t-test converges to a standard normal limit under the null when the dimension and sample size both grow to infinity with the dimension growing more rapidly. It is worth mentioning that we develop an approach to unify the analysis for the usual energy distance and the proposed metrics. Compared to existing works, we make the following contribution.
\begin{itemize}
\item We derive the asymptotic variance of the generalized energy distance under the HDLSS setting and propose a computationally efficient variance estimator (whose computational cost is linear in the dimension). Our analysis is based on a pivotal t-statistic which does not require permutation or resampling-based inference and allows an asymptotic exact power analysis.
\end{itemize}

In the second part, we propose a new framework to construct dependence metrics to quantify the dependence between two high-dimensional random vectors $X$ and $Y$ of possibly different dimensions. The new metric, denoted by $\cal{D}^2(X, Y)$, generalizes both the distance covariance and HSIC. It completely characterizes independence between $X$ and $Y$ and inherits all other desirable properties of the distance covariance and HSIC for fixed dimensions. In the HDLSS setting, we show that the proposed population dependence metric behaves as an aggregation of group-wise (generalized) distance covariances. We construct an unbiased U-statistic type estimator of $\cal{D}^2(X, Y)$ and show that with growing dimensions, the unbiased estimator is asymptotically equivalent to the sum of group-wise squared sample (generalized) distance covariances. Thus it can quantify {\it group-wise non-linear dependence} between two high-dimensional random vectors, going beyond the scope of the distance covariance based on the usual Euclidean distance and HSIC which have been recently shown only to capture the componentwise linear dependence in high dimension, see Zhu et al. (2018). We further propose a t-test based on the new metrics to perform high-dimensional independence testing and study its asymptotic size and power behaviors under both the HDLSS and high dimension medium sample size (HDMSS) setups. In particular, under the HDLSS setting, we prove that the proposed t-test converges to a central t-distribution under the null and a noncentral t-distribution with a random noncentrality parameter under the alternative. Through extensive numerical studies, we demonstrate that the newly proposed t-test can capture group-wise nonlinear dependence which cannot be detected by the usual distance covariance and HSIC in the high dimensional regime. Compared to the marginal aggregation approach in Zhu et al. (2018), our new method enjoys two major advantages.
\begin{itemize}
\item Our approach provides a neater way of generalizing the notion of distance and kernel-based dependence metrics. The newly proposed metrics completely characterize dependence in the low-dimensional case and capture group-wise nonlinear dependence in the high-dimensional case. In this sense, our metric can detect a wider range of dependence compared to the marginal aggregation approach.
\item The computational complexity of the t-tests only grows linearly with the dimension and thus is scalable to very high dimensional data. 
\end{itemize}

\emph{Notation}. Let $X = (X_1, \dots X_p) \in \bb{R}^p$ and $Y = (Y_1, \dots, Y_q)$ $\in \bb{R}^q$ be two random vectors of dimensions $p$ and $q$ respectively. Denote by $\Vert\cdot\Vert_p$ the Euclidean norm of $\mathbb{R}^p$ (we shall use it interchangeably with $\Vert\cdot\Vert$ when there is no confusion). Let $0_p$ be the origin of $\bb{R}^p$. We use $X \bigCI Y$ to denote that $X$ is independent of $Y$, and use $``X \overset{d}{=} Y"$ to indicate that $X$ and $Y$ are identically distributed.
Let $(X',Y')$, $(X'', Y'')$ and $(X''', Y''')$ be independent copies of $(X,Y)$.
We utilize the order in probability notations such as stochastic boundedness
$O_p$ (big O in probability), convergence in probability $o_p$ (small o in
probability) and equivalent order $\asymp$, which is defined as follows: for a sequence of random variables $\{Z_n\}_{n=1}^{\infty}$
and a sequence of real numbers $\{a_n\}_{n=1}^{\infty}$, $Z_n \asymp_p a_n$ if and only if $Z_n/a_n = O_p(1)$ and $a_n/Z_n = O_p(1)$ as $n \to \infty$.
For a metric space $(\cal{X}, d_{\cal{X}})$, let $\cal{M}(\cal{X})$ and $\cal{M}_1(\cal{X})$ denote the set of all finite signed Borel measures on $\cal{X}$ and all probability measures on $\cal{X}$, respectively. Define $\cal{M}^1_{d_{\cal{X}}}(\cal{X}):= \{v \in \cal{M}(\cal{X}) \,:\, \exists\, x_0 \in \cal{X} \; \text{s.t.}\;\int_{\cal{X}} d_{\cal{X}}(x,x_0)\, d|v|(x) < \infty\}$. For $\theta > 0$, define $\cal{M}_{\mathcal{K}}^{\theta}(\cal{X}):= \{v \in \cal{M}(\cal{X}) \,:\, \int_{\cal{X}} \mathcal{K}^{\theta}(x,x)\, d|v|(x) < \infty\}$, where $\mathcal{K}: \cal{X} \times \cal{X} \to \bb{R}$ is a bivariate kernel function.
Define $\cal{M}^1_{d_{\cal{Y}}}(\cal{Y})$ and $\cal{M}_{\mathcal{K}}^{\theta}(\cal{Y})$ in a similar way. 
For a matrix $A = (a_{kl})_{k,l=1}^n \in \bb{R}^{n \times n}$, define its $\cal{U}$-centered version $\tilde{A} = (\tilde{a}_{kl}) \in \bb{R}^{n \times n}$ as follows
\begin{align}\label{U centering def}
\tilde{a}_{kl} = \begin{cases}
a_{kl}\, -\,\mathlarger{ \frac{1}{n-2}} \dis\sum_{j=1}^n a_{kj} \,-\, \frac{1}{n-2} \dis\sum_{i=1}^n a_{il}\, +\, \frac{1}{(n-1)(n-2)} \dis\sum_{i,j=1}^n a_{ij}, \; & k \neq l,\\
0, & k=l, \end{cases}
\end{align}
for $k, l = 1, \dots, \, n$. Define
$$(\tilde{A} \cdot \tilde{B}):=\frac{1}{n(n-3)} \dis \sum_{k\neq l} \tilde{a}_{kl} \tilde{b}_{kl}$$
for $\tilde{A} = (\tilde{a}_{kl})$ and $\tilde{B} = (\tilde{b}_{kl})\in \bb{R}^{n \times n}$.
Denote by $\textrm{tr}(A)$ the trace of a square matrix $A$. $A \otimes B$ denotes the kronecker product of two matrices $A$ and $B$. Let $\Phi(\cdot)$ be the cumulative distribution function of the standard normal distribution. 
Denote by $t_{a,b}$  the noncentral t-distribution with $a$ degrees of freedom
and noncentrality parameter $b$. Write $t_a=t_{a,0}$. Denote by $q_{\alpha, a}$ and $Z_{\alpha}$ the upper $\alpha$ quantile of the distribution of $t_{a}$ and the standard normal distribution, respectively, for $\alpha \in (0,1)$. Also denote by $\chi^2_{a}$  the chi-square distribution with $a$ degrees of freedom.
Denote $U\sim$ Rademacher\,$(0.5)$\, if\, $P(U=1) = P(U=-1) = 0.5$. Let $\mathbbm{1}_A$ denote the indicator function associated with a set $A$. Finally, denote by $\lfloor a \rfloor$ the integer part of $a\in\mathbb{R}$.

\section{An overview: distance and kernel-based metrics} \label{sec:overview}
\subsection{Energy distance and MMD}\label{ed sec}

Energy distance (see Sz\'{e}kely et al.\,(2004, 2005), Baringhaus and Franz\,(2004)) or the Euclidean energy distance between two random vectors $X,Y\in \bb{R}^p$ and $X \bigCI Y$ with $\E \Vert X \Vert_p < \infty$ and $\E \Vert Y \Vert_p < \infty$, is defined as
\begin{equation}\label{ed def}
ED(X,Y) \;=\; 2\,\E \Vert X-Y \Vert_p - \E \Vert X-X'\Vert_p - \E \Vert Y-Y'\Vert_p \; ,
\end{equation}
where $(X',Y')$ is an independent copy of $(X,Y)$. Theorem 1 in Sz\'{e}kely et al.\,(2005) shows that \,$ED(X,Y) \geq 0$ and the equality holds if and only if $X \overset{d}{=} Y$. In general, for an arbitrary metric space $(\cal{X}, d)$, the generalized energy distance between $X \sim P_X$\, and \,$Y \sim P_Y$ where $P_X, P_Y \in \cal{M}_1(\cal{X}) \cap \cal{M}^1_{d}(\cal{X})$ is defined as
\begin{equation}\label{ed def general}
ED_d(X,Y) \;=\; 2\,\E \,d(X,Y) - \E \,d(X,X') - \E \,d(Y,Y') \;.
\end{equation}
\begin{definition}[Spaces of negative type]\label{negative type}
A metric space $(\cal{X}, d)$ is said to have negative type if for all $n\geq 1$, $x_1, \dots, x_n \in \cal{X}$ and $\alpha_1, \dots, \alpha_n \in \bb{R}$ \,with $\sum_{i=1}^n \alpha_i = 0$, we have
\begin{align}\label{def:neg type}
\sum_{i, j =1}^n \alpha_i\, \alpha_j\, d(x_i, x_j) \leq 0\;.
\end{align}
The metric space $(\cal{X}, d)$ is said to be of strong negative type if the equality in (\ref{def:neg type}) holds only when $\alpha_i = 0$ for all $i \in \{1,\dots, n\}$.
\end{definition}

If $(\cal{X}, d)$ has strong negative type, then $ED_d(X,Y)$ completely characterizes the homogeneity of the distributions of $X$ and $Y$ (see Lyons\,(2013) and Sejdinovic et al.\,(2013) for detailed discussions). This quantification of homogeneity of distributions lends itself for reasonable use in one-sample goodness-of-fit testing and two sample testing for equality of distributions.

On the machine learning side, Gretton et al.\,(2012) proposed a kernel-based metric, namely maximum mean discrepancy (MMD), to conduct two-sample testing for equality of distributions. We provide some background before introducing MMD.

\begin{definition}(RKHS)\label{RKHS}
Let $\cal{H}$ be a Hilbert space of real valued functions defined on some space $\cal{X}$. A bivariate function $\mathcal{K} : \cal{X} \times \cal{X} \to \bb{R}$ is called a reproducing kernel of $\cal{H}$ if :
\begin{enumerate}
\item $\forall x \in \cal{X}, \mathcal{K}(\cdot,x) \in \cal{H}$\,
\item $\forall x \in \cal{X}, \forall f \in \cal{H},\; \langle f,\mathcal{K}(\cdot,x)\rangle_{\cal{H}} = f(x)$\,
\end{enumerate}
where $\langle \cdot,\cdot\rangle_{\cal{H}}$ is the inner product associated with $\cal{H}$. If $\cal{H}$ has a reproducing kernel, it is said to be a reproducing kernel Hilbert space (RKHS).
\end{definition}

By Moore-Aronszajn theorem, for every positive definite function (also called a kernel) $\mathcal{K}: \cal{X} \times \cal{X} \to \bb{R}$, there is an associated RKHS $\cal{H}_\mathcal{K}$ with the reproducing kernel $\mathcal{K}$. 
The map $\Pi : \cal{M}_1(\cal{X}) \to \cal{H}_\mathcal{K}$, defined as $\Pi(P) = \int_{\cal{X}} \mathcal{K}(\cdot,x) \, dP(x)$\, for $P \in \cal{M}_1(\cal{X})$\, is called the mean embedding function associated with $\mathcal{K}$. A kernel $\mathcal{K}$ is said to be characteristic to $\cal{M}_1(\cal{X})$ if the map $\Pi$ associated with $\mathcal{K}$ is injective. Suppose $\mathcal{K}$ is a characteristic kernel on $\cal{X}$. Then the MMD between $X \sim P_X$ and $Y \sim P_Y$, where $P_X, P_Y \in \cal{M}_1(\cal{X}) \cap \cal{M}^{1/2}_\mathcal{K}(\cal{X})$ is defined as
\begin{align}\label{MMD}
MMD_\mathcal{K}(X,Y)\;&=\; \Vert \,\Pi(P_X) \,-\, \Pi(P_Y) \,\Vert_{\cal{H}_\mathcal{K}}\,.
\end{align}
By virtue of $\mathcal{K}$ being a characteristic kernel, $MMD_\mathcal{K}(X,Y)=0$\, if and only if $X \overset{d}{=} Y$. Lemma 6 in Gretton et al.\,(2012) shows that the squared MMD can be equivalently expressed as
\begin{equation}\label{MMD equiv expr}
MMD^2_\mathcal{K}(X,Y) \;=\; \E \,\mathcal{K}(X,X') \,+\, \E \,\mathcal{K}(Y,Y') \,-\, 2\,\E \,\mathcal{K}(X,Y) \; .
\end{equation}

Theorem 22 in Sejdinovic et al.\,(2013) establishes the equivalence between (generalized) energy distance and MMD. Following is the definition of a kernel induced by a distance metric (refer to Section 4.1 in Sejdinovic et al.\,(2013) for more details).

\begin{definition}(Distance-induced kernel and kernel-induced distance)\label{dist-induced kernel}
Let $(\cal{X}, d)$ be a metric space of negative type and $x_0 \in \cal{X}$. Denote $\mathcal{K}: \cal{X} \times \cal{X} \to \bb{R}$ as
\begin{align}\label{def:dist-induced kernel}
\mathcal{K}(x,x')\; =\; \frac{1}{2}\, \left\{d(x,x_0) + d(x',x_0) - d(x,x')\right\} .
\end{align}
The kernel $\mathcal{K}$ is positive definite if and only if $(\cal{X}, d)$ has negative type, and thus $\mathcal{K}$ is a valid kernel on $\cal{X}$ whenever $d$ is a metric of negative type. The kernel $\mathcal{K}$ defined in (\ref{def:dist-induced kernel}) is said to be the distance-induced kernel induced by $d$ and centered at $x_0$. One the other hand, the distance $d $ can be generated by the kernel $\mathcal{K}$ through
\begin{align}\label{eq-dist}
d(x,x')=\mathcal{K}(x,x)+\mathcal{K}(x',x')-2\mathcal{K}(x,x').
\end{align}
\end{definition}

Proposition 29 in Sejdinovic et al.\,(2013) establishes that the distance-induced kernel $\mathcal{K}$ induced by $d$ is characteristic to $\cal{M}_1(\cal{X}) \cap \cal{M}^{1}_\mathcal{K}(\cal{X})$ if and only if $(\cal{X}, d)$ has strong negative type. Therefore, MMD can be viewed as a special case of the generalized energy distance in (\ref{ed def general}) with $d$ being the metric induced by a characteristic kernel.


Suppose $\{X_i\}^{n}_{i=1}$ and $\{Y_i\}^{m}_{i=1}$ are i.i.d samples of $X$ and $Y$ respectively. A U-statistic type estimator of $E_d(X,Y)$ is defined as
\begin{align}\label{unif U est ED}
E_{n,m}(X,Y)=\frac{2}{n m}\sum_{k=1}^{n}\sum^{m}_{l=1}d(X_k, Y_l)-\frac{1}{n(n-1)}\sum_{k\neq l}^{n}d(X_k,X_l)-\frac{1}{m(m-1)}\sum_{k\neq l}^{m}d(Y_k,Y_l)\,.
\end{align}

In Section \ref{sec:new-homo}, we shall propose a new class of metrics for quantifying the homogeneity of high-dimensional distributions. This new class can be viewed as a particular case of the general measures in (\ref{ed def general}) with a suitably chosen distance $d$ to accommodate the high dimensionality. It thus inherits all the nice properties of $E_{d}(X, Y)$ in the low-dimensional context (see Proposition \ref{prop ED U stat} and Theorem \ref{th ED U stat} in the supplementary material). With the specific choice of distance, the new metrics can detect a broader range of inhomogeneity between high-dimensional distributions compared to Euclidean energy distance.

\subsection{Distance covariance and HSIC}
Distance covariance (dCov) was first introduced in the seminal paper by Sz\'ekely et al. (2007) to quantify the dependence between two random vectors of arbitrary (fixed) dimensions. Consider two random vectors $X\in \bb{R}^p$ and $Y \in \bb{R}^q$ with $\E \Vert X \Vert_p < \infty$ and $\E \Vert Y \Vert_q < \infty$. The Euclidean dCov between $X$ and $Y$ is defined as the positive square root of
\begin{eqnarray*}
dCov^2(X,Y)=\frac{1}{c_{p}c_{q}}\int_{{\mathbb
R}^{p+q}}\frac{|f_{X,Y}(t,s)-f_X(t)f_Y(s)|^2}{\Vert t \Vert_p^{1+p}\,\Vert s \Vert_q^{1+q}}dtds,
\end{eqnarray*}
where $f_X$, $f_Y$ and $f_{X,Y}$ are the individual and joint
characteristic functions of $X$ and $Y$ respectively, and, $c_{p}=\pi^{(1+p)/2}/\,\Gamma((1+p)/2)$
is a constant with $\Gamma(\cdot)$ being the complete gamma function.

The key feature of dCov is that it completely
characterizes independence between two random vectors of arbitrary dimensions, or in other words $dCov(X,Y)=0$ if and
only if $X \bigCI Y$. According to Remark 3 in Sz\'ekely et al.\,(2007), dCov can be equivalently expressed as
\begin{equation}\label{alt dcov}
dCov^2(X,Y) \;=\; \E\,\Vert X-X'\Vert_p \Vert Y-Y'\Vert_q \,+\, \E\,\Vert X-X'\Vert_p\,\E\,\Vert Y-Y'\Vert_q \,-\, 2 \,\E\,\Vert X-X'\Vert_p \Vert Y-Y''\Vert_q. \\
\end{equation}
Lyons\,(2013) extends the notion of dCov from Euclidean spaces to general metric spaces. For arbitrary metric spaces $(\cal{X}, d_{\cal{X}})$ and $(\cal{Y}, d_{\cal{Y}})$, the generalized dCov between $X \sim P_X \in \cal{M}_1(\cal{X}) \cap \cal{M}^1_{d_{\cal{X}}}(\cal{X})$\, and \,$Y \sim P_Y \in \cal{M}_1(\cal{Y}) \cap \cal{M}^1_{d_{\cal{Y}}}(\cal{Y})$ is defined as
\begin{align}\label{dCov Lyons}
D^2_{d_{\cal{X}}, d_{\cal{Y}}} (X,Y) \;=\; \E\,d_{\cal{X}}(X,X') d_{\cal{Y}}(Y,Y') \,+\, \E\,d_{\cal{X}}(X,X')\,\E\,d_{\cal{Y}}(Y,Y') \,-\, 2 \,\E\,d_{\cal{X}}(X,X') d_{\cal{Y}}(Y,Y'').
\end{align}
Theorem 3.11 in Lyons\,(2013) shows that if $(\cal{X}, d_{\cal{X}})$ and $(\cal{Y}, d_{\cal{Y}})$ are both metric spaces of strong negative type, then $D_{d_{\cal{X}}, d_{\cal{Y}}} (X,Y)=0$ if and only if $X \bigCI Y$. In other words, the complete characterization of independence by dCov holds true for any metric spaces of strong negative type. According to Theorem 3.16 in Lyons\,(2013), every separable Hilbert space is of strong negative type. As Euclidean spaces are separable Hilbert spaces, the characterization of independence by dCov between two random vectors in $(\bb{R}^p, \Vert \cdot \Vert_p)$ and $(\bb{R}^q, \Vert \cdot \Vert_q)$ is just a special case.

Hilbert-Schmidt Independence Criterion (HSIC) was introduced as a kernel-based independence measure by Gretton et al.\,(2005, 2007). Suppose $\cal{X}$ and  $\cal{Y}$ are arbitrary topological spaces, $\mathcal{K}_{\cal{X}}$ and $\mathcal{K}_{\cal{Y}}$ are characteristic kernels on $\cal{X}$ and $\cal{Y}$ with the respective RKHSs $\cal{H}_{\mathcal{K}_{\cal{X}}}$ and $\cal{H}_{\mathcal{K}_{\cal{Y}}}$. Let $\mathcal{K} = \mathcal{K}_{\cal{X}} \otimes \mathcal{K}_{\cal{Y}}$ be the tensor product of the kernels $\mathcal{K}_{\cal{X}}$ and $\mathcal{K}_{\cal{Y}}$, and, $\cal{H}_\mathcal{K}$ be the tensor product of the RKHSs $\cal{H}_{\mathcal{K}_{\cal{X}}}$ and $\cal{H}_{\mathcal{K}_{\cal{Y}}}$. The HSIC between $X \sim P_X \in \cal{M}_1(\cal{X}) \cap \cal{M}^{1/2}_\mathcal{K}(\cal{X})$\, and \,$Y \sim P_Y \in \cal{M}_1(\cal{Y}) \cap \cal{M}^{1/2}_\mathcal{K}(\cal{Y})$ is defined as
\begin{align}\label{HSIC}
HSIC_{\mathcal{K}_{\mathcal{X}},\mathcal{K}_\mathcal{Y}}(X,Y) \;&=\; \Vert \,\Pi(P_{XY}) \,-\, \Pi(P_X P_Y) \,\Vert_{\cal{H}_\mathcal{K}},
\end{align}
where $P_{XY}$ denotes the joint probability distribution of $X$ and $Y$. The HSIC between $X$ and $Y$ is essentially the MMD between the joint distribution $P_{XY}$ and the product of the marginals $P_X$ and $P_Y$. Clearly, $HSIC_{\mathcal{K}_{\mathcal{X}},\mathcal{K}_\mathcal{Y}}(X,Y) = 0$ if and only if $X \bigCI Y$. Gretton et al.\,(2005) shows that the squared HSIC can be equivalently expressed as
\begin{align}\label{HSIC equiv def}
HSIC^2_{\mathcal{K}_{\mathcal{X}},\mathcal{K}_\mathcal{Y}}(X,Y) \;&=\; \E\,\mathcal{K}_{\cal{X}}(X,X') \mathcal{K}_{\cal{Y}}(Y,Y') \,+\, \E\,\mathcal{K}_{\cal{X}}(X,X')\,\E\,\mathcal{K}_{\cal{Y}}(Y,Y') \,-\, 2 \,\E\,\mathcal{K}_{\cal{X}}(X,X') \mathcal{K}_{\cal{Y}}(Y,Y'').
\end{align}
Theorem 24 in Sejdinovic et al.\,(2013) establishes the equivalence between the generalized dCov and HSIC.

For an observed random sample $(X_i,Y_i)^{n}_{i=1}$ from the joint distribution of $X$ and $Y$, a U-statistic type estimator of the generalized dCov in (\ref{dCov Lyons}) can be defined as
\begin{equation}\label{ustat dcov}
\widetilde{D_n^2}_{\,;\,d_{\cal{X}},d_{\cal{Y}}}(X,Y)\; =\; (\tilde{A} \cdot \tilde{B})\;=\; \frac{1}{n(n-3)} \dis \sum_{k\neq l} \tilde{a}_{kl} \tilde{b}_{kl} \; ,
\end{equation}
where $\tilde{A}, \tilde{B}$ are the \,$\cal{U}$-centered versions (see (\ref{U centering def})) of $A=\big(d_{\cal{X}}(X_k, X_l)\big)_{k,l=1}^n$ and $B=\big(d_{\cal{Y}}(Y_k, Y_l)\big)_{k,l=1}^n$, respectively. We denote $\widetilde{D_n^2}_{\,;\,d_{\cal{X}},d_{\cal{Y}}}(X,Y)$ by $dCov^2_n(X,Y)$  when $d_{\cal{X}}$ and $d_{\cal{Y}}$ are Euclidean distances.

\section{New distance for Euclidean space}\label{new distance}
We introduce a family of distances for Euclidean space, which shall play a central role in the subsequent developments.
For $x \in \mathbb{R}^{\tilde{p}}$, we partition $x$ into $p$ sub-vectors or groups, namely $x=(x_{(1)},\dots,x_{(p)})$,
where $x_{(i)}\in \mathbb{R}^{d_i}$ with $\sum_{i=1}^{p}d_i=\tilde{p}$. Let $\rho_i$ be a metric or semimetric (see for example Definition 1 in Sejdinovic et al.\,(2013)) defined on $\mathbb{R}^{d_i}$
for $1\leq i\leq p$. We define a family of distances for $\mathbb{R}^{\tilde{p}}$ as
\begin{equation}\label{Kdef}
K_{\bf{d}}(x,x') \, := \, \sqrt{ \,\rho_1 (x_{(1)}, x_{(1)}') \, + \,\dots \,+ \,\rho_p (x_{(p)}, x_{(p)}') \,}\,,
\end{equation}
where $x, x' \in \bb{R}^{\tilde{p}}$ with $x = (x_{(1)},\dots,x_{(p)})$ and $x' = (x'_{(1)},\dots,x'_{(p)})$, and $\textbf{d}=(d_1,d_2,\dots,d_p)$ with $d_i\in\mathbb{Z}_+$ and $\sum_{i=1}^{p}d_i=\tilde{p}$. 

\begin{proposition}\label{metric} Suppose each $\rho_i$ is a metric of strong negative type on $\mathbb{R}^{d_i}$. Then $\left(\bb{R}^{\tilde{p}}, K_{\bf{d}}\right)$ satisfies the following two properties:
\begin{enumerate}
\item $K_{\bf{d}}:\bb{R}^{\tilde{p}} \times \bb{R}^{\tilde{p}} \rightarrow [0,\infty)$ is a valid metric on $\bb{R}^{\tilde{p}}$;
\item $\left(\bb{R}^{\tilde{p}},K_{\bf{d}}\right)$ has strong negative type.
\end{enumerate}
\end{proposition}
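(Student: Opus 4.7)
The plan is to construct a Schoenberg-type isometric embedding of $(\mathbb{R}^{\tilde p},K_{\bf d})$ into a separable Hilbert space and read off both conclusions from it. Since each $\rho_i$ is a metric of (strong) negative type on $\mathbb{R}^{d_i}$, Schoenberg's theorem furnishes a separable Hilbert space $H_i$ together with a map $\phi_i:\mathbb{R}^{d_i}\to H_i$ satisfying $\rho_i(u,u')=\|\phi_i(u)-\phi_i(u')\|_{H_i}^{2}$; the metric axiom $\rho_i(u,u')=0\Leftrightarrow u=u'$ forces $\phi_i$ to be injective. Set $\mathcal{H}:=H_1\oplus\cdots\oplus H_p$ (again a separable Hilbert space) and $\Phi(x):=(\phi_1(x_{(1)}),\ldots,\phi_p(x_{(p)}))$. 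The defining formula for $K_{\bf d}$ then gives
\[
K_{\bf d}(x,x')^{2}=\sum_{i=1}^{p}\rho_i(x_{(i)},x'_{(i)})=\sum_{i=1}^{p}\|\phi_i(x_{(i)})-\phi_i(x'_{(i)})\|_{H_i}^{2}=\|\Phi(x)-\Phi(x')\|_{\mathcal H}^{2},
\]
so $\Phi$ is an injective isometry from $(\mathbb{R}^{\tilde p},K_{\bf d})$ into $(\mathcal H,\|\cdot\|_{\mathcal H})$.

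Part 1 then follows by transfer along $\Phi$: non-negativity and symmetry are immediate from the defining sum; $K_{\bf d}(x,x')=0\Leftrightarrow \Phi(x)=\Phi(x')\Leftrightarrow x=x'$ by injectivity of each $\phi_i$; and the triangle inequality $K_{\bf d}(x,x'')\leq K_{\bf d}(x,x')+K_{\bf d}(x',x'')$ is inherited directly from $\|\cdot\|_{\mathcal H}$ via $\Phi$.

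For Part 2 I would invoke Theorem 3.16 of Lyons (2013) (cited earlier in the excerpt), which states that every separable Hilbert space is of strong negative type under its norm metric; in particular $(\mathcal H,\|\cdot\|_{\mathcal H})$ is of strong negative type. Given distinct $x^{(1)},\ldots,x^{(n)}\in\mathbb{R}^{\tilde p}$, the images $\Phi(x^{(i)})$ are also distinct (by injectivity of $\Phi$), and for $\alpha_1,\ldots,\alpha_n\in\mathbb{R}$ with $\sum_i\alpha_i=0$,
\[
\sum_{i,j=1}^{n}\alpha_i\alpha_j K_{\bf d}(x^{(i)},x^{(j)})=\sum_{i,j=1}^{n}\alpha_i\alpha_j\|\Phi(x^{(i)})-\Phi(x^{(j)})\|_{\mathcal H}\leq 0,
\]
with equality only when all $\alpha_i=0$ by strong negative type of $\mathcal H$; this transfers the property to $K_{\bf d}$.

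The only nonroutine ingredient is Schoenberg's embedding theorem, which is classical, so I expect no substantive obstacle. A minor point to confirm is that strong negative type in the paper's finite-sample sense is genuinely inherited through an injective isometry, which is immediate from Definition \ref{negative type}. It is worth remarking that the argument in fact only needs each $\rho_i$ to be a metric of negative type rather than strong negative type; the strong negative type of $K_{\bf d}$ is purchased entirely from that of the ambient Hilbert space $\mathcal H$.
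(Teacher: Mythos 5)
Your proof is correct, and it amounts to a self-contained expansion of what the paper disposes of by citation: the paper proves part (1) by the elementary observations that a sum of metrics is a metric and that the square root of a metric is again a metric, and proves part (2) by invoking Corollary 3.20 of Lyons (2013), whose proof is precisely your Schoenberg-embedding argument. The difference worth noting is in part (1): the paper's route establishes that $K_{\bf d}$ is a metric without using negative type at all, whereas your derivation of the triangle inequality passes through the Hilbert embedding and therefore needs negative type of each $\rho_i$; since that is assumed here both arguments succeed, but the paper's argument for (1) is strictly more elementary and more general. For part (2), transferring strong negative type along the injective isometry $\Phi$ is exactly right for the finite-point notion in Definition \ref{negative type}. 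One small imprecision: the Hilbert space $H_i$ produced by Schoenberg's theorem need not be separable for an arbitrary metric $\rho_i$ on $\mathbb{R}^{d_i}$ (separability of $(\mathbb{R}^{d_i},\rho_i^{1/2})$ is not automatic), so rather than asserting that $\mathcal H$ is separable and citing Lyons' Theorem 3.16 wholesale, you should note that any $n$ points of $\mathcal H$ lie in a finite-dimensional, hence separable, subspace, to which Theorem 3.16 applies; this is harmless for the finite-point definition of strong negative type used in the paper. Your closing remark that only negative type of each $\rho_i$ (together with the metric axioms) is actually needed is correct and matches the generality of the cited corollary.
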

In a special case, suppose\, $\rho_i$\, is the Euclidean distance on $\bb{R}^{d_i}$. By Theorem 3.16 in Lyons\,(2013), $(\bb{R}^{d_i}, \rho_i)$ is a separable Hilbert space, and hence has strong negative type. Then the Euclidean space equipped with the metric
\begin{align}\label{sp case}
K_{\bf{d}}(x,x') \, = \, \sqrt{ \,\Vert x_{(1)} - x_{(1)}' \Vert \, + \,\dots \,+ \,\Vert x_{(p)} - x_{(p)}' \Vert \,}\,.
\end{align}
is of strong negative type. Further, if all the components $x_{(i)}$ are unidimensional, i.e., $d_i = 1$ for $1 \leq i \leq p$, then the metric boils down to
\begin{equation}\label{Kdef_sp}
K_{\bf{d}}(x,x') \;= \; \Arrowvert x-x' \Arrowvert_1^{1/2} \;= \;\sqrt{\dis\sum_{j=1}^p |x_j - x'_j|} \; ,
\end{equation}
where $\Arrowvert x \Arrowvert_{1} = \sum_{j=1}^p |x_j|$\, is the $l_1$ or the absolute norm on $\bb{R}^p$. If
\begin{align}\label{rho for Eucl}
\rho_i(x_{(i)},x_{(i)}')=\Vert x_{(i)} - x_{(i)}' \Vert^2, \quad 1\leq i\leq p,
\end{align}
then $K_{\bf{d}}$ reduces to the usual Euclidean distance. We shall unify the analysis of our new metrics with the classical metrics by considering $K_{\bf{d}}$ which is defined in (\ref{Kdef}) with
\begin{enumerate}
  \item[S1] each $\rho_i$ being a metric of strong negative type on $\mathbb{R}^{d_i}$;
  \item[S2] each $\rho_i$ being a semimetric defined in (\ref{rho for Eucl}).
\end{enumerate}
The first case corresponds to the newly proposed metrics while the second case leads to the classical metrics based on the usual Euclidean distance. Remarks \ref{rem3.1} and \ref{rem3.2} provide two different ways of generalizing the class in (\ref{Kdef}). To be focused, our analysis below shall only concern about the distances defined in (\ref{Kdef}). In the numerical studies in Section \ref{sec:num}, we consider $\rho_i$ to be the Euclidean distance and the distances induced by the Laplace and Gaussian kernels (see Definition \ref{dist-induced kernel}) which are of strong negative type on $\mathbb{R}^{d_i}$ for $1\leq i \leq p$.

\begin{remark}\label{rem3.1}
A more general family of distances can be defined as
\begin{align*}
K_{\mathbf{d},r}(x,x')=\Big(\rho_1(x_{(1)},x_{(1)}')+\cdots+\rho_p(x_{(p)},x_{(p)}')\Big)^{r}, \quad 0<r<1.
\end{align*}
According to Remark 3.19 of Lyons (2013), the space $(\mathbb{R}^{\tilde{p}},K_{\mathbf{d},r})$ is of strong negative type. The proposed distance is a special case with $r=1/2.$
\end{remark}

\begin{remark}\label{rem3.2}
Based on the proposed distance, one can construct the generalized Gaussian and Laplacian  kernels as
$$f(K_{\mathbf{d}}(x,x')/\gamma)=\begin{cases}
\exp(-K_{\mathbf{d}}^2(x,x')/\gamma^2), \quad & f(x)=\exp(-x^2) \text{ for Gaussian kernel},\\
\exp(-K_{\mathbf{d}}(x,x')/\gamma), \quad & f(x)=\exp(-x) \text{ for Laplacian kernel}.
\end{cases}$$
If $K_{\bf{d}}$ is translation invariant, then by Theorem 9 in Sriperumbudur et al.\,(2010) it can be verified that $f(K_{\mathbf{d}}(x,x')/\gamma)$ is a characteristic kernel on $\mathbb{R}^{\tilde{p}}$. As a consequence, the Euclidean space equipped with the distance
$$K_{\mathbf{d},f}(x,x')=f(K_{\mathbf{d}}(x,x)/\gamma)+f(K_{\mathbf{d}}(x',x')/\gamma)-2f(K_{\mathbf{d}}(x,x')/\gamma)$$
is of strong negative type. 
\end{remark}

\begin{remark}\label{rem3.3}
In Sections \ref{sec:new-homo} and \ref{sec:ACdcov} we develop new classes of homogeneity and dependence metrics to quantify the pairwise homogeneity of distributions or the pairwise non-linear dependence of the low-dimensional groups. A natural question to arise in this regard is how to partition the random vectors optimally in practice. We present some real data examples in Section \ref{sub:real} of the main paper where all the group sizes have been considered to be one (as a special case of the general theory proposed in this paper), and an additional real data example in Section \ref{addl data ex} of the supplement where the data admits some natural grouping. We believe this partitioning can be very much problem specific and may require subject knowledge. We leave it for future research to develop an algorithm to find the optimal groups using the data and perhaps some auxiliary information.  
\end{remark}

\section{Homogeneity metrics}\label{sec:new-homo}
Consider $X, Y \in \bb{R}^{\tilde{p}}$. Suppose $X$ and $Y$ can be partitioned into $p$ sub-vectors or groups, viz. $X = \left(X_{(1)}, X_{(2)}, \dots, X_{(p)} \right)$ and $Y = \left(Y_{(1)}, Y_{(2)}, \dots, Y_{(p)} \right)$, where the groups $X_{(i)}$ and $Y_{(i)}$ are $d_i$ dimensional, $1\leq i\leq p$, and $p$ might be fixed or growing. We assume that $X_{(i)}$ and $Y_{(i)}$'s are finite (low) dimensional vectors, i.e., $\{d_i\}_{i=1}^p$ is a bounded sequence. Clearly $\tilde{p} = \sum_{i=1}^p d_i = O(p)$.
Denote the mean vectors and the covariance matrices of $X$ and $Y$ by $\mu_X$ and $\mu_Y$, and, $\Sigma_X$ and $\Sigma_Y$, respectively. We propose the following class of metrics $\cal{E}$ to quantify the homogeneity of the distributions of $X$ and $Y$:
\begin{align}\label{ED HD def}
\cal{E}(X,Y) \;=\;  2\,\E \,K_{\bf{d}}(X,Y)\, -\, \E \,K_{\bf{d}}(X,X') \,-\, \E \,K_{\bf{d}}(Y,Y') \; ,
\end{align}
with $\textbf{d}=(d_1,\dots,d_p)$. We shall drop the subscript $\textbf{d}$ below for the ease of notation.

\begin{assumption}\label{ass0}
Assume that $\sup_{1\leq i\leq p}\E \rho_i^{1/2}(X_{(i)},0_{d_i})< \infty$ and $\sup_{1\leq i\leq p}\E \rho_i^{1/2}(Y_{(i)},0_{d_i})< \infty$.
\end{assumption}

Under Assumption \ref{ass0}, $\cal{E}$ is finite. In Section \ref{ld E} of the supplement we illustrate that in the low-dimensional setting, $\cal{E}(X,Y)$ completely characterizes the homogeneity of the distributions of $X$ and $Y$. 

Consider i.i.d. samples $\{X_k\}^{n}_{k=1}$ and $\{Y_l\}^{m}_{l=1}$ from the respective distributions of $X$ and $Y \in \mathbb{R}^{\tilde{p}}$, where $X_k = (X_{k(1)}, \dots, X_{k(p)})$, $Y_l=(Y_{l(1)},\dots,Y_{l(p)})$ for $1 \leq k \leq n$, $1\leq l \leq m$ and $X_{k(i)}, Y_{l(i)}\in\mathbb{R}^{d_i}.$ We propose an unbiased U-statistic type estimator $\cal{E}_{n,m}(X,Y)$ of $\cal{E}(X,Y)$ as in equation (\ref{unif U est ED}) with $d$ being the new metric $K$. We refer the reader to Section \ref{ld E} of the supplement, where we show that $\cal{E}_{n,m}(X,Y)$ essentially inherits all the nice properties of the U-statistic type estimator of generalized energy distance and MMD.

We define the following quantities which will play an important role in our subsequent analysis:
\begin{align}\label{tau def original}
\tau_X^2=\E \,K(X,X')^2,\quad \tau_Y^2=\E \,K(Y,Y')^2,\quad \tau^2=\E \,K(X,Y)^2.
\end{align}
In Case S2 (i.e., when $K$ is the Euclidean distance), we have
\begin{align}\label{tau for Eucl}
\tau_X^2 = 2\textrm{tr}\Sigma_X,\quad \tau_Y^2 = 2\textrm{tr}\Sigma_Y,\quad \tau^2 = \textrm{tr}\Sigma_X + \textrm{tr} \Sigma_Y + \Vert \mu_X - \mu_Y \Vert^2.
\end{align}
Under the null hypothesis $H_0 : X \overset{d}{=} Y$, it is clear that $\tau_X^2 = \tau_Y^2 = \tau^2$. 

In the subsequent discussion we study the asymptotic behavior of $\cal{E}$ in the high-dimensional framework, i.e., when $p$ grows to $\infty$ with fixed $n$ and $m$ (discussed in Subsection \ref{subsec:ED HDLSS}) and when $n$ and $m$ grow to $\infty$ as well (discussed in Subsection \ref{subsec:ED HDMSS} in the supplement). We point out some limitations of the test for homogeneity of distributions in the high-dimensional setup based on the usual Euclidean energy distance. Consequently we propose a test based on the proposed metric and justify its consistency for growing dimension.

\subsection{High dimension low sample size (HDLSS)}\label{subsec:ED HDLSS}
In this subsection, we study the asymptotic behavior of the Euclidean energy distance and our proposed metric $\cal{E}$ when the dimension grows to infinity while the sample sizes $n$ and $m$ are held fixed. We make the following moment assumption.

\begin{assumption}\label{ass2 : ED}
There exist constants $a, a', a'', A, A', A''$ such that uniformly over $p$,
\begin{align*}
&\; 0 < a \leq \dis \inf_{1\leq i \leq p} \E \,\rho_i( X_{(i)}, X_{(i)}'\,) \leq \dis \sup_{1\leq i \leq p} \E \,\rho_i( X_{(i)}, X_{(i)}'\,) \leq A < \infty,\\
&\; 0 < a' \leq \dis \inf_{1\leq i \leq p} \E \,\rho_i( Y_{(i)}, Y_{(i)}'\,) \leq \dis \sup_{1\leq i \leq p} \E \,\rho_i( Y_{(i)}, Y_{(i)}'\,) \leq A' < \infty,\\
&\; 0 < a'' \leq \dis \inf_{1\leq i \leq p} \E \,\rho_i( X_{(i)}, Y_{(i)}\,) \leq \dis \sup_{1\leq i \leq p} \E \,\rho_i( X_{(i)}, Y_{(i)}\,) \leq A'' < \infty.
\end{align*}
\end{assumption}
Under Assumption \ref{ass2 : ED}, it is not hard to see that $\tau_X, \tau_Y, \tau \asymp p^{1/2}$. The proposition below provides an expansion for $K$ evaluated at random samples.

\begin{proposition}\label{K taylor : ED}
Under Assumption \ref{ass2 : ED}, we have
\begin{align}
&\frac{K(X,X')}{\tau_X} = 1 + \frac{1}{2} L_X(X,X') + R_X(X,X'),\\
&\frac{K(Y,Y')}{\tau_Y} = 1 + \frac{1}{2} L_Y(Y,Y') + R_Y(Y,Y'),
\end{align}
and
\begin{equation}
\frac{K(X,Y)}{\tau} = 1 + \frac{1}{2} L(X,Y) + R(X,Y),
\end{equation}
where $$L_X(X,X') := \frac{K^2(X,X') - \tau_X^2}{\tau_X^2}, \;\; L_Y(Y,Y') := \frac{K^2(Y,Y') - \tau_Y^2}{\tau_Y^2},\;\;L(X,Y) := \frac{K^2(X,Y) - \tau^2}{\tau^2}, $$ and \,$R_X(X,X'), R_Y(Y,Y'), R(X,Y)$ are the remainder terms. In addition, if $L_X(X,X'), L_Y(Y,Y')$ and $L(X,Y)$ are $o_p(1)$ random variables as $p \to \infty$, then \,$R_X(X,X') = O_p\left(L^2_X(X,X')\right)$, $R_Y(Y,Y') = O_p\left(L^2_Y(Y,Y')\right)$ and $R(X,Y) = O_p\left(L^2(X,Y)\right)$.

\end{proposition}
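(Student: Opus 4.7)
The plan is to reduce each of the three assertions to a single elementary scalar Taylor expansion of $\sqrt{1+x}$ at the origin. Since the three claims are structurally identical, I will describe the argument for $K(X,X')/\tau_X$ and note that the pairs $(Y,Y')$ and $(X,Y)$ are handled by the same steps with $\tau_Y$ and $\tau$ in place of $\tau_X$.

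First I note that, directly from the definition $L_X(X,X')=(K^2(X,X')-\tau_X^2)/\tau_X^2$, one has the exact identity $K^2(X,X')/\tau_X^2 = 1 + L_X(X,X')$. Because $K\ge 0$ by construction and $\tau_X>0$ under Assumption \ref{ass2 : ED} (which in fact yields the stronger $\tau_X\asymp p^{1/2}$), taking the positive square root gives
\begin{equation*}
\frac{K(X,X')}{\tau_X}=\sqrt{1+L_X(X,X')}.
\end{equation*}
This identity is deterministic and requires no high-dimensional assumption beyond positivity of $\tau_X$.

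Next I apply the scalar expansion $\sqrt{1+x}=1+\tfrac{1}{2}x+r(x)$, where Taylor's theorem with integral remainder (or a direct computation) gives a constant $C>0$ such that $|r(x)|\le C x^2$ on $|x|\le 1/2$. Setting $x=L_X(X,X')$ produces the claimed decomposition with $R_X(X,X')=r(L_X(X,X'))$. On the event $A_p:=\{|L_X(X,X')|\le 1/2\}$ we then have $|R_X(X,X')|\le C\,L_X(X,X')^2$; under the additional hypothesis $L_X(X,X')=o_p(1)$ as $p\to\infty$, $\Pr(A_p)\to 1$, from which the asymptotic bound $R_X(X,X')=O_p(L_X(X,X')^2)$ follows. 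The expansions for $K(Y,Y')/\tau_Y$ and $K(X,Y)/\tau$ are obtained by identical arguments, using that Assumption \ref{ass2 : ED} keeps $\tau_Y$ and $\tau$ bounded away from zero and invoking the assumed $o_p(1)$ behavior of $L_Y$ and $L$.

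There is essentially no serious obstacle here: the only subtle point is that the scalar Taylor bound $|r(x)|\le C x^2$ is valid only on a bounded neighborhood of zero, so one must verify that the random argument $L_X$ eventually lies in that neighborhood with probability tending to one --- a condition which the proposition itself builds in through the $o_p(1)$ hypothesis. I do not foresee any moment issue, since the conclusion is phrased in probability rather than in expectation and we need no tail control on $R_X$ beyond its almost-sure pointwise bound on $A_p$.
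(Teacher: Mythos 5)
Your proof is correct and follows essentially the same route as the paper: both reduce the claim to the exact identity $K/\tau=\sqrt{1+L}$ and a scalar Taylor expansion of $\sqrt{1+x}$ about the origin. The only difference is in the remainder bound — the paper derives the explicit global inequality $|R|\le L^2/\bigl(2(1+L)\bigr)$ from the integral form of the Taylor remainder (a bound it reuses later for uniform-integrability arguments), whereas you use a local bound $|r(x)|\le Cx^2$ on $|x|\le 1/2$ combined with $\Pr(|L|\le 1/2)\to 1$; both suffice for the stated $O_p(L^2)$ conclusion.
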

Henceforth we will drop the subscripts $X$ and $Y$ from $L_X, L_Y, R_X$ and $R_Y$ for notational convenience. Theorem \ref{th homo decomp} and Lemma \ref{lemma 1} below provide insights into the behavior of $\cal{E}(X,Y)$ in the high-dimensional framework.

\begin{assumption}\label{ass0.6}
Assume that $L(X,Y) = O_p(a_{p})$, $L(X,X') = O_p(b_{p})$ and $L(Y,Y') = O_p(c_{p})$, where $a_{p}, b_{p}, c_{p}$ are positive real sequences satisfying $a_{p}=o(1)$, $b_{p}=o(1)$, $c_{p}=o(1)$ and $\tau a^{2}_{p} + \tau_X b^{2}_{p} + \tau_Y c^{2}_{p} = o(1)$. 
\end{assumption}

\begin{remark}\label{assumption justification}
To illustrate Assumption \ref{ass0.6}, 
we observe that under assumption \ref{ass2 : ED} we can write 
\begin{align*}
\var \left(L(X,X')\right) \;&=\; O\Big(\frac{1}{p^2}\Big) \dis \sum_{i,j=1}^p \text{cov}\left( \rho_i( X_{(i)}, X_{(i)}')\,, \rho_j( X_{(j)}, X_{(j)}') \right) \;=\; O\Big(\frac{1}{p^2}\Big) \dis \sum_{i,j=1}^p \text{cov}\left( Z_i, Z_j\right)\,,
\end{align*}
where $Z_i := \rho_i( X_{(i)}, X_{(i)}')$ for $1\leq i \leq p$. Assume that $ \sup_{1\leq i \leq p} \E\,\rho_i^2( X_{(i)}, 0_{d_i}) < \infty$, which implies $ \sup_{1\leq i \leq p} \E\,Z_i^2 < \infty$. Under certain strong mixing conditions or in general certain weak dependence assumptions, it is not hard to see that $\sum_{i,j=1}^p \text{cov}\left( Z_i, Z_j\right) = O(p)$ as $p\to \infty$ (see for example Theorem 1.2 in Rio\,(1993) or Theorem 1 in Doukhan et al.\,(1999)). Therefore we have $\var \left(L(X,X')\right) = O(\frac{1}{p})$ and hence by Chebyshev's inequality, we have $L(X,X') = O_p(\frac{1}{\sqrt{p}})$. We refer the reader to Remark 2.1.1 in Zhu et al.\,(2019) for illustrations when each $\rho_i$ is the squared Euclidean distance.
\end{remark}

\begin{theorem}\label{th homo decomp}
Suppose Assumptions \ref{ass2 : ED} and \ref{ass0.6} hold. Further assume that the following three sequences
$$\left\{\frac{\sqrt{p}L^2(X,Y)}{1+L(X,Y)}\right\},~\left\{\frac{\sqrt{p} L^2(X,X')}{1+L(X,X')}\right\},~\left\{\frac{\sqrt{p} L^2(Y,Y')}{1+L(Y,Y')}\right\}$$
indexed by $p$ are all uniformly integrable.
Then we have
\begin{equation}\label{th homo decomp eqn}
\cal{E}(X,Y) \;=\; 2\tau - \tau_X -\tau_Y \,+\,o(1).
\end{equation}
\end{theorem}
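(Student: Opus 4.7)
The approach is to push the pointwise expansions from Proposition \ref{K taylor : ED} inside the expectations that define $\cal{E}(X,Y)$. Observe first that by the definitions in \eqref{tau def original},
\begin{align*}
\E L(X,X') = \tau_X^{-2}\bigl(\E K^2(X,X') - \tau_X^2\bigr) = 0,
\end{align*}
and likewise $\E L(Y,Y')=0$ and $\E L(X,Y)=0$. Thus taking expectations in Proposition \ref{K taylor : ED} gives
\begin{align*}
\E K(X,X') = \tau_X + \tau_X\,\E R(X,X'), \quad \E K(Y,Y') = \tau_Y + \tau_Y\,\E R(Y,Y'), \quad \E K(X,Y) = \tau + \tau\,\E R(X,Y),
\end{align*}
and substituting into \eqref{ED HD def} yields
\begin{align*}
\cal{E}(X,Y) = (2\tau - \tau_X - \tau_Y) + \bigl[2\tau\,\E R(X,Y) - \tau_X\,\E R(X,X') - \tau_Y\,\E R(Y,Y')\bigr].
\end{align*}
The task is therefore to show that each of the three remainder contributions on the right is $o(1)$.

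I would control these contributions by using the explicit form of the Taylor remainder for $\sqrt{1+x}$. Writing $K(X,Y)/\tau = \sqrt{1+L(X,Y)}$ and rationalizing,
\begin{align*}
R(X,Y) = \sqrt{1+L(X,Y)} - 1 - \tfrac12 L(X,Y) = -\,\frac{L^2(X,Y)}{2\bigl(1+\sqrt{1+L(X,Y)}\bigr)^{2}},
\end{align*}
so that the elementary bound $(1+\sqrt{1+L})^{2}\geq 1+L$ (valid whenever $L>-1$, which holds since $L(X,Y) = K^2/\tau^2-1\geq -1$) gives
\begin{align*}
|\tau\,R(X,Y)| \,\leq\, \frac{\tau\,L^2(X,Y)}{2\bigl(1+L(X,Y)\bigr)} \,\asymp\, \frac{\sqrt{p}\,L^2(X,Y)}{1+L(X,Y)},
\end{align*}
using $\tau\asymp p^{1/2}$ from Assumption \ref{ass2 : ED}. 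The same bound applies to $\tau_X\,R(X,X')$ and $\tau_Y\,R(Y,Y')$ with the corresponding $L$ terms.

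The final step combines Assumption \ref{ass0.6} with the hypothesized uniform integrability. Since $L(X,Y)=O_p(a_p)$ with $a_p=o(1)$, we have $\tau\,L^2(X,Y)/(1+L(X,Y)) = O_p(\sqrt{p}\,a_p^2) = o_p(1)$ by the rate condition $\tau a_p^2=o(1)$; combined with uniform integrability of the sequence $\sqrt{p}L^2(X,Y)/(1+L(X,Y))$, convergence in probability upgrades to convergence in $L^1$, whence $\tau\,\E R(X,Y)\to 0$. The analogous arguments, with the sequences $\sqrt{p}L^2(X,X')/(1+L(X,X'))$ and $\sqrt{p}L^2(Y,Y')/(1+L(Y,Y'))$, handle the other two remainder terms. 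Adding everything gives \eqref{th homo decomp eqn}. The only mildly delicate point is the passage from $O_p$ control of the remainder to $o(1)$ control of its expectation; this is precisely the role of the uniform integrability hypothesis, and once it is invoked the rest of the proof is bookkeeping on the Taylor expansion.
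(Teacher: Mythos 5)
Your proposal is correct and follows essentially the same route as the paper's proof: expand via Proposition \ref{K taylor : ED} using $\E L=0$, bound the remainder by $|R|\le L^2/\bigl(2(1+L)\bigr)$ (you obtain this by rationalizing $\sqrt{1+L}-1-L/2$, the paper via the integral form of the Taylor remainder, but the bound is identical), and then combine Assumption \ref{ass0.6} with uniform integrability to convert the $o_p(1)$ control of $\tau R$ into $o(1)$ control of $\tau\,\E|R|$. No gaps.
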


\begin{remark}\label{refer to rem_ui}
Remark \ref{rem_ui} in the supplementary materials provides some illustrations on certain sufficient conditions under which $\{\sqrt{p}L^2(X,Y)/(1+L(X,Y))\}$, $\{\sqrt{p}L^2(X,X')/(1+L(X,X'))\}$ and\\ $\{\sqrt{p}L^2(Y,Y')/(1+L(Y,Y'))\}$ are uniformly integrable.
\end{remark}

\begin{remark}\label{pop approx E}
To illustrate that the leading term in equation (\ref{th homo decomp eqn}) indeed gives a close approximation of the population $\cal{E} (X,Y)$, we consider the special case when $K$ is the Euclidean distance. Suppose $X \sim N_p(0,I_p)$ and $Y=X+N$ where $N \sim N_p(0,I_p)$ with $N \bigCI X$. Clearly from (\ref{tau for Eucl}) we have $\tau_X^2=2p$, $\tau_Y^2=4p$ and $\tau^2=3p$. We simulate large samples of sizes $m=n=5000$ from the distributions of $X$ and $ Y$ for $p=20, 40, 60, 80$ and $100$. The large sample sizes are to ensure that the U-statistic type estimator of $\cal{E} (X,Y)$ gives a very close approximation of the population $\cal{E} (X,Y)$. In Table \ref{table:pop approx E} we list the ratio between $\cal{E} (X,Y)$ and the leading term in (\ref{th homo decomp eqn}) for the different values of $p$, which turn out to be very close to $1$, demonstrating that the leading term in (\ref{th homo decomp eqn}) indeed approximates $\cal{E} (X,Y)$ reasonably well.

\begin{table}[!h]\footnotesize
\centering
\caption{Ratio of $\cal{E} (X,Y)$ and the leading term in (\ref{th homo decomp eqn}) for different values of $p$.}
\label{table:pop approx E}
\begin{tabular}{ccccc}
\toprule
$p=20$ & $p=40$ & $p=60$ & $p=80$ & $p=100$  \\
\hline
0.995 & 0.987 & 0.992 & 0.997 & 0.983 \\
\toprule
\end{tabular}
\end{table}

\end{remark}

\begin{lemma}\label{lemma 1}
Assume $\tau,\tau_X, \tau_Y < \infty$. We have
\begin{enumerate}
\item In Case S1, $2\tau -\tau_X - \tau_Y = 0$ if and only if $X_{(i)} \overset{d}{=} Y_{(i)}$ for $i \in \{1, \dots, p\}$;
\item In Case S2, $2\tau -\tau_X - \tau_Y = 0$ if and only if $\mu_X = \mu_Y$ and $\textrm{tr}\, \Sigma_X = \textrm{tr}\, \Sigma_Y$.
\end{enumerate}
\end{lemma}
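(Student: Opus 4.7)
I would reduce the whole lemma to a single inequality, $2\tau^{2}\ge \tau_{X}^{2}+\tau_{Y}^{2}$, combined with AM--GM on $\tau_{X}$ and $\tau_{Y}$. First, because $K^{2}(x,x')=\sum_{i}\rho_{i}(x_{(i)},x'_{(i)})$ and because $X$ and $Y$ are independent, one has the additive decomposition
\[
2\tau^{2}-\tau_{X}^{2}-\tau_{Y}^{2}\;=\;\sum_{i=1}^{p}\Delta_{i},\qquad \Delta_{i}:=2\E\rho_{i}(X_{(i)},Y_{(i)})-\E\rho_{i}(X_{(i)},X'_{(i)})-\E\rho_{i}(Y_{(i)},Y'_{(i)}).
\]
In Case~S1, $\Delta_{i}$ is the generalized energy distance built from a metric of strong negative type, so by Theorem~3.11 of Lyons\,(2013), $\Delta_{i}\ge 0$ with equality iff $X_{(i)}\overset{d}{=}Y_{(i)}$. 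In Case~S2, a direct bias--variance expansion gives $\Delta_{i}=2\|\mu_{X,(i)}-\mu_{Y,(i)}\|^{2}$, so $\sum_{i}\Delta_{i}=2\|\mu_{X}-\mu_{Y}\|^{2}\ge 0$ with equality iff $\mu_{X}=\mu_{Y}$.

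With $2\tau^{2}\ge\tau_{X}^{2}+\tau_{Y}^{2}$ in hand, I would assume $2\tau=\tau_{X}+\tau_{Y}$. Squaring yields $4\tau^{2}=\tau_{X}^{2}+\tau_{Y}^{2}+2\tau_{X}\tau_{Y}$, hence $\tau_{X}\tau_{Y}\ge \tau^{2}$, while AM--GM gives $\tau_{X}\tau_{Y}\le\left((\tau_{X}+\tau_{Y})/2\right)^{2}=\tau^{2}$. Both bounds must therefore be equalities; the AM--GM equality condition forces $\tau_{X}=\tau_{Y}=\tau$, and the bound $2\tau^{2}\ge\tau_{X}^{2}+\tau_{Y}^{2}$ becomes equality, so $\sum_{i}\Delta_{i}=0$.

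From $\sum_{i}\Delta_{i}=0$ together with $\Delta_{i}\ge 0$, each $\Delta_{i}$ vanishes. In Case~S1 the strong-negative-type characterization of each $(\mathbb{R}^{d_{i}},\rho_{i})$ then gives $X_{(i)}\overset{d}{=}Y_{(i)}$ for every $i$, establishing the forward implication; the reverse is immediate since equal marginals give $\tau^{2}=\tau_{X}^{2}=\tau_{Y}^{2}$ termwise. In Case~S2, $\sum_{i}\Delta_{i}=0$ delivers $\mu_{X}=\mu_{Y}$, and the separately obtained equality $\tau_{X}=\tau_{Y}$ translates, via $\tau_{X}^{2}=2\,\textrm{tr}\,\Sigma_{X}$ and $\tau_{Y}^{2}=2\,\textrm{tr}\,\Sigma_{Y}$ from (\ref{tau for Eucl}), into $\textrm{tr}\,\Sigma_{X}=\textrm{tr}\,\Sigma_{Y}$; the converse is a direct substitution into the same formulas.

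The one delicate point I expect is that Case~S2 needs \emph{both} branches of the equality chain: the sum identity $\sum_{i}\Delta_{i}=0$ produces only the mean equality, whereas the trace equality must be extracted from the AM--GM equality $\tau_{X}=\tau_{Y}$. One has to resist the temptation to derive the full conclusion from $2\tau^{2}=\tau_{X}^{2}+\tau_{Y}^{2}$ alone, which is compatible with $\textrm{tr}\,\Sigma_{X}\neq\textrm{tr}\,\Sigma_{Y}$. Beyond that bookkeeping, Case~S1 reduces cleanly to Lyons's strong-negative-type characterization and Case~S2 to elementary scalar algebra.
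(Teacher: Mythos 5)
Your proof is correct, and at its core it rests on the same algebraic identity as the paper's, reached by a different packaging. The paper handles Case S1 by passing to explicit Hilbert-space embeddings $\phi_i$ with $\rho_i(z,z')=\Vert\phi_i(z)-\phi_i(z')\Vert^2_{\mathcal{H}_i}$, writing $\tau^2=\tau_X^2/2+\tau_Y^2/2+\zeta^2$ with $\zeta^2=\sum_{i=1}^p\Vert\E\,\phi_i(X_{(i)})-\E\,\phi_i(Y_{(i)})\Vert^2_{\mathcal{H}_i}$, expanding $(2\tau)^2=(\tau_X+\tau_Y)^2$ to obtain $4\zeta^2+(\tau_X-\tau_Y)^2=0$, and then invoking injectivity of the mean embedding; Case S2 is done by the direct computation $(\sqrt{2a}-\sqrt{2b})^2+4c=0$. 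Your $\sum_i\Delta_i$ equals exactly $2\zeta^2$ (each $\Delta_i=2\Vert\E\,\phi_i(X_{(i)})-\E\,\phi_i(Y_{(i)})\Vert^2_{\mathcal{H}_i}$), and your pair of tight inequalities (AM--GM together with $2\tau^2\ge\tau_X^2+\tau_Y^2$) is an equivalent rewriting of the single identity $(\tau_X-\tau_Y)^2+2\bigl(2\tau^2-\tau_X^2-\tau_Y^2\bigr)=0$. What your route buys is a uniform treatment of both cases — the deficit $2\tau^2-\tau_X^2-\tau_Y^2$ is a sum of groupwise generalized energy distances, which degenerates to $2\Vert\mu_X-\mu_Y\Vert^2$ under the squared-Euclidean semimetric, making transparent why Case S2 sees only means and traces — and it outsources the characterization step to the known fact that the energy distance with respect to a metric of strong negative type vanishes only for equal distributions, rather than re-deriving it via mean embeddings as the paper does. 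Two small remarks: that characterization is not Theorem 3.11 of Lyons (2013) (which concerns distance covariance and independence); the appropriate reference is the energy-distance/MMD equivalence for strong negative type in Sejdinovic et al. (2013), as quoted in Section 2.1 of the paper. And your closing caveat is exactly right: in Case S2 the trace equality must be extracted from $\tau_X=\tau_Y$ (the AM--GM equality branch), not from $\sum_i\Delta_i=0$, which only yields $\mu_X=\mu_Y$.
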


It is to be noted that assuming $\tau,\tau_X, \tau_Y < \infty$ does not contradict with the growth rate $\tau, \tau_X, \tau_Y=O(p^{1/2})$. Clearly under $H_0$, $2\tau -\tau_X - \tau_Y = 0$ irrespective of the choice of $K$. In view of Lemma \ref{lemma 1} and Theorem \ref{th homo decomp}, in Case S2, the leading term of $\cal{E}(X,Y)$ becomes zero if and only if $\mu_X = \mu_Y$ and $\textrm{tr}\, \Sigma_X = \textrm{tr}\, \Sigma_Y$. In other words, when dimension grows high, the Euclidean energy distance can only capture the equality of the means and the first spectral means, whereas our proposed metric captures the pairwise homogeneity of the low dimensional marginal distributions of $X_{(i)}$ and $Y_{(i)}$. Clearly $X_{(i)} \overset{d}{=} Y_{(i)}$ for $1\leq i\leq p$ implies $\mu_X = \mu_Y$ and $\textrm{tr}\, \Sigma_X = \textrm{tr}\, \Sigma_Y$. 
Thus the proposed metric can capture a wider range of inhomogeneity of distributions than the Euclidean energy distance.

Define
\begin{align*}
d_{kl}(i):=\rho_i(X_{k(i)}, Y_{l(i)}) \,-\, \E\,\left[\rho_i(X_{k(i)}, Y_{l(i)})|X_{k(i)}\right] \,-\, \E\,\left[\rho_i(X_{k(i)}, Y_{l(i)})|Y_{l(i)}\right] \,+\, \E\,\left[\rho_i(X_{k(i)}, Y_{l(i)})\right],
\end{align*}
as the double-centered distance between $X_{k(i)}$ and $Y_{l(i)}$
for $1\leq i \leq p$, $1\leq k\leq n$ and $1\leq l\leq m$. Similarly define  $d^X_{kl}(i)$ and $d^Y_{kl}(i)$ as the double-centered distances between $X_{k(i)}$ and $X_{l(i)}$ for $1\leq k \neq l\leq n$, and, $Y_{k(i)}$ and $Y_{l(i)}$ for $1\leq k \neq l\leq m$, respectively. Further define $H(X_k, Y_l) := \frac{1}{\tau}  \sum_{i=1}^p d_{kl}(i)$ for $1\leq k\leq n\,,\, 1\leq l \leq m$, $H(X_k, X_l) := \frac{1}{\tau_X}  \sum_{i=1}^p d^X_{kl}(i)$ for $1\leq k \neq l \leq n$ and $H(Y_k, Y_l)$ in a similar way.


We impose the following conditions to study the asymptotic behavior of the (unbiased) U-statistic type estimator of $\cal{E}(X,Y)$ in the HDLSS setup.

\begin{assumption}\label{ass6}
For fixed $n$ and $m$, as $p \to \infty$, $$\begin{pmatrix}
H(X_k, Y_l) \\
H(X_s, X_t) \\
H(Y_u, Y_v)
\end{pmatrix}_{k,l,\, s<t,\, u<v} \overset{d}{\longrightarrow} \;\;\begin{pmatrix}
a_{kl} \\  b_{st} \\ c_{uv}
\end{pmatrix}_{k,l,\, s<t,\, u<v} \, ,$$ where $\{a_{kl}, b_{st}, c_{uv}\}_{k,l,\, s<t,\, u<v}$ are jointly  Gaussian with zero mean. Further we assume that
\begin{align*}
var(a_{kl}) \; &:= \; \sigma^2 \; =\; \lim_{p \to \infty}\, \E \left[H^2(X_k, Y_l)\right],\\
var(b_{st}) \; &:= \; \sigma^2_X \; =\; \lim_{p \to \infty}\, \E \left[H^2(X_s, X_t)\right],\\
var(c_{uv}) \; &:= \; \sigma^2_Y \; =\; \lim_{p \to \infty}\, \E \left[H^2(Y_u, Y_v)\right].
\end{align*}
$\{a_{kl}, b_{st}, c_{uv}\}_{k,l,\, s<t,\, u<v}$ are all independent with each other.
\end{assumption}
Due to the double-centering property and the independence between the two samples, it is straightforward to verify that $\{H(X_k,Y_l), H(X_s,X_t), H(Y_u,Y_v)\}_{k,l,s<t,u<t}$
are uncorrelated with each other. So it is natural to expect that the limit $\{a_{kl}, b_{st}, c_{uv}\}_{k,l,\, s<t,\, u<v}$ are all independent with each other.
\begin{remark}
The above multi-dimensional central limit theorem is classic and can be derived under suitable moment and weak dependence assumptions on the components of $X$ and $Y$, such as mixing or near epoch dependent conditions. We refer the reader to Doukhan and Neumann\,(2008) for a review on central limit theorem results under weak dependence assumptions.
\end{remark}


We describe a new two-sample t-test for testing the null hypothesis $H_0 : X \overset{d}{=} Y.$ The t statistic can be constructed based on either the Euclidean energy distance or the new homogeneity metrics. We show that the t-tests based on different metrics can have strikingly different power behaviors under the HDLSS setup. The major difficulty here is to introduce a consistent and computationally efficient variance estimator.
Towards this end, we define a quantity called Cross Distance Covariance (cdCov) between $X$ and $Y$, which plays an important role in the construction of the t-test statistic:
\begin{align*}
cdCov^2_{n,m}(X,Y):=\frac{1}{(n-1)(m-1)}\sum^{n}_{k=1}\sum_{l=1}^{m}\widehat{K}(X_k,Y_l)^2,
\end{align*} where \begin{align*}
\widehat{K}(X_k, Y_l)\;=\;K(X_{k},Y_{l})-\frac{1}{n}\sum^{n}_{i=1}K(X_{i},Y_{l})-\frac{1}{m}\sum^{m}_{j=1}K(X_{k},Y_{j})+\frac{1}{nm}\sum_{i=1}^{n}\sum^{m}_{j=1}K(X_{i},Y_{j}).
\end{align*}
Let $v_s := s(s-3)/2$ for $s = m, n$. We introduce the following quantities
\begin{align}\label{quantities}
\begin{split}
m_0\; &:=\; \frac{\sigma^2\,(n-1)(m-1) + \sigma_X^2 \,v_n + \sigma_Y^2 \,v_m}{(n-1)(m-1)+v_n+v_m}\,,\\
\sigma_{nm} \;&:=\; \sqrt{\frac{\sigma^2}{nm} + \frac{\sigma^2_X}{2n(n-1)} + \frac{\sigma^2_Y}{2m(m-1)}}\, ,
\\ a_{nm} \;&:=\; \sqrt{\frac{1}{nm} + \frac{1}{2n(n-1)} + \frac{1}{2m(m-1)}} \,,\\
\Delta\;&:=\; \lim_{p \to \infty} 2\tau -\tau_X - \tau_Y,
\end{split}
\end{align} 
where $\sigma^2,\sigma_X^2$ and $\sigma_Y^2$ are defined in Assumption \ref{ass6}. Under Assumption \ref{ass0.5}, further define
\begin{align*}
&m_0^*:= \dis \lim_{m,n \to \infty} m_0 \; = \; \frac{2\alpha_0 \, \sigma^2 + \sigma_X^2 + \sigma_Y^2 \, \alpha_0^2}{2\alpha_0  + 1 + \alpha_0^2},\\
&a^*_0 := \dis \lim_{m,n \to \infty}\, \frac{a_{nm}}{\sigma_{nm}} \;=\; \Big(\frac{2\alpha_0 + \alpha_0^2 + 1}{2\alpha_0 \, \sigma^2 + \alpha_0^2 \,\sigma_X^2 + \sigma_Y^2 }\Big)^{1/2} \,.
\end{align*}
We are now ready to introduce the two-sample t-test $$T_{n,m}\;:=\;\frac{\cal{E}_{n,m}(X,Y)}{a_{nm}\,\sqrt{S_{n,m}}},$$
where $$S_{n,m}\;:=\;\frac{4(n-1)(m-1)\,cdCov^2_{n,m}(X,Y)\,+\,4v_n\, \widetilde{\cal{D}_n^2}(X,X)\,+\,4v_m\, \widetilde{\cal{D}_n^2}(Y,Y)}{(n-1)(m-1)+v_n+v_m}$$
is the pool variance estimator with  $\widetilde{\cal{D}^2_n}(X,X)$ and $\widetilde{\cal{D}^2_m}(Y,Y)$ being the unbiased estimators of the (squared) distance variances defined in equation (\ref{ustat dcov}). It is interesting to note that the variability of the sample generalized energy distance depends on the distance variances as well as the cdCov.
It is also worth mentioning that the computational complexity of the pool variance estimator and thus the t-statistic is linear in $p$.

To study the asymptotic behavior of the test, we consider the following class of distributions on $(X,Y)$:
\begin{align*}
\mathcal{P}=&\Big\{(P_X,P_Y):~X\sim P_X,~Y\sim P_Y,~E[\tau L(X,Y)-\tau_X L(X,X')|X]=o_p(1),
\\&E[\tau L(X,Y)-\tau_Y L(Y,Y')|Y]=o_p(1)\Big\}.
\end{align*}
If $P_X=P_Y$ (i.e., under the $H_0$), it is clear that $(P_X,P_Y)\in \mathcal{P}$ irrespective of the metrics in the definition of $L$. Suppose $\|X-\mu_X\|^2-\text{tr}(\Sigma_X)=O_p(\sqrt{p})$ and $\|Y-\mu_Y\|^2-\text{tr}(\Sigma_Y)=O_p(\sqrt{p})$, which hold under weak dependence assumptions on the components of $X$ and $Y$. Then in Case S2 (i.e., $K$ is the Euclidean distance), a set of sufficient conditions for $(P_X,P_Y)\in\mathcal{P}$ is given by
\begin{align}
&(\mu_X-\mu_Y)^\top (\Sigma_X+\Sigma_Y)(\mu_X-\mu_Y)=o(p),\quad \tau-\tau_X=o(\sqrt{p}), \quad \tau-\tau_Y=o(\sqrt{p}),
\end{align}
which suggests that the first two moments of $P_X$ and $P_Y$ are not too far away from each other. In this sense, $\mathcal{P}$ defines a class of local alternative distributions (with respect to the null $H_0: P_X=P_Y$).
We now state the main result of this subsection.

\begin{theorem}\label{th KED HDLSS}
In both Cases S1 and S2, under Assumptions \ref{ass2 : ED}, \ref{ass0.6} and \ref{ass6} as $p \to \infty$ with $n$ and $m$ remaining fixed, and further assuming that $(P_X,P_Y)\in\mathcal{P}$, we have
\begin{align*}
&\frac{\cal{E}_{n,m}(X,Y) - (2\tau -\tau_X - \tau_Y)}{a_{nm}\,\sqrt{S_{n,m}}} \; \overset{d}{\longrightarrow} \; \frac{\sigma_{nm} \, Z}{a_{nm}\,\sqrt{M}}\,,
\end{align*}
where $$ M \overset{d}{=} \frac{\sigma^2\,\chi^2_{(n-1)(m-1)} + \sigma_X^2 \chi^2_{v_n} + \sigma_Y^2 \chi^2_{v_m}}{(n-1)(m-1)+v_n+v_m}\,,$$ $\chi^2_{(n-1)(m-1)},\, \chi^2_{v_n},\, \chi^2_{v_m}$ are independent chi-squared random variables, and $Z \sim N(0,1)$. In other words, $$T_{n,m} \;\overset{d}{\longrightarrow} \; \frac{\sigma_{nm} \, N(\Delta/\sigma_{nm}, 1)}{a_{nm}\,\sqrt{M}}\,,$$ where $\sigma_{nm}$ and $a_{nm}$ are defined in equation (\ref{quantities}). In particular, under  $H_0$, we have
\begin{align*}
&T_{n,m}\overset{d}{\longrightarrow} \; t_{(n-1)(m-1)+v_n+v_m}.
\end{align*}
\end{theorem}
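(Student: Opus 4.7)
The plan is to decompose the centered numerator and the variance estimator $S_{n,m}$ separately using the expansion of $K$ in Proposition \ref{K taylor : ED}, identify their leading Gaussian and chi-square limits via Assumption \ref{ass6}, and then combine them through Slutsky and the continuous mapping theorem. The unifying trick is that, after subtracting $\tau,\tau_X,\tau_Y$, each cross or within-sample $K$-evaluation has leading term proportional to $\tau L,\tau_X L,\tau_Y L$, and each such term admits the Hoeffding decomposition
\[
\tau L(X_k,Y_l) = \phi(X_k) + \psi(Y_l) + H(X_k,Y_l),
\]
with analogous decompositions $\tau_X L(X_k,X_l) = \phi_X(X_k) + \phi_X(X_l) + H(X_k,X_l)$ and similarly for $Y$, where $\phi,\psi,\phi_X,\phi_Y$ are the first-order conditional projections.

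For the numerator, substituting $K = \tau + \tau L/2 + \tau R$ into $\cal{E}_{n,m}(X,Y)$ makes the constants cancel $2\tau - \tau_X - \tau_Y$, while the $\tau R$-remainders are $o_p(1)$ because Assumption \ref{ass0.6} forces $\tau a_p^2 + \tau_X b_p^2 + \tau_Y c_p^2 = o(1)$. Applying the Hoeffding decomposition, the univariate pieces collapse to $\frac{1}{n}\sum_k[\phi(X_k) - \phi_X(X_k)] + \frac{1}{m}\sum_l[\psi(Y_l) - \phi_Y(Y_l)]$, which is $o_p(1)$ under $(P_X,P_Y)\in\mathcal{P}$ because $n,m$ are fixed and each summand is exactly the conditional expectation that $\mathcal{P}$ requires to vanish in probability. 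By Assumption \ref{ass6} and continuous mapping, the remaining doubly-centered piece
\[
\frac{1}{nm}\sum_{k,l}H(X_k,Y_l) - \frac{1}{2n(n-1)}\sum_{k\neq l}H(X_k,X_l) - \frac{1}{2m(m-1)}\sum_{k\neq l}H(Y_k,Y_l)
\]
converges jointly to a zero-mean Gaussian, and a short variance calculation using the cross-independence of the $a,b,c$ arrays identifies its variance as exactly $\sigma_{nm}^2$.

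For the denominator, the same Taylor expansion yields $\widehat{K}(X_k,Y_l) = \tau\widehat{L}(X_k,Y_l)/2 + \tau\widehat{R}(X_k,Y_l)$ and the $\mathcal{U}$-centered analog $\tilde{K}(X_k,X_l) = \tau_X\tilde{L}(X_k,X_l)/2 + \tau_X\tilde{R}(X_k,X_l)$. Since both sample double-centering and $\mathcal{U}$-centering are linear and annihilate any function of a single argument, they wipe out the $\phi,\psi$ pieces, leaving $\tau\widehat{L} = \widehat{H}$ and $\tau_X\tilde{L} = \tilde{H}$; the quadratic $\tau\widehat{R},\tau_X\tilde{R}$ contributions are $o_p(1)$ under Assumption \ref{ass0.6}. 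Hence
\[
4(n-1)(m-1)\,cdCov^2_{n,m}(X,Y) = \sum_{k,l}\widehat{H}^2(X_k,Y_l) + o_p(1),\quad 4v_n\,\widetilde{\cal{D}_n^2}(X,X) = \sum_{k<l}\tilde{H}^2(X_k,X_l) + o_p(1),
\]
and similarly for $Y$. Assumption \ref{ass6} then promotes the $H$-arrays to independent Gaussian matrices, and a classical projection/ANOVA identity shows that the squared Frobenius norms of the centered / $\mathcal{U}$-centered Gaussian matrices equal $\sigma^2\chi^2_{(n-1)(m-1)}$, $\sigma_X^2\chi^2_{v_n}$, $\sigma_Y^2\chi^2_{v_m}$ in law, so $S_{n,m}\xrightarrow{d} M$.

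The closing step is joint convergence with independence of numerator and denominator. The numerator depends only on the grand/row/column averages of the $H$-arrays, whereas $S_{n,m}$ depends only on their centered residuals; under the Gaussian limit these are linear functionals on orthogonal subspaces and are therefore independent, and cross-independence in Assumption \ref{ass6} extends this across the three arrays, giving $Z\bigCI M$ in the limit. Slutsky and continuous mapping then deliver
\[
\frac{\cal{E}_{n,m}(X,Y) - (2\tau - \tau_X - \tau_Y)}{a_{nm}\sqrt{S_{n,m}}}\xrightarrow{d}\frac{\sigma_{nm}Z}{a_{nm}\sqrt{M}},
\]
and adding back the deterministic shift $(2\tau - \tau_X - \tau_Y)/(a_{nm}\sqrt{S_{n,m}})\to \Delta/(a_{nm}\sqrt{M})$ produces the noncentral statement for $T_{n,m}$. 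Under $H_0$ one has $\sigma = \sigma_X = \sigma_Y =: \sigma_0$ and $\Delta = 0$, whence $M$ collapses to $\sigma_0^2\chi^2_K/K$ with $K = (n-1)(m-1)+v_n+v_m$ (the sum of independent chi-squares of total df $K$) and $\sigma_{nm} = \sigma_0 a_{nm}$, yielding $T_{n,m}\xrightarrow{d} t_K$. The most delicate step I expect is the bookkeeping showing that $\mathcal{U}$-centering (which differs from ordinary double-centering) still kills the $\phi_X$-pieces in the Hoeffding decomposition and produces a projection of exactly rank $v_n$, so that the chi-square degrees of freedom line up.
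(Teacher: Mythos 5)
Your proposal is correct and follows essentially the same route as the paper's proof: Taylor expansion of $K$ via Proposition \ref{K taylor : ED}, splitting the centered numerator into a first-order projection piece (killed by the condition $(P_X,P_Y)\in\mathcal{P}$) and a doubly-centered piece with limiting variance $\sigma_{nm}^2$, representing the three components of $S_{n,m}$ as quadratic forms in the $H$-arrays with idempotent centering matrices of ranks $(n-1)(m-1)$, $v_n$, $v_m$, and concluding by joint Gaussian convergence under Assumption \ref{ass6} together with Slutsky and the continuous mapping theorem. The one step you flag as delicate — that $\mathcal{U}$-centering yields a projection of rank $n(n-3)/2$ — is exactly what the paper delegates to Theorem \ref{ACdcov:dist_conv}, and your orthogonality argument for the asymptotic independence of the numerator and $S_{n,m}$ matches what the paper uses implicitly.
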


Based on the asymptotic behavior of $T_{n,m}$ for growing dimensions, we propose a test for $H_0$ as follows: at level $\alpha \in (0,1)$, reject $H_0$ if $T_{n,m} > q_{\alpha,(n-1)(m-1) + v_n + v_m}$ and fail to reject $H_0$ otherwise, where $P(t_{(n-1)(m-1) + v_n + v_m}> q_{\alpha,(n-1)(m-1) + v_n + v_m})=\alpha.$ For a fixed real number $t$, define
\begin{align}\label{eqn exact power}
\begin{split}
\phi_{n,m}(t) \;:=& \; \dis \lim_{p \to \infty} P(T_{n,m} \leq t) \;=\; \E\,\left[ P \left(\frac{\sigma_{nm} \, N(\Delta/\sigma_{nm}, 1)}{a_{nm}\,\sqrt{M}} \leq t \,\,\Big|\,\, M  \right)\right]\\ =& \;\E\,\,\left[ \Phi\left(\frac{a_{nm}\, \sqrt{M}\, t-\Delta}{\sigma_{nm}} \right)\right]\,.
\end{split}
\end{align}
The asymptotic power curve for testing $H_0$ based on $T_{n,m}$ is given by $1-\phi_{m,n}(t)$. The following proposition gives a large sample approximation of the power curve. 

\begin{assumption}\label{ass0.5}
As $m, n \to \infty$, $m/n \to \alpha_0$\; where \;$\alpha_0 > 0$.
\end{assumption}

\begin{proposition}\label{power}
Suppose $\Delta=\Delta_0/\sqrt{nm}$ where $\Delta_0$ is a constant with respect to $n,m$. Then for any bounded real number $t$ as $n, m \to \infty$ and under Assumption \ref{ass0.5}, we have $$\dis \lim_{m, n \to \infty} \phi_{n,m}(t) \;=\; \Phi \left(a^*_0 \sqrt{m_0^*}\,\, t\, -\, \Delta^*_0 \right) \;, $$ where
$$\Delta^*_0 = \Delta_0 \lim_{m, n \to \infty}\frac{1}{\sigma_{nm} \sqrt{nm}}= \Delta_0 \, \Big(\frac{2\alpha_0}{2\sigma^2\, \alpha_0 + \sigma_X^2\, \alpha_0^2 + \sigma_Y^2}\Big)^{1/2}.$$
\end{proposition}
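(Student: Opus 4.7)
The plan is to establish the limit by showing that the integrand in the expectation defining $\phi_{n,m}(t)$ converges in probability to a deterministic constant, and then passing the limit through the expectation by bounded convergence. Three ingredients are needed: (i) $M \overset{P}{\to} m_0^*$; (ii) the deterministic ratio $a_{nm}/\sigma_{nm}$ converges to $a_0^*$, which holds by the very definition of $a_0^*$; and (iii) $\Delta/\sigma_{nm} \to \Delta_0^*$ under the calibration $\Delta = \Delta_0/\sqrt{nm}$.

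For (i), I would rewrite $M$ as a convex combination,
\[
M = \frac{(n-1)(m-1)}{N_{nm}}\,\sigma^2\,\frac{\chi^2_{(n-1)(m-1)}}{(n-1)(m-1)} + \frac{v_n}{N_{nm}}\,\sigma_X^2\,\frac{\chi^2_{v_n}}{v_n} + \frac{v_m}{N_{nm}}\,\sigma_Y^2\,\frac{\chi^2_{v_m}}{v_m},
\]
where $N_{nm} := (n-1)(m-1) + v_n + v_m$. Under Assumption \ref{ass0.5} we have $v_n \sim n^2/2$, $v_m \sim \alpha_0^2 n^2/2$, and $(n-1)(m-1) \sim \alpha_0 n^2$, so the three weights tend respectively to $2\alpha_0/(2\alpha_0+1+\alpha_0^2)$, $1/(2\alpha_0+1+\alpha_0^2)$, and $\alpha_0^2/(2\alpha_0+1+\alpha_0^2)$. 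Combined with $\chi^2_k/k \overset{P}{\to} 1$ as $k \to \infty$ and Slutsky's theorem, this yields $M \overset{P}{\to} m_0^*$, and the continuous mapping theorem gives $\sqrt{M} \overset{P}{\to} \sqrt{m_0^*}$.

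For (iii), a direct calculation gives
\[
nm\,\sigma_{nm}^2 \;=\; \sigma^2 + \frac{\sigma_X^2\, m}{2(n-1)} + \frac{\sigma_Y^2\, n}{2(m-1)} \;\longrightarrow\; \sigma^2 + \frac{\sigma_X^2\, \alpha_0}{2} + \frac{\sigma_Y^2}{2\alpha_0} \;=\; \frac{2\sigma^2\alpha_0 + \sigma_X^2\alpha_0^2 + \sigma_Y^2}{2\alpha_0},
\]
so that $\Delta/\sigma_{nm} = \Delta_0/(\sqrt{nm}\,\sigma_{nm})$ converges to $\Delta_0^*$ as defined in the statement. Another application of Slutsky's theorem then produces
\[
\frac{a_{nm}\sqrt{M}\,t - \Delta}{\sigma_{nm}} \;\overset{P}{\longrightarrow}\; a_0^*\sqrt{m_0^*}\,t - \Delta_0^*.
\]

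The final step is to push this limit through the expectation in \eqref{eqn exact power}. Since $\Phi$ is continuous and uniformly bounded by $1$, the bounded convergence theorem delivers
\[
\phi_{n,m}(t) = \E\!\left[\Phi\!\left(\frac{a_{nm}\sqrt{M}\,t - \Delta}{\sigma_{nm}}\right)\right] \;\longrightarrow\; \Phi\!\left(a_0^*\sqrt{m_0^*}\,t - \Delta_0^*\right),
\]
which is the desired conclusion. There is no genuine obstacle here; the argument is essentially a weak law of large numbers for a weighted sum of independent $\chi^2$ random variables together with bounded convergence. The only care required is book-keeping of the normalizing constants, particularly verifying that the three weight ratios in $M$ and the limit of $nm\,\sigma_{nm}^2$ reassemble into the exact expressions for $m_0^*$ and $\Delta_0^*$ declared in the statement.
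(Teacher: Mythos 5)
Your proof is correct and follows essentially the same route as the paper: establish $M \overset{P}{\to} m_0^*$, note the deterministic limits of $a_{nm}/\sigma_{nm}$ and $\Delta/\sigma_{nm}$, and pass to the limit inside the expectation using the boundedness and continuity of $\Phi$. The only cosmetic difference is that the paper proves $M \overset{P}{\to} m_0^*$ by computing $\E[(M-m_0)^2]$ directly and invoking Chebyshev's inequality, whereas you use the convex-combination decomposition together with $\chi^2_k/k \overset{P}{\to} 1$; both are the same weak-law argument.
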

Under the alternative, if $\Delta_0 \to \infty$ as $n, m \to \infty$, we have $$\dis \lim_{m, n \to \infty} \left\{ 1 - \phi_{n,m} ( q_{\alpha,(n-1)(m-1) + v_n + v_m}) \right\}\; = \; 1,$$ thereby justifying the consistency of the test.

\begin{remark}\label{rm:power}
We first derive the power function $1 - \phi_{n,m}(t)$ under the assumption that $n$ and $m$ are fixed. The main idea behind Proposition \ref{power} where we let $n,m\rightarrow\infty$ is to see whether we get a reasonably good approximation of power when $n,m$ are large. In a sense we are doing sequential asymptotics, first letting $p\rightarrow\infty$ and deriving the power function, and then deriving the leading term by letting $n,m\rightarrow\infty$. This is a quite common practice in Econometrics (see for example Phillips and Moon\,(1999)). The aim is to derive a leading term for the power when $n,m$ are fixed but large. Consider $\Delta = s/\sqrt{nm}$ (as in Proposition \ref{power}) and set $\sigma^2=\sigma_X^2=\sigma_Y^2=1$. In Figure \ref{fig_power} below, we plot the exact power (computed from (\ref{eqn exact power}) with $50,000$ Monte Carlo samples from the distribution of $M$) with $n=m=5$ and $10$, $t=q_{\alpha, (n-1)(m-1) + v_n + v_m}$ and $\alpha=0.05$, over different values of $s$. We overlay the large sample approximation of the power function (given in Proposition \ref{power}) and observe that the approximation works reasonably well even for small sample sizes.  Clearly larger $s$ results in better power and $s=0$ corresponds to trivial power. 
\begin{figure}[!h]
  \centering
  \subfloat[Power comparison when $m=n=5$]{\includegraphics[width=0.45\textwidth]{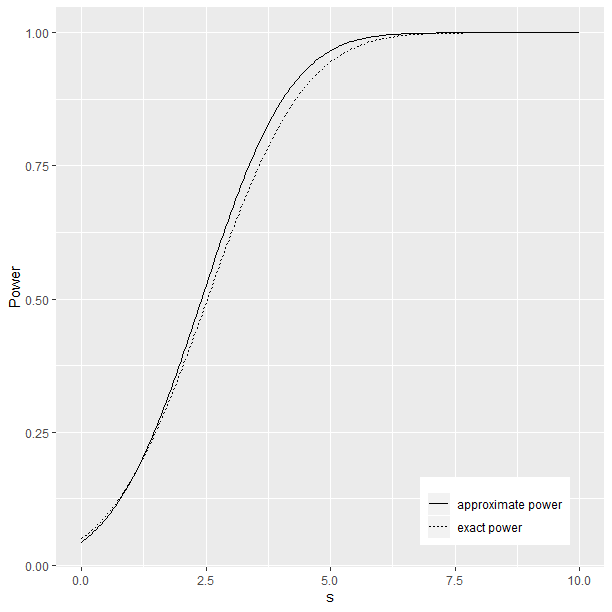}}
  \qquad
  \subfloat[Power comparison when $m=n=10$]{\includegraphics[width=0.45\textwidth]{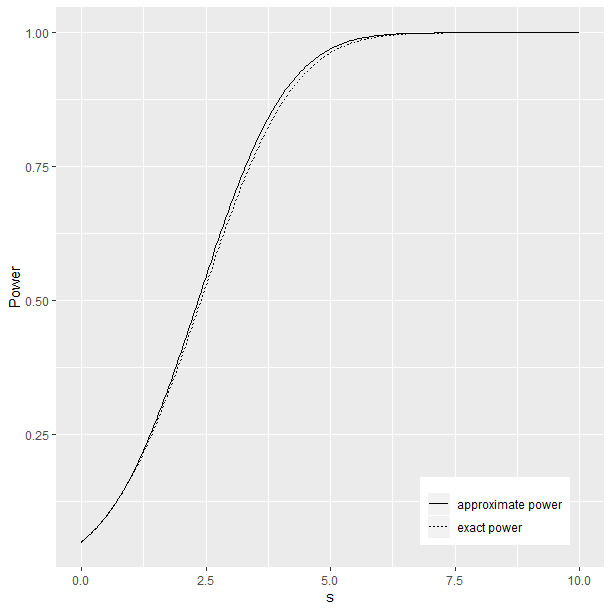}}
  \caption{Comparison of exact and approximate power.}\label{fig_power}
\end{figure}
\end{remark}

We now discuss the power behavior of $T_{n,m}$ based on the Euclidean energy distance. In Case S2, it can be seen that
\begin{align}\label{sigma_x^2}
\sigma_X^2=\lim_{p \to \infty}\frac{1}{\tau^2_X}\dis\sum_{i,i'=1}^{p}4\, \textrm{tr}\,\Sigma_X^2(i,i'),
\end{align}
where $\Sigma_X^2(i,i')$ is the covariance matrix between $X_{(i)}$ and $X_{(i')}$, and similar expressions for $\sigma_Y^2$. In case S2 (i.e., when $K$ is the Euclidean distance), if we further assume $\mu_X = \mu_Y$, it can be verified that
\begin{align}\label{sigma^2}
\sigma^2 \; = \; \dis\lim_{p \to \infty}\,\frac{1}{\tau^2}\dis\sum_{i,i'=1}^{p}4\,\textrm{tr} \big( \Sigma_X(i,i')\, \Sigma_Y(i,i')\big)\,.
\end{align}
Hence in Case S2, under the assumptions that $\mu_X = \mu_Y$, $\textrm{tr}\, \Sigma_X = \textrm{tr}\, \Sigma_Y$ and $\textrm{tr}\, \Sigma_X^2 = \textrm{tr}\, \Sigma_Y^2 = \textrm{tr}\, \Sigma_X \Sigma_Y$, it can be easily seen from equations (\ref{tau for Eucl}), (\ref{sigma_x^2}) and (\ref{sigma^2}) that
\begin{align}\label{Eucl cases}
\tau_X^2 = \tau_Y^2 = \tau^2,\quad \sigma_X^2 = \sigma_Y^2 = \sigma^2,
\end{align}
which implies that $\Delta^*_0=0$ in Proposition \ref{power}. Consider the following class of alternative distributions
$$H_A = \{(P_X,P_Y): P_X\neq P_Y,~\mu_X = \mu_Y,~\textrm{tr}\,\Sigma_X = \textrm{tr}\,\Sigma_Y,\, \textrm{tr}\, \Sigma_X^2 = \textrm{tr}\, \Sigma_Y^2 = \textrm{tr}\, \Sigma_X \Sigma_Y\}.$$
According to Theorem \ref{th KED HDLSS}, the t-test $T_{n,m}$ based on Euclidean energy distance has trivial power against $H_A.$ In contrast, the t-test based on the proposed metrics has non-trivial power against $H_A$ as long as $
\Delta^*_0>0.$\\

To summarize our contributions :
\begin{itemize}
\item We show that the Euclidean energy distance can only detect the equality of means and the traces of covariance matrices in the high-dimensional setup. To the best of our knowledge, such a limitation of the Euclidean energy distance has not been pointed out in the literature before.
\item We propose a new class of homogeneity metrics which completely characterizes homogeneity of two distributions in the low-dimensional setup and has nontrivial power against a broader range of alternatives, or in other words, can detect a wider range of inhomogeneity of two distributions in the high-dimensional setup. 
\item Grouping allows us to detect homogeneity beyond univariate marginal distributions, as the difference between two univariate marginal distributions is automatically captured by the difference between the marginal distributions of the groups that contain these two univariate components. 
\item Consequently we construct a high-dimensional two-sample t-test whose computational cost is linear in $p$. Owing to the pivotal nature of the limiting distribution of the test statistic, no resampling-based inference is needed.
\end{itemize}

\begin{remark}\label{discussion on power}
Although the test based on our proposed statistic is asymptotically powerful against the alternative $H_A$ unlike the Euclidean energy distance, it can be verified that it has trivial power against the alternative $H_{A'} = \{(X,Y) : X_{(i)} \overset{d}{=} Y_{(i)}, 1\leq i \leq p\}$. Thus although it can detect differences between two high-dimensional distributions beyond the first two moments (as a significant improvement to the Euclidean energy distance), it cannot capture differences beyond the equality of the low-dimensional marginal distributions. We conjecture that there might be some intrinsic difficulties for
distance and kernel-based metrics to completely characterize the discrepancy between two high-dimensional distributions.
\end{remark}

\section{Dependence metrics}\label{sec:ACdcov}
In this section, we focus on dependence testing of two random vectors $X \in \bb{R}^{\tilde{p}}$ and $Y \in \bb{R}^{\tilde{q}}$. Suppose $X$ and $Y$ can be partitioned into $p$ and $q$ groups, viz. $X = \left(X_{(1)}, X_{(2)}, \dots, X_{(p)} \right)$ and $Y = \left(Y_{(1)}, Y_{(2)}, \dots, Y_{(q)} \right)$, where the components $X_{(i)}$ and $Y_{(j)}$ are $d_i$ and $g_j$ dimensional, respectively, for $1\leq i\leq p, 1\leq j\leq q$. Here $p, q$ might be fixed or growing.
We assume that $X_{(i)}$ and $Y_{(j)}$'s are finite (low) dimensional vectors, i.e., $\{d_i\}_{i=1}^p$ and $\{g_j\}_{j=1}^q$ are bounded sequences. Clearly, $\tilde{p} = \sum_{i=1}^p d_i = O(p)$ and $\tilde{q} = \sum_{j=1}^q g_j = O(q)$.
We define a class of dependence metrics $\cal{D}$ between $X$ and $Y$ as the positive square root of
\begin{equation}\label{ACdcov def}
\cal{D}^2(X,Y) \;:=\; \E \,K_{\bf{d}}(X,X')\, K_{\bf{g}}(Y,Y') \, + \, \E \,K_{\bf{d}}(X,X') \, \E \,K_{\bf{g}}(Y,Y') \, - \, 2\, \E \,K_{\bf{d}}(X,X') \, K_{\bf{g}}(Y,Y'') \,,
\end{equation}
where ${\bf{d}}=(d_1,\dots,d_p)$ and ${\bf{g}}=(g_1,\dots,g_q)$. We drop the subscripts ${\bf{d}}, {\bf{g}}$ of $K$ for notational convenience.

To ensure the existence of $\cal{D}$, we make the following assumption.
\begin{assumption}\label{ass1}
Assume that $\sup_{1\leq i\leq p}\E \rho_i^{1/2}(X_{(i)},0_{d_i})< \infty$ and $\sup_{1\leq i\leq q}\E \rho_i^{1/2}(Y_{(i)},0_{g_i})< \infty$.
\end{assumption}

In Section \ref{ld D} of the supplement we demonstrate that in the low-dimensional setting, $\cal{D}(X,Y)$ completely characterizes independence between $X$ and $Y$. For an observed random sample $(X_k,Y_k)^{n}_{k=1}$ from the joint distribution of $X$ and $Y$, define $D^X:=(d^X_{kl}) \in \bb{R}^{n\times n}$ with $d^X_{kl} := K(X_k,X_l)$ and $k,l \in \{1, \dots, n\}$. Define $d^Y_{kl}$ and $D^Y$ in a similar way. With some abuse of notation, we consider the U-statistic type estimator $\widetilde{\cal{D}^2_n}(X,Y)$ of $\cal{D}^2$ as defined in (\ref{ustat dcov}) with $d_{\mathcal{X}}$ and $d_{\mathcal{Y}}$ being $K_\mathbf{d}$ and $K_\mathbf{g}$ respectively. In Section \ref{ld D} of the supplement, we illustrate that $\widetilde{\cal{D}^2_n}(X,Y)$ essentially inherits all the nice properties of the U-statistic type estimator of generalized dCov and HSIC.

In the subsequent discussion we study the asymptotic behavior of $\cal{D}$ in the high-dimensional framework, i.e., when $p$ and $q$ grow to $\infty$ with fixed $n$ (discussed in Subsection \ref{sec:ACdcov-HDLSS}) and when $n$ grows to $\infty$ as well (discussed in Subsection \ref{sec:D HDMSS} in the supplement).

\subsection{High dimension low sample size (HDLSS)} \label{sec:ACdcov-HDLSS}
In this subsection, our goal is to explore the behavior of $\cal{D}^2(X,Y)$ and its unbiased U-statistic type estimator in the HDLSS setting where $p$ and $q$ grow to $\infty$ while the sample size $n$ is held fixed. Denote
$\tau^2_{XY} = \tau^2_X \tau^2_Y = \E\, K^2(X,X')\, \E \,K^2(Y,Y').$
We impose the following conditions. 

\begin{assumption}\label{ass D pop taylor}
$\E\, [L^2(X, X')] = O(a_p'^2)$ and $\E\, [L^2(Y, Y')] = O(b_q'^2)$, where $a_p'$ and $b_q'$ are positive real sequences satisfying $a_p' = o(1)$, $b_q' = o(1)$, $\tau_{XY}\, a_p'^2 b_q' = o(1)$ and $\tau_{XY}\, a_p' b_q'^2 = o(1)$. Further assume that $\E\,[R^2(X,X')] = O(a_p'^4)$ and $\E\,[R^2(Y,Y')] = O(b_q'^4)$.
\end{assumption}

\begin{remark}
We refer the reader to Remark \ref{assumption justification} in Section \ref{sec:new-homo} for illustrations about some sufficient conditions under which we have $\var \left(L(X, X')\right)\,=\,\E\,L^2(X, X')\,=\,O(\frac{1}{p})$, and similarly for $L(Y, Y')$. Remark \ref{rem_ui} in the supplement illustrates certain sufficient conditions under which $\E\,[R^2(X,X')] = O(\frac{1}{p^2})$, and similarly for $R(Y,Y')$.
\end{remark}

\begin{theorem}\label{D pop taylor}
Under Assumptions \ref{ass2 : ED} and \ref{ass D pop taylor}, we have
\begin{equation}\label{eq D pop taylor}
\cal{D}^2 (X,Y) = \frac{1}{4\tau_{XY}} \dis \sum_{i=1}^p \sum_{j=1}^q D^2_{\rho_i, \rho_j}(X_{(i)},Y_{(j)}) \, + \, \mathcal{R} \; ,
\end{equation}
where $\mathcal{R}$ is the remainder term such that $\mathcal{R} = O(\tau_{XY}\, a_p'^2 b_q' + \tau_{XY}\, a_p' b_q'^2) = o(1)$.
\end{theorem}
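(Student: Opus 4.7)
The plan is to expand each of the three expectations that build up $\cal{D}^2(X,Y)=T_1+T_2-2T_3$ with
$$T_1=\E K(X,X')K(Y,Y'),\qquad T_2=\E K(X,X')\,\E K(Y,Y'),\qquad T_3=\E K(X,X')K(Y,Y''),$$
using the Taylor-type decomposition from Proposition \ref{K taylor : ED}: $K(X,X')=\tau_X[1+\tfrac12 L(X,X')+R(X,X')]$, $K(Y,Y')=\tau_Y[1+\tfrac12 L(Y,Y')+R(Y,Y')]$, and $K(Y,Y'')=\tau_Y[1+\tfrac12 L(Y,Y'')+R(Y,Y'')]$. Multiplying out and using $\E L(X,X')=\E L(Y,Y')=\E L(Y,Y'')=0$ together with the identity in distribution $(Y,Y')\stackrel{d}{=}(Y,Y'')$ (which gives $\E R(Y,Y')=\E R(Y,Y'')$), each $T_i$ splits into a constant piece $\tau_{XY}$, single-$R$ pieces $\tau_{XY}\E R(X,X')$ and $\tau_{XY}\E R(Y,Y')$, one quadratic $LL$ piece, and cross pieces involving $L\cdot R$ or $R\cdot R$.

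The key algebraic step is that in the combination $T_1+T_2-2T_3$ the constants and both single-$R$ contributions cancel with coefficient $1+1-2=0$, leaving
$$\cal{D}^2(X,Y)=\tfrac{\tau_{XY}}{4}\bigl[\E L(X,X')L(Y,Y')-2\,\E L(X,X')L(Y,Y'')\bigr]+\cal{R},$$
where $\cal{R}$ collects only the surviving $L\cdot R$ and $R\cdot R$ cross terms. Next I would identify the leading term with the target sum. Writing $L(X,X')=\tau_X^{-2}\sum_i(\rho_i(X_{(i)},X'_{(i)})-\mu_{i,X})$ and similarly for $L(Y,Y')$ and $L(Y,Y'')$, exchanging expectation and double sum yields
$$\E L(X,X')L(Y,Y')-2\,\E L(X,X')L(Y,Y'')=\frac{1}{\tau_{XY}^{2}}\sum_{i=1}^{p}\sum_{j=1}^{q}\Bigl[\E\rho_i\rho_j^{(\prime)}+\mu_{i,X}\mu_{j,Y}-2\E\rho_i\rho_j^{(\prime\prime)}\Bigr],$$
where $\rho_j^{(\prime)}$ uses $(Y,Y')$ and $\rho_j^{(\prime\prime)}$ uses $(Y,Y'')$; the bracketed quantity is exactly Lyons' expression (\ref{dCov Lyons}) for $D^2_{\rho_i,\rho_j}(X_{(i)},Y_{(j)})$. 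Multiplying by $\tau_{XY}/4$ produces the claimed leading term $\frac{1}{4\tau_{XY}}\sum_{i,j}D^2_{\rho_i,\rho_j}(X_{(i)},Y_{(j)})$.

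The remaining task, and the main technical obstacle, is to verify the rate for $\cal{R}$. I would bound each surviving $L\cdot R$ term by Cauchy--Schwarz using Assumption \ref{ass D pop taylor}: $|\E L(X,X')R(Y,Y')|\le\sqrt{\E L^2(X,X')}\sqrt{\E R^2(Y,Y')}=O(a_p'b_q'^{2})$, and symmetrically $|\E L(Y,Y')R(X,X')|=O(a_p'^{2}b_q')$. The $R\cdot R$ contributions (which include $\E R(X,X')R(Y,Y')$, $\E R(X,X')\E R(Y,Y')$ and $\E R(X,X')R(Y,Y'')$) are all $O(a_p'^{2}b_q'^{2})$, and $\tau_{XY}a_p'^{2}b_q'^{2}=(\tau_{XY}a_p'^{2}b_q')\cdot b_q'=o(1)\cdot o(1)$ by the stated rates; these are therefore absorbed into the leading $O(\tau_{XY}a_p'^{2}b_q'+\tau_{XY}a_p'b_q'^{2})$ bound. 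Summing gives $\cal{R}=O(\tau_{XY}a_p'^{2}b_q'+\tau_{XY}a_p'b_q'^{2})=o(1)$ as claimed.

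The main obstacle is bookkeeping: one must track which cross terms appear with which sign across $T_1,T_2,T_3$ to confirm that the order-$\tau_{XY}\E R=O(\tau_{XY}a_p'^{2})$ contributions truly cancel (they would otherwise dominate the leading term), and to group the residual $L\cdot R$ pieces into the two admissible orders $\tau_{XY}a_p'^{2}b_q'$ and $\tau_{XY}a_p'b_q'^{2}$ controlled by the hypothesis.
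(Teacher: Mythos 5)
Your proposal is correct and follows essentially the same route as the paper's proof: expand each of the three expectations via Proposition \ref{K taylor : ED}, observe that the constant and single-$R$ contributions cancel in the combination $T_1+T_2-2T_3$ (the paper leaves this cancellation implicit, while you correctly flag that it relies on $\E R(Y,Y')=\E R(Y,Y'')$ from $(Y,Y')\overset{d}{=}(Y,Y'')$), identify the surviving quadratic $L\cdot L$ term with $\frac{1}{4\tau_{XY}}\sum_{i,j}D^2_{\rho_i,\rho_j}(X_{(i)},Y_{(j)})$ using $L=(K^2-\tau^2)/\tau^2$ and the additive group structure of $K^2$, and bound the residual $L\cdot R$ and $R\cdot R$ terms by Cauchy--Schwarz under Assumption \ref{ass D pop taylor}. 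No gaps; the bookkeeping and the absorption of the $O(\tau_{XY}a_p'^2b_q'^2)$ terms into the stated rate are handled exactly as in the paper.
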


Theorem \ref{D pop taylor} shows that when dimensions grow high, the population $\cal{D}^2 (X,Y)$  behaves as an aggregation of group-wise generalized dCov and thus essentially captures group-wise non-linear dependencies between $X$ and $Y$. 

\begin{remark}\label{rem to th5.3}
Consider a special case where $d_i =1$ and $g_j=1$, and $\rho_i$ and $\rho_j$ are Euclidean distances for all $1\leq i\leq p$ and $1\leq j\leq q$. Then Theorem \ref{D pop taylor} essentially boils down to
\begin{equation}\label{eq D pop taylor sp case 1}
\cal{D}^2 (X,Y) = \frac{1}{4\tau_{XY}} \dis \sum_{i=1}^p \sum_{j=1}^q dCov^2(X_{i},Y_{j}) \, + \, \mathcal{R} \; ,
\end{equation}
where $\mathcal{R} = o(1)$. This shows that in a special case (when we have unit group sizes), $\cal{D}^2 (X,Y)$ essentially  behaves as an aggregation of cross-component dCov between $X$ and $Y$. If $K_{\textbf{d}}$ and $K_{\textbf{g}}$ are Euclidean distances, or in other words if each $\rho_i$ and $\rho_j$ are squared Euclidean distances, then using equation (\ref{alt dcov}) it is straightforward to verify that $D^2_{\rho_i, \rho_j}(X_{i},Y_{j}) = 4\, cov^2(X_i, Y_j)$ for all $1\leq i\leq p$ and $1\leq j\leq q$. Consequently we have 
\begin{equation}\label{eq D pop taylor sp case 2}
\cal{D}^2 (X,Y) = dCov^2(X,Y) = \frac{1}{\tau_{XY}} \dis \sum_{i=1}^p \sum_{j=1}^q cov^2(X_{i},Y_{j}) \, + \, \mathcal{R}_1 \; ,
\end{equation}
where $\mathcal{R}_1 = o(1)$, which essentially presents a population version of Theorem 2.1.1 in Zhu et al.\,(2019) as a special case of Theorem \ref{D pop taylor}.

\end{remark}

\begin{remark}\label{pop approx}
To illustrate that the leading term in equation (\ref{eq D pop taylor}) indeed gives a close approximation of the population $\cal{D}^2 (X,Y)$, we consider the special case when $K_{\textbf{d}}$ and $K_{\textbf{g}}$ are Euclidean distances and $p=q$. Suppose $X \sim N_p(0,I_p)$ and $Y=X+N$ where $N \sim N_p(0,I_p)$ with $N \bigCI X$. Clearly we have $\tau_X^2=2p$, $\tau_Y^2=4p$,  $D^2_{\rho_i, \rho_j}(X_{i},Y_{j}) = 4\, cov^2(X_i, Y_j) = 4$\, for all\, $1\leq i=j\leq p$ and $D^2_{\rho_i, \rho_j}(X_{i},Y_{j}) = 0$ for all \,$1\leq i\neq j\leq p$. From Remark \ref{rem to th5.3}, it is clear that in this case we essentially have $\cal{D}^2 (X,Y) = dCov^2(X,Y)$. We simulate a large sample of size $n=5000$ from the distribution of $(X, Y)$ for $p=20, 40, 60, 80$ and $100$. The large sample size is to ensure that the U-statistic type estimator of $\cal{D}^2 (X,Y)$ (given in (\ref{ustat dcov})) gives a very close approximation of the population $\cal{D}^2 (X,Y)$. We list the ratio between $\cal{D}^2 (X,Y)$ and the leading term in (\ref{eq D pop taylor}) for the different values of $p$, which turn out to be very close to $1$, demonstrating that the leading term in (\ref{eq D pop taylor}) indeed approximates $\cal{D}^2 (X,Y)$ reasonably well.

\begin{table}[!h]\footnotesize
\centering
\caption{Ratio of $\cal{D}^2 (X,Y)$ and the leading term in  (\ref{eq D pop taylor}) for different values of $p$.}
\label{table:pop approx D}
\begin{tabular}{ccccc}
\toprule
$p=20$ & $p=40$ & $p=60$ & $p=80$ & $p=100$  \\
\hline
0.980 & 0.993 & 0.994 & 0.989 & 0.997 \\
\toprule
\end{tabular}
\end{table}

\end{remark}

The following theorem explores the behavior of the population $\cal{D}^2 (X,Y)$ when $p$ is fixed and $q$ grows to infinity, while the sample size is held fixed. As far as we know, this asymptotic regime has not been previously considered in the literature.
In this case, the Euclidean distance covariance behaves as an aggregation of martingale difference divergences proposed in Shao and Zhang (2014) which measures conditional mean dependence. Figure \ref{fig1} below summarizes the curse of dimensionality for the Euclidean distance covariance under different asymptotic regimes.

\begin{theorem}\label{th MDD}
Under Assumption \ref{ass2 : ED} and the assumption that \,$\E\,[R^2(Y,Y')] = O(b_q'^4)$ with $\tau_Y\,b_q'^2 = o(1)$, as $q \to \infty$ with $p$ and $n$ remaining fixed, we have
\begin{align*}
\cal{D}^2 (X,Y) \;&=\; \frac{1}{2\tau_{Y}} \dis \sum_{j=1}^q D^2_{K_{\textbf{d}}\,, \rho_j}(X,Y_{(j)}) \, + \, \mathcal{R},
\end{align*}
where $\mathcal{R}$ is the remainder term such that $\mathcal{R} = O(\tau_{Y}\, b_q'^2) = o(1)$.
\end{theorem}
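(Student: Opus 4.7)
\bigskip
\noindent\textbf{Proof plan for Theorem \ref{th MDD}.} The plan is to mimic the argument behind Theorem \ref{D pop taylor}, but to apply the Taylor expansion from Proposition \ref{K taylor : ED} only on the $Y$-side, since $p$ (and hence the dimension of $X$) is fixed. Concretely, write
\begin{equation*}
K_{\mathbf{g}}(Y,Y') \;=\; \tau_Y + \frac{K^2(Y,Y')-\tau_Y^2}{2\tau_Y} + \tau_Y R(Y,Y'),
\end{equation*}
and analogously for $K_{\mathbf{g}}(Y,Y'')$, and substitute into the definition (\ref{ACdcov def}) of $\cal{D}^2(X,Y)$. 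I would treat $K_{\mathbf{d}}(X,X')$ as a bounded, non-expanded factor: under Assumption \ref{ass2 : ED} with $p$ fixed we have $K_{\mathbf{d}}^2(X,X') \leq pA$ almost surely, so all moments of $K_{\mathbf{d}}(X,X')$ are $O(1)$.

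The next step is to collect terms by order. The constant pieces $\tau_Y\, E[K_{\mathbf{d}}(X,X')]$ appear with coefficients $1+1-2=0$ and cancel. The first-order pieces, writing $\tau_Y L(Y,Y') = \tau_Y^{-1}\sum_{j=1}^{q}[\rho_j(Y_{(j)},Y_{(j)}') - E\rho_j(Y_{(j)},Y_{(j)}')]$, combine into
\begin{equation*}
\frac{1}{2\tau_Y}\sum_{j=1}^{q}\Big\{E[K_{\mathbf{d}}(X,X')\rho_j(Y_{(j)},Y_{(j)}')] + E[K_{\mathbf{d}}(X,X')]\,E[\rho_j(Y_{(j)},Y_{(j)}')] - 2E[K_{\mathbf{d}}(X,X')\rho_j(Y_{(j)},Y_{(j)}'')]\Big\},
\end{equation*}
where the constant centering terms from $E[K_{\mathbf{g}}(Y,Y')]$ and $E[K_{\mathbf{g}}(Y,Y'')]$ coincide (since $Y'\stackrel{d}{=}Y''$) and combine with the correct sign pattern $1+1-2$ to produce the centered second summand. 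Recognising the right-hand side as the definition (\ref{dCov Lyons}) of the generalised distance covariance with metrics $K_{\mathbf{d}}$ and $\rho_j$ immediately yields the leading term $\frac{1}{2\tau_Y}\sum_{j=1}^{q} D^2_{K_{\mathbf{d}},\rho_j}(X,Y_{(j)})$.

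The last step is to control
\begin{equation*}
\mathcal{R} \;=\; \tau_Y\,E[K_{\mathbf{d}}(X,X')R(Y,Y')] + \tau_Y\,E[K_{\mathbf{d}}(X,X')]\,E[R(Y,Y')] - 2\tau_Y\,E[K_{\mathbf{d}}(X,X')R(Y,Y'')].
\end{equation*}
Applying Cauchy--Schwarz to each term, together with the boundedness of $E[K_{\mathbf{d}}^2(X,X')]$ noted above and the hypothesis $E[R^2(Y,Y')] = O(b_q'^4)$, gives
\begin{equation*}
|\mathcal{R}| \;\leq\; C\,\tau_Y\,\sqrt{E[R^2(Y,Y')]} \;=\; O(\tau_Y\, b_q'^2) \;=\; o(1),
\end{equation*}
which is the claimed remainder bound. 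The main obstacle is really just the bookkeeping: making sure the $1+1-2=0$ cancellations in the zeroth-order terms are clean, that the first-order terms reassemble exactly into a generalized dCov (in particular that the expectation $E[\rho_j(Y_{(j)},Y_{(j)}'')]$ matches $E[\rho_j(Y_{(j)},Y_{(j)}')]$ so the centering is consistent), and that the Cauchy--Schwarz bound for the mixed term involving $R(Y,Y'')$ does not pick up any factor growing with $q$ beyond what is absorbed by $\tau_Y b_q'^2 = o(1)$.
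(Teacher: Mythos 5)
Your proposal follows essentially the same route as the paper's proof: expand only the $Y$-side via Proposition \ref{K taylor : ED}, observe the zeroth-order cancellation, identify the first-order term as $\frac{1}{2\tau_Y}\sum_{j=1}^{q} D^2_{K_{\mathbf{d}},\rho_j}(X,Y_{(j)})$, and bound the remainder by Cauchy--Schwarz together with $\E[R^2(Y,Y')]=O(b_q'^4)$. One small correction: Assumption \ref{ass2 : ED} bounds the expectations $\E\,\rho_i(X_{(i)},X_{(i)}')$, not the random variables themselves, so the claim that $K_{\mathbf{d}}^2(X,X')\leq pA$ almost surely is not justified; what your Cauchy--Schwarz step actually needs, and what does follow directly, is $\E\,K_{\mathbf{d}}^2(X,X')=\sum_{i=1}^{p}\E\,\rho_i(X_{(i)},X_{(i)}')\leq pA=O(1)$ for fixed $p$.
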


\begin{remark}\label{rem for th MDD}
In particular, when both $K_{\textbf{d}}$ and $K_{\textbf{g}}$ are Euclidean distances, we have 
\begin{align*}
\cal{D}^2 (X,Y) \;&=\; dCov^2(X,Y)\;=\; \frac{1}{\tau_{Y}} \dis \sum_{j=1}^{\tilde{q}} MDD^2(Y_{j}|X) \, + \, \mathcal{R},
\end{align*}
where $MDD^2(Y_{j}|X)=-\E[(Y_j-\E Y_j)(Y_j'-\E Y_j)\|X-X'\|]$ is the martingale difference divergence which completely characterizes the conditional mean dependence of $Y_j$ given $X$ in the sense that $E[Y_j|X]=E[Y_j]$ almost surely if and only if $MDD^2(Y_{j}|X)=0.$
\end{remark}

\begin{figure}[!h]
\caption{Curse of dimensionality for the Euclidean distance covariance under different asymptotic regimes}\label{fig1}
\centering
\includegraphics[scale=0.22]{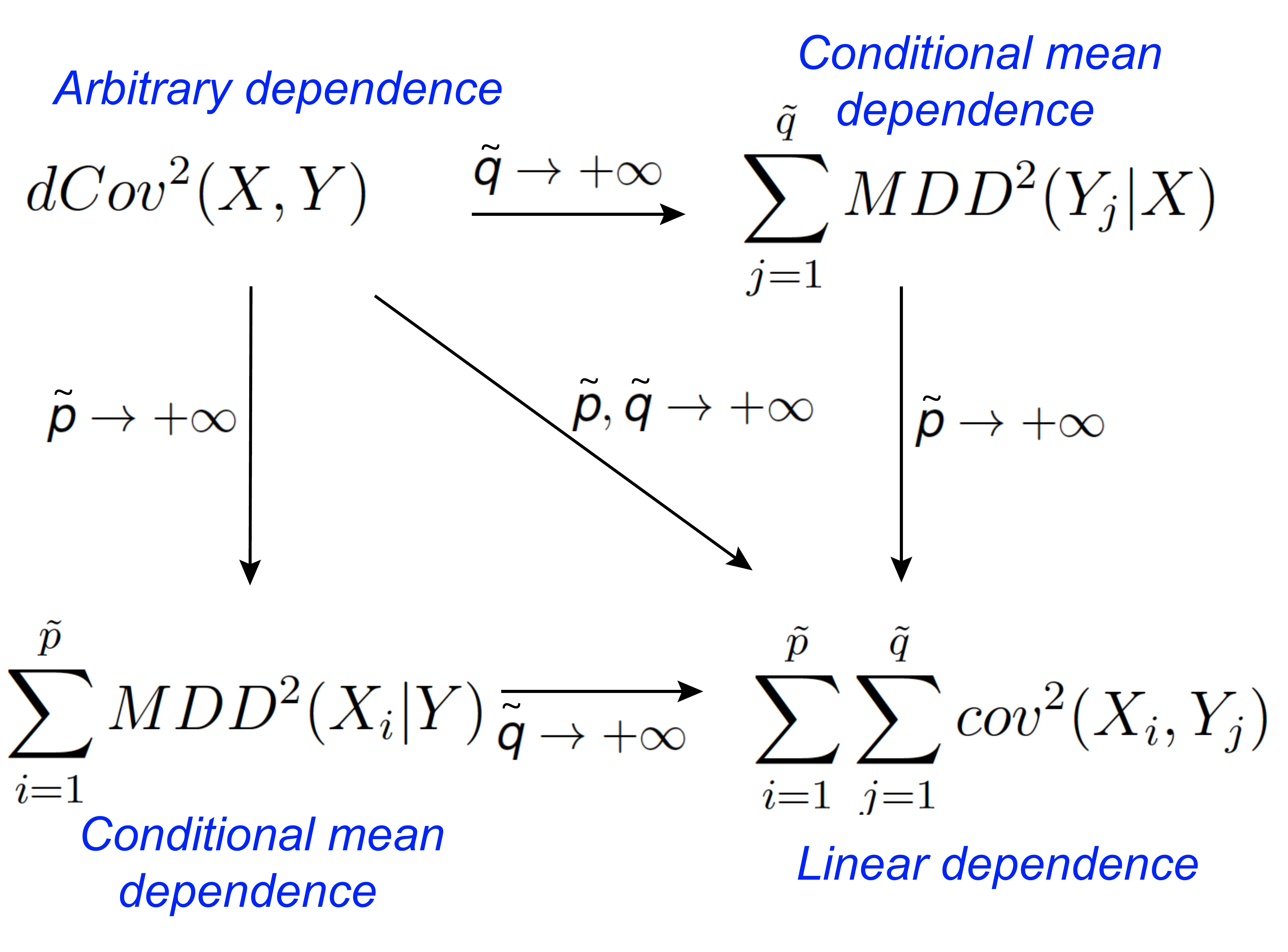}
\end{figure}


Next we study the asymptotic behavior of the sample version $\widetilde{\cal{D}^2_n} (X,Y)$.

\begin{assumption}\label{ass2.1}
Assume that $L(X,X') = O_p(a_p)$ and $L(Y,Y') = O_p(b_q)$, where $a_p$ and $b_q$ are positive real sequences satisfying $a_p = o(1)$, $b_q = o(1)$, $\tau_{XY}\, a_p^2 b_q = o(1)$ and $\tau_{XY}\, a_p b_q^2 = o(1)$.
\end{assumption}

\begin{remark}
We refer the reader to Remark \ref{assumption justification} in Section \ref{sec:new-homo} for illustrations about Assumption \ref{ass2.1}.
\end{remark}

\begin{theorem}\label{ACdCov taylor thm}
Under Assumptions \ref{ass2 : ED} and \ref{ass2.1}, it can be shown that
\begin{equation}\label{ACdCov taylor}
\widetilde{\cal{D}^2_n} (X,Y) = \frac{1}{4\tau_{XY}} \dis \sum_{i=1}^p \sum_{j=1}^q \widetilde{D^2_n}_{\,;\,\rho_i, \rho_j}(X_{(i)},Y_{(j)}) \, + \, \mathcal{R}_n \; ,
\end{equation}
where $X_{(i)}, Y_{(j)}$ are the $i^{th}$ and $j^{th}$ groups of $X$ and $Y$, respectively, $1 \leq i \leq p$, $1 \leq j \leq q$ , and $\mathcal{R}_n$ is the remainder term. Moreover $\mathcal{R}_n = O_p(\tau_{XY}\, a_p^2 b_q + \tau_{XY}\, a_p b_q^2) = o_p(1)$, i.e., $\mathcal{R}_n$ is of smaller order compared to the leading term and hence is asymptotically negligible.
\end{theorem}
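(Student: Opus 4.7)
The plan is to mirror, at the sample level, the Taylor-expansion argument that was used to establish the population version in Theorem \ref{D pop taylor}, and then exploit the linearity of the $\mathcal{U}$-centering operator to peel off the constant piece of each pairwise distance. First I would apply Proposition \ref{K taylor : ED} pair-by-pair to each of the finitely many sample pairs: write
\begin{align*}
K(X_k,X_l)=\tau_X\Bigl(1+\tfrac{1}{2}L^X_{kl}+R^X_{kl}\Bigr),\qquad K(Y_k,Y_l)=\tau_Y\Bigl(1+\tfrac{1}{2}L^Y_{kl}+R^Y_{kl}\Bigr),
\end{align*}
where $L^X_{kl}=(K^2(X_k,X_l)-\tau_X^2)/\tau_X^2$ and $L^Y_{kl}$ is defined analogously. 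By Assumption \ref{ass2.1} and the $R=O_p(L^2)$ conclusion of Proposition \ref{K taylor : ED}, one has pointwise $L^X_{kl}=O_p(a_p)$, $L^Y_{kl}=O_p(b_q)$, $R^X_{kl}=O_p(a_p^2)$, $R^Y_{kl}=O_p(b_q^2)$; since $n$ is fixed, these orders hold uniformly over the $O(n^2)$ pairs.

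Because $\mathcal{U}$-centering is linear and annihilates constants (as well as any row or column constant), the additive $1$ disappears, leaving
\begin{align*}
\tilde{a}_{kl}=\tfrac{\tau_X}{2}\widetilde{L^X}_{kl}+\tau_X\,\widetilde{R^X}_{kl},\qquad \tilde{b}_{kl}=\tfrac{\tau_Y}{2}\widetilde{L^Y}_{kl}+\tau_Y\,\widetilde{R^Y}_{kl}.
\end{align*}
Writing $K^2(X_k,X_l)=\sum_{i=1}^p\rho_i(X_{k(i)},X_{l(i)})$ and using linearity once more,
$\widetilde{L^X}_{kl}=\tau_X^{-2}\sum_{i=1}^{p}\widetilde{\rho^X_i}_{\,kl}$ and $\widetilde{L^Y}_{kl}=\tau_Y^{-2}\sum_{j=1}^{q}\widetilde{\rho^Y_j}_{\,kl}$, where $\widetilde{\rho^X_i}$ and $\widetilde{\rho^Y_j}$ denote the $\mathcal{U}$-centered versions of the matrices $(\rho_i(X_{k(i)},X_{l(i)}))_{k,l}$ and $(\rho_j(Y_{k(j)},Y_{l(j)}))_{k,l}$, respectively.

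Next, expand the product $\tilde{a}_{kl}\tilde{b}_{kl}$ into four pieces. The leading piece is
\begin{align*}
\tfrac{\tau_X\tau_Y}{4}\widetilde{L^X}_{kl}\widetilde{L^Y}_{kl}=\tfrac{1}{4\tau_{XY}}\sum_{i=1}^{p}\sum_{j=1}^{q}\widetilde{\rho^X_i}_{\,kl}\,\widetilde{\rho^Y_j}_{\,kl},
\end{align*}
and averaging this over $\{k\neq l\}$ with weight $1/(n(n-3))$ reproduces, by the definition in (\ref{ustat dcov}), exactly $\tfrac{1}{4\tau_{XY}}\sum_{i,j}\widetilde{D_n^2}_{\,;\,\rho_i,\rho_j}(X_{(i)},Y_{(j)})$, which is the claimed leading term. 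The three remaining pieces make up $\mathcal{R}_n$: using that for fixed $n$ the $\mathcal{U}$-centered entries inherit the pointwise stochastic orders ($\widetilde{L^X}_{kl}=O_p(a_p)$, $\widetilde{R^X}_{kl}=O_p(a_p^2)$, and similarly for $Y$), each cross term is, after averaging,
\begin{align*}
\tfrac{\tau_{XY}}{2}\widetilde{L^X}_{kl}\widetilde{R^Y}_{kl}=O_p(\tau_{XY}a_p b_q^2),\quad \tfrac{\tau_{XY}}{2}\widetilde{L^Y}_{kl}\widetilde{R^X}_{kl}=O_p(\tau_{XY}a_p^2 b_q),\quad \tau_{XY}\widetilde{R^X}_{kl}\widetilde{R^Y}_{kl}=O_p(\tau_{XY}a_p^2 b_q^2),
\end{align*}
the last being of smaller order since $a_p,b_q=o(1)$. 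This gives $\mathcal{R}_n=O_p(\tau_{XY}(a_p^2 b_q+a_p b_q^2))$, which is $o_p(1)$ by Assumption \ref{ass2.1}.

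The main obstacle, in my view, is not any single step but rather the bookkeeping needed to verify that the pointwise expansion from Proposition \ref{K taylor : ED} survives after $\mathcal{U}$-centering with the advertised rates — i.e., that $\widetilde{R^X}_{kl}$ really is $O_p(a_p^2)$ rather than something larger due to the row/column averaging in (\ref{U centering def}). With $n$ fixed this reduces to a finite-sum bound and is routine, but the same argument will not survive into the HDMSS regime (where $n\to\infty$), so a more delicate decomposition (using for instance conditional centering or Hoeffding-type projections) would be required there; under the present fixed-$n$ hypothesis, linearity of $\mathcal{U}$-centering plus the pointwise Taylor bounds suffice to close the proof.
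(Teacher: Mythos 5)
Your proposal is correct and follows essentially the same route as the paper's proof: apply Proposition \ref{K taylor : ED}, use linearity of $\mathcal{U}$-centering to kill the constant and split $\widetilde{L}$ into the group-wise $\mathcal{U}$-centered $\rho_i$ matrices, identify the leading $\widetilde{L}\,\widetilde{L}$ product with $\frac{1}{4\tau_{XY}}\sum_{i,j}\widetilde{D_n^2}_{\,;\,\rho_i,\rho_j}(X_{(i)},Y_{(j)})$, and bound the three cross terms at the stated rates (the paper does this via Cauchy--Schwarz and the power mean inequality on the finite sums, which for fixed $n$ is equivalent to your pointwise bounds). Your observation that the $\mathcal{U}$-centered remainder inherits the $O_p(a_p^2)$ order only because $n$ is fixed, and that the HDMSS regime needs a separate argument, matches how the paper treats that case in Theorem \ref{ACdCov hd}.
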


The above theorem generalizes Theorem 2.1.1 in Zhu et al.\,(2019) 
by showing that the leading term of $\widetilde{\cal{D}^2_n} (X,Y)$ is the sum of all the group-wise (unbiased) squared sample generalized dCov scaled by $\tau_{XY}\,$. In other words, in the HDLSS setting, $\widetilde{\cal{D}^2_n} (X,Y)$ is asymptotically equivalent to the aggregation of group-wise squared sample generalized dCov. Thus $\widetilde{\cal{D}^2_n} (X,Y)$ can quantify group-wise non-linear dependencies between $X$ and $Y$, going beyond the scope of the usual Euclidean dCov. 

\begin{remark}\label{rem to th5.5}
Consider a special case where $d_i =1$ and $g_j=1$, and $\rho_i$ and $\rho_j$ are Euclidean distances for all $1\leq i\leq p$ and $1\leq j\leq q$. Then Theorem \ref{ACdCov taylor thm} essentially states that
\begin{equation}\label{ACdCov taylor sp case 1}
\widetilde{\cal{D}^2_n} (X,Y) = \frac{1}{4\tau_{XY}} \dis \sum_{i=1}^p \sum_{j=1}^q dCov^2_n(X_{i},Y_{j}) \, + \, \mathcal{R}_n \; ,
\end{equation}
where $\mathcal{R}_n = o_p(1)$. This demonstrates that in a special case (when we have unit group sizes), $\widetilde{\cal{D}^2_n} (X,Y)$ is asymptotically equivalent to the marginal aggregation of cross-component distance covariances proposed by Zhu et al.\,(2019) as dimensions grow high. If $K_{\textbf{d}}$ and $K_{\textbf{g}}$ are Euclidean distances, then Theorem \ref{ACdCov taylor thm} essentially boils down to Theorem 2.1.1 in Zhu et al.\,(2019) as a special case.
\end{remark}

\begin{remark}\label{approximation}
To illustrate the approximation of $\widetilde{\cal{D}^2_n} (X,Y)$ by the aggregation of group-wise squared sample generalized dCov given by Theorem \ref{ACdCov taylor thm}, we simulated the datasets in Examples \ref{eg1}.1, \ref{eg1}.2, \ref{eg2}.1 and \ref{eg2}.2\; $100$ times each with $n=50$ and $p=q=50$. For each of the datasets, the difference between $\widetilde{\cal{D}^2_n} (X,Y)$ and the leading term in the RHS of equation (\ref{ACdCov taylor}) is smaller than $0.01$ $100\%$ of the times, which illustrates that the approximation works reasonably well.
\end{remark}

The following theorem illustrates the asymptotic behavior of $\widetilde{\cal{D}^2_n} (X,Y)$ when $p$ is fixed and $q$ grows to infinity while the sample size is held fixed. Under this setup, if both $K_{\textbf{d}}$ and $K_{\textbf{g}}$ are Euclidean distances, the leading term of $\widetilde{\cal{D}^2_n} (X,Y)$ is the sum of the group-wise unbiased U-statistic type estimators of $MDD^2(Y_j | X)$ for $1\leq j \leq q$, scaled by $\tau_{Y}\,$. In other words, the sample Euclidean distance covariance behaves as an aggregation of sample martingale difference divergences.

\begin{theorem}\label{th MDD sample}
Under Assumption \ref{ass2 : ED} and the assumption that \,$L(Y,Y') = O_p(b_q)$ with $b_q=o(1)$ and\, $\tau_Y\,b_q^2 = o(1)$, as\, $q \to \infty$ with $p$ and $n$ remaining fixed, we have
\begin{align*}
\widetilde{\cal{D}^2_n}(X,Y) \;&=\; \frac{1}{2\tau_{Y}} \dis \sum_{j=1}^q \widetilde{\cal{D}^2_n}_{\,;\,K_{\textbf{d}} \,, \rho_j}(X,Y_{(j)}) \, + \, \mathcal{R}_n,
\end{align*}
where $\mathcal{R}_n$ is the remainder term such that $\mathcal{R}_n = O_p(\tau_{Y}\, b_q^2) = o_p(1)$. 
\end{theorem}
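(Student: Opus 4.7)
The plan is to mirror the proof strategy that underlies Theorem \ref{ACdCov taylor thm} (applied in a single direction), exploiting the fact that $\cal{U}$-centering kills constants and that the $Y$-side kernel admits a first-order expansion with an explicit additive structure across the $q$ groups.

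First I would decompose $K_\mathbf{g}(Y_k,Y_l)$ via Proposition \ref{K taylor : ED}:
\begin{equation*}
K_\mathbf{g}(Y_k,Y_l)=\tau_Y+\tfrac{\tau_Y}{2}L(Y_k,Y_l)+\tau_Y R(Y_k,Y_l),
\end{equation*}
and observe that $\tau_Y^2 L(Y_k,Y_l)=K_\mathbf{g}^2(Y_k,Y_l)-\tau_Y^2=\sum_{j=1}^{q}\bigl(\rho_j(Y_{k(j)},Y_{l(j)})-\E\rho_j(Y_{(j)},Y_{(j)}')\bigr)$ by definition of $K_\mathbf{g}$. Applying $\cal{U}$-centering to $B=(K_\mathbf{g}(Y_k,Y_l))$, the constant $\tau_Y$ and all the expectations $\E\rho_j(Y_{(j)},Y_{(j)}')$ drop out by the centering identities, giving
\begin{equation*}
\tilde{b}_{kl}=\frac{1}{2\tau_Y}\sum_{j=1}^{q}\widetilde{B^{(j)}}_{kl}+\tau_Y\,\widetilde{R}_{kl},
\end{equation*}
where $\widetilde{B^{(j)}}$ is the $\cal{U}$-centered version of $\bigl(\rho_j(Y_{k(j)},Y_{l(j)})\bigr)_{k,l}$ and $\widetilde{R}_{kl}$ is the $\cal{U}$-centered version of $R(Y_k,Y_l)$.

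Next I would substitute this into the definition $\widetilde{\cal{D}^2_n}(X,Y)=(\tilde{A}\cdot\tilde{B})=\frac{1}{n(n-3)}\sum_{k\neq l}\tilde{a}_{kl}\tilde{b}_{kl}$ and split into two pieces. The first piece reassembles into
\begin{equation*}
\frac{1}{2\tau_Y}\sum_{j=1}^{q}\frac{1}{n(n-3)}\sum_{k\neq l}\tilde{a}_{kl}\,\widetilde{B^{(j)}}_{kl}=\frac{1}{2\tau_Y}\sum_{j=1}^{q}\widetilde{\cal{D}^2_n}_{\,;\,K_{\textbf{d}},\rho_j}(X,Y_{(j)}),
\end{equation*}
which is precisely the leading term of the theorem. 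The residual piece is $\mathcal{R}_n=\tau_Y\cdot\frac{1}{n(n-3)}\sum_{k\neq l}\tilde{a}_{kl}\widetilde{R}_{kl}$.

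The final step, and what I view as the only nontrivial point, is controlling $\mathcal{R}_n$. Since $p$ and $n$ are fixed and Assumption \ref{ass2 : ED} gives $\E K_\mathbf{d}^2(X,X')=\sum_{i=1}^{p}\E\rho_i(X_{(i)},X_{(i)}')=O(1)$, we obtain $\max_{k\neq l}|\tilde{a}_{kl}|=O_p(1)$ since $\tilde{a}_{kl}$ is a fixed linear combination of $O_p(1)$ terms. From the hypothesis $L(Y,Y')=O_p(b_q)$ together with $b_q=o(1)$, Proposition \ref{K taylor : ED} yields $R(Y_k,Y_l)=O_p(b_q^2)$, and $\cal{U}$-centering preserves this rate since it is a fixed finite linear combination (in $k,l$) of $R$-values. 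Hence $\max_{k\neq l}|\widetilde{R}_{kl}|=O_p(b_q^2)$, and an elementary bound delivers $|\mathcal{R}_n|=O_p(\tau_Y b_q^2)=o_p(1)$ by the assumption $\tau_Y b_q^2=o(1)$. The main obstacle to guard against is verifying that the $\cal{U}$-centering really transforms the constant shifts and the $R$-remainders in the way claimed; once the Taylor expansion and the additive representation of $\tau_Y^2 L(Y,Y')$ across groups are in place, this is a bookkeeping exercise that parallels Theorem \ref{ACdCov taylor thm} but only requires the $q\to\infty$ direction.
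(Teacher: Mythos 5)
Your proposal is correct and follows essentially the same route as the paper: Taylor-expand $K_{\mathbf{g}}(Y_k,Y_l)$, note that $\cal{U}$-centering eliminates the constants and preserves the additive group structure so that $\tilde b_{kl}=\frac{1}{2\tau_Y}\sum_{j}\widetilde{\rho_j}(Y_{k(j)},Y_{l(j)})+\tau_Y\widetilde{R}(Y_k,Y_l)$, and bound the cross term $\frac{\tau_Y}{n(n-3)}\sum_{k\neq l}\tilde a_{kl}\widetilde{R}(Y_k,Y_l)$ by $O_p(\tau_Y b_q^2)$. The only cosmetic difference is that the paper invokes a power-mean/Cauchy--Schwarz bound for the remainder where you use a uniform bound over the finitely many $(k,l)$ pairs; with $n$ and $p$ fixed these are interchangeable.
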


\begin{remark}\label{rem for th MDD sample}
In particular, when both $K_{\textbf{d}}$ and $K_{\textbf{g}}$ are Euclidean distances, we have 
\begin{align*}
\widetilde{\cal{D}^2_n} (X,Y) \;&=\; dCov^2_n(X,Y) \;=\; \frac{1}{\tau_{Y}} \dis \sum_{j=1}^{\tilde{q}} MDD^2_n(Y_{j}|X) \, + \, \mathcal{R}_n,
\end{align*}
where $MDD^2_n(Y_{j}|X)$ is the unbiased U-statistic type estimator of $MDD^2(Y_{j}|X)$ defined as in (\ref{ustat dcov}) with $d_{\cal{X}}(x,x')=\|x-x'\|$ for $x,x'\in\mathbb{R}^{\tilde{p}}$ and $d_{\cal{Y}}(y,y')=|y-y'|^2/2$ for $y,y'\in\mathbb{R}$, respectively.
\end{remark}

Now denote $X_k = (X_{k(1)}, \dots, X_{k(p)})$ and $Y_k=(Y_{k(1)},\dots,Y_{k(q)})$ for $1\leq k\leq n$. Define the leading term of $\widetilde{\cal{D}^2_n} (X,Y)$ in equation (\ref{ACdCov taylor}) as $$L := \,\frac{1}{4\tau_{XY}\,} \sum_{i=1}^p \sum_{j=1}^q \widetilde{D^2_n}_{\,;\,\rho_i, \rho_j}(X_{(i)},Y_{(j)})\,.$$
It can be verified that $$L\, = \,\frac{1}{4\tau_{XY}\,} \sum_{i=1}^p \sum_{j=1}^q \left(\tilde{D}^X(i) \cdot \tilde{D}^Y(j)\right) \,,$$ where $\tilde{D}^X(i), \tilde{D}^Y(j)$ are the $\cal{U}$-centered versions of $D^X(i) = \left(d^X_{kl}(i)\right)_{k,l=1}^n$ and $D^Y(j) = \left(d^Y_{kl}(j)\right)_{k,l=1}^n$, respectively.
As an advantage of using the double-centered distances, we have for all $1\leq i,i' \leq p$, $1\leq j, j' \leq q$ and $\{k,l\} \neq \{u, v\},$
\begin{align}\label{double}
\E \left[d^X_{kl}(i)\, d^X_{uv}(i')\right] \; = \; \E\left[d^Y_{kl}(j)\,d^Y_{uv}(j')\right] \; = \;\E \left[d^X_{kl}(i)\, d^Y_{uv}(j)\right]\;= \;0.
\end{align}
See for example the proof of Proposition 2.2.1 in Zhu et al.\,(2019) for a detailed explanation.

\begin{assumption}\label{ass3}
For fixed $n$, as $p, q \to \infty$, $$\begin{pmatrix}
\frac{1}{2\,\tau_X} \dis \sum_{i=1}^p d^X_{kl}(i) \\
\frac{1}{2\,\tau_Y} \dis \sum_{j=1}^q d^Y_{uv}(j)
\end{pmatrix}_{k<l,\, u<v} \overset{d}{\longrightarrow} \;\;\begin{pmatrix}
d_{kl}^1 \\ \\ d_{uv}^2
\end{pmatrix}_{k<l,\, u<v} \, ,$$ where $\{d_{kl}^1,\, d_{uv}^2\}_{k<l,\, u<v}$ are jointly  Gaussian. Further we assume that
\begin{align*}
&var(d_{kl}^1):= \sigma_X^2=  \lim_{p \to \infty}\, \frac{1}{4\tau_X^2} \dis \sum_{i,i'=1}^p D^2_{\rho_i, \rho_{i'}} \left(X_{(i)}, X_{(i')} \right),\\
&var(d_{kl}^2):= \sigma_Y^2 =  \lim_{q \to \infty}\, \frac{1}{4\tau_Y^2} \dis \sum_{j,j'=1}^q D^2_{\rho_j, \rho_{j'}} \left(Y_{(j)}, Y_{(j')} \right),\\
&\cov\,(d_{kl}^1, d_{kl}^2) := \sigma_{XY}^2  =  \lim_{p, q \to \infty}\, \frac{1}{4\tau_{XY}} \dis \sum_{i=1}^p \sum_{j=1}^q D^2_{\rho_i, \rho_j} \left(X_{(i)}, Y_{(j)} \right).
\end{align*}
\end{assumption}
In view of (\ref{double}), we have $\cov\,(d_{kl}^1, d_{uv}^1) = \cov\,(d_{kl}^2, d_{uv}^2) = \cov\,(d_{kl}^1, d_{uv}^2) = 0$ for $\{k,l\} \neq \{u, v\}$. Theorem \ref{ACdCov taylor thm} states that for growing $p$ and $q$ and fixed $n$, $\widetilde{\cal{D}^2_n} (X,Y)$ and $L$ are asymptotically equivalent. By studying the leading term, we obtain the limiting distribution of $\widetilde{\cal{D}^2_n} (X,Y)$ as follows. 

\begin{theorem}\label{ACdcov:dist_conv}
Under Assumptions \ref{ass2 : ED}, \ref{ass2.1} and \ref{ass3}, for fixed $n$ and $p, q \to \infty$,
\begin{align*}
&\widetilde{\cal{D}^2_n} (X,Y)\; \overset{d}{\longrightarrow} \; \frac{1}{\nu}  d^{1\top} M d^2 \,,\\
&\widetilde{\cal{D}^2_n} (X,X)\; \overset{d}{\longrightarrow} \; \frac{1}{\nu}  d^{1\top} M d^1\; \overset{d}{=} \; \frac{\sigma_X^2}{\nu} \chi^2_{\nu}\,,\\
&\widetilde{\cal{D}^2_n} (Y,Y)\; \overset{d}{\longrightarrow} \; \frac{1}{\nu} d^{2\top} M d^2\; \overset{d}{=} \; \frac{\sigma_Y^2}{\nu} \chi^2_{\nu}\,,
\end{align*}
where $M$ is a projection matrix of rank \,$\nu=\frac{n(n-3)}{2}$, and\, $$ \begin{pmatrix}
d^1 \\ d^2
\end{pmatrix} \; \sim \; N \left( 0\,, \begin{pmatrix}
\sigma_X^2 \,I_{\frac{n(n-1)}{2}} \;\; \sigma_{XY}^2 \,I_{\frac{n(n-1)}{2}} \\ \\
\sigma_{XY}^2 \,I_{\frac{n(n-1)}{2}} \;\; \sigma_Y^2 \,I_{\frac{n(n-1)}{2}}
\end{pmatrix} \right) \,.$$
\end{theorem}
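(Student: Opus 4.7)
My starting point is Theorem~\ref{ACdCov taylor thm}, which gives
\begin{align*}
\widetilde{\cal{D}^2_n}(X,Y) \,=\, L + o_p(1), \qquad L := \frac{1}{4\tau_{XY}}\sum_{i=1}^{p}\sum_{j=1}^{q} \widetilde{D^2_n}_{\,;\,\rho_i,\rho_j}(X_{(i)},Y_{(j)}),
\end{align*}
so only the limit of $L$ needs to be identified. Using the identity $\sum_{i=1}^p \rho_i(x_{(i)},x'_{(i)}) = K_{\bf{d}}^2(x,x')$ together with the linearity of $\mathcal{U}$-centering, the inner summations over $i$ and $j$ commute with the centering, giving
\begin{align*}
L \,=\, \frac{1}{n(n-3)}\sum_{k\neq l} U^X_{kl}\, U^Y_{kl} \,=\, \frac{1}{\nu}\,(U^X)^\top U^Y, \qquad \nu := \frac{n(n-3)}{2},
\end{align*}
where $U^X$ is the off-diagonal vector in $\mathbb{R}^{\binom{n}{2}}$ obtained by $\mathcal{U}$-centering the matrix $\bigl(\tfrac{1}{2\tau_X}K_{\bf{d}}^2(X_k,X_l)\bigr)_{k,l=1}^n$, and $U^Y$ is defined analogously.

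My next step is to link $U^X$ to the population double-centered vector $V^X := \bigl(\tfrac{1}{2\tau_X}\sum_i d^X_{kl}(i)\bigr)_{k<l}$ appearing in Assumption~\ref{ass3}. Viewing $\mathcal{U}$-centering as a symmetric linear operator $P$ on the $\binom{n}{2}$-dimensional space of off-diagonal symmetric entries (with the distance-matrix convention that the diagonal of the input is zero), a direct computation using the weights $1/(n-2)$ and $1/((n-1)(n-2))$ shows that $P$ annihilates every ``row-plus-column'' matrix $(u_k+u_l)_{k\neq l}$ and fixes every matrix with zero row sums. Since these two subspaces have trivial intersection and dimensions $n$ and $\binom{n}{2}-n = \nu$ respectively, $P$ is the orthogonal projection onto the zero-row-sum subspace. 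Now $V^X$ differs from $\tfrac{1}{2\tau_X}(K_{\bf{d}}^2(X_k,X_l))_{k,l}$ only by a row-plus-column-plus-constant correction (with row part $u_k = \tfrac{1}{2\tau_X}\sum_i \E[\rho_i(X_{k(i)},X'_{(i)})\mid X_{k(i)}]$), which lies entirely in $\ker P$; hence $U^X = P V^X$, and similarly $U^Y = P V^Y$.

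By Assumption~\ref{ass3}, $(V^X,V^Y) \overset{d}{\longrightarrow} (d^1,d^2)$, jointly Gaussian with the stated block covariance (the off-diagonal covariances across $\{k,l\}\neq\{u,v\}$ vanishing via (\ref{double})). Applying the continuous mapping theorem to the bilinear form $(v,w)\mapsto \tfrac{1}{\nu}\,v^\top P^\top P w = \tfrac{1}{\nu}\,v^\top P w$ (using the symmetry and idempotency of $P$) gives
\begin{align*}
L \,\overset{d}{\longrightarrow}\, \frac{1}{\nu}\, d^{1\top} M d^2, \qquad M := P.
\end{align*}
The same argument with $Y$ replaced by $X$ (respectively $X$ by $Y$) yields $\widetilde{\cal{D}^2_n}(X,X)\overset{d}{\to}\tfrac{1}{\nu}d^{1\top}Md^1$ and $\widetilde{\cal{D}^2_n}(Y,Y)\overset{d}{\to}\tfrac{1}{\nu}d^{2\top}Md^2$. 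Since $M$ is an orthogonal projection of rank $\nu$ and $d^1\sim N(0,\sigma_X^2 I_{\binom{n}{2}})$, $d^2\sim N(0,\sigma_Y^2 I_{\binom{n}{2}})$ marginally, the standard fact that $Z^\top M Z \sim \chi^2_\nu$ for $Z\sim N(0,I)$ delivers the stated $\chi^2_\nu$ representations.

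The part I expect to be the main obstacle is the second step: verifying that the precise $\mathcal{U}$-centering weights $1/(n-2)$ and $1/((n-1)(n-2))$ deliver both (i) annihilation of row-plus-column matrices and (ii) fixing of zero-row-sum matrices. Each is a short but weight-sensitive algebraic identity that depends critically on those exact constants, and together they pin $M$ down as an orthogonal projection of rank $\nu$; without both, the clean identity $U^X = PV^X$ would fail, and the $\chi^2_\nu$ conclusions in the two degenerate cases would no longer follow.
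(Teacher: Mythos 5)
Your proposal is correct and follows essentially the same route as the paper, whose proof simply defers to Theorem 2.2.1 of Zhu et al.\ (2019): reduce to the leading term from Theorem \ref{ACdCov taylor thm}, identify $\mathcal{U}$-centering with the rank-$\nu$ orthogonal projection onto the zero-row-sum subspace of symmetric hollow matrices, and then apply Assumption \ref{ass3} together with the continuous mapping theorem. You have merely written out explicitly the projection-matrix verification that the paper outsources to the cited reference.
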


To perform independence testing, in the spirit of Sz\'{e}kely and Rizzo\,(2014), we define the studentized test statistic
\begin{equation}\label{student t}
\cal{T}_n \; := \; \sqrt{\nu-1}\, \frac{\widetilde{\cal{DC}^2_n} (X,Y)}{\sqrt{1 - \left(\widetilde{\cal{DC}^2_n} (X,Y)\right)^2}}\; ,
\end{equation}
where $$ \widetilde{\cal{DC}^2_n} (X,Y) \; = \; \frac{\widetilde{\cal{D}^2_n} (X,Y)}{\sqrt{\widetilde{\cal{D}^2_n} (X,X) \, \widetilde{\cal{D}^2_n} (Y,Y)}}\,.$$

Define $\psi = \sigma_{XY}^2 / \sqrt{\sigma_X^2 \sigma_Y^2}$.  The following theorem states the asymptotic distributions of the test statistic $\cal{T}_n$ under the null hypothesis $\tilde{H}_0: X \bigCI Y$ and the alternative hypothesis $\tilde{H}_A: X \nbigCI Y$.

\begin{theorem}\label{HDLSS dist conv}
Under Assumptions \ref{ass2 : ED}, \ref{ass2.1} and \ref{ass3}, for fixed $n$ and $p, q \to \infty$,
\begin{align*}
& P_{\tilde{H}_0} \left(\cal{T}_n \leq t \right) \; \longrightarrow \; P\left(t_{\nu -1} \leq t \right),\\
& P_{\tilde{H}_A} \left(\cal{T}_n \leq t \right) \; \longrightarrow \; \E \left[P\left(t_{\nu -1, W} \leq t |W\right) \right],
\end{align*}
where $t$ is any fixed real number and $W \sim \sqrt{\frac{\psi^2}{1-\psi^2}\, \chi^2_{\nu}}$.
\end{theorem}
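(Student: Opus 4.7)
}

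The plan is to reduce $\cal{T}_n$ to the classical correlation/$t$-distribution relationship by exploiting the projection structure of $M$ provided by Theorem \ref{ACdcov:dist_conv}. First, by the continuous mapping theorem applied to that theorem, together with Slutsky,
\begin{equation*}
\widetilde{\cal{DC}^2_n}(X,Y)\;\overset{d}{\longrightarrow}\;\frac{d^{1\top} M d^2}{\sqrt{(d^{1\top} M d^1)(d^{2\top} M d^2)}},
\end{equation*}
so it suffices to analyze this ratio and apply $r\mapsto \sqrt{\nu-1}\,r/\sqrt{1-r^2}$ once more via continuous mapping (on the event, of probability one, that the denominator is positive).

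Next I would standardize the Gaussian block. Writing $d^1=\sigma_X Z_1$ and $d^2=\sigma_Y Z_2$, the pair $(Z_1,Z_2)$ is jointly Gaussian with $Z_1,Z_2$ standard in $\mathbb{R}^{n(n-1)/2}$ and coordinate-wise correlation $\psi=\sigma_{XY}^2/\sqrt{\sigma_X^2\sigma_Y^2}$; the $\sigma_X\sigma_Y$ factors cancel in the ratio. Since $M$ is a projection of rank $\nu$, write $M=U\,\mathrm{diag}(I_\nu,0)\,U^\top$ with $U$ orthogonal, set $\tilde Z_j=U^\top Z_j$ (still jointly Gaussian with the same law by orthogonal invariance), and let $\tilde Z_j^{(1)}\in\mathbb{R}^{\nu}$ denote their first $\nu$ coordinates. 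The ratio simplifies to $R_\nu:=\tilde Z_1^{(1)\top}\tilde Z_2^{(1)}/(\|\tilde Z_1^{(1)}\|\,\|\tilde Z_2^{(1)}\|)$, so $\cal{T}_n$ converges to $\sqrt{\nu-1}\,R_\nu/\sqrt{1-R_\nu^2}$.

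Under $\tilde H_0$, each group-wise generalized distance covariance vanishes, so Assumption \ref{ass3} gives $\sigma_{XY}^2=0$, hence $\psi=0$ and $\tilde Z_1^{(1)}\bigCI \tilde Z_2^{(1)}$. Conditioning on $\tilde Z_1^{(1)}$ and letting $V=\tilde Z_1^{(1)}/\|\tilde Z_1^{(1)}\|$, the quantity $W:=V^\top \tilde Z_2^{(1)}$ is $N(0,1)$ and $C:=\|\tilde Z_2^{(1)}\|^2-W^2\sim\chi^2_{\nu-1}$ independently of $W$; then $R_\nu^2/(1-R_\nu^2)=W^2/C$, so $\sqrt{\nu-1}\,R_\nu/\sqrt{1-R_\nu^2}=W/\sqrt{C/(\nu-1)}\sim t_{\nu-1}$, yielding the null limit.

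Under $\tilde H_A$, write $\tilde Z_2^{(1)}=\psi\,\tilde Z_1^{(1)}+\sqrt{1-\psi^2}\,\tilde Z_3^{(1)}$ with $\tilde Z_3^{(1)}\sim N(0,I_\nu)$ independent of $\tilde Z_1^{(1)}$. Setting $A=\|\tilde Z_1^{(1)}\|$, $W'=V^\top\tilde Z_3^{(1)}\sim N(0,1)$, $C'=\|\tilde Z_3^{(1)}\|^2-W'^2\sim\chi^2_{\nu-1}$ (all mutually independent), the same projection-onto-$V$ computation gives
\begin{equation*}
\frac{R_\nu}{\sqrt{1-R_\nu^2}}=\frac{\psi A+\sqrt{1-\psi^2}\,W'}{\sqrt{(1-\psi^2)\,C'}}.
\end{equation*}
Hence $\sqrt{\nu-1}\,R_\nu/\sqrt{1-R_\nu^2}$ equals $(N/\sqrt{1-\psi^2})/\sqrt{C'/(\nu-1)}$ with $N\mid A\sim N(\psi A,1-\psi^2)$, which conditional on $A$ is a noncentral $t_{\nu-1,\,\psi A/\sqrt{1-\psi^2}}$. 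Since $A^2\sim\chi^2_\nu$ and $\psi\geq 0$ (as $\sigma_{XY}^2$ is a non-negative limit of sums of generalized dCov's), the noncentrality parameter equals $W\overset{d}{=}\sqrt{\psi^2\chi^2_\nu/(1-\psi^2)}$. Integrating out $W$ gives the stated alternative limit. The main obstacle I anticipate is tracking the orthogonal-invariance step cleanly so that the reduction to the classical $\nu$-dimensional correlation coefficient is rigorous, and verifying that $\psi\in[0,1)$ so that the formulas are well-defined (the case $\psi=1$ would be a degenerate ``perfect dependence'' limit excluded by Assumption \ref{ass3}).
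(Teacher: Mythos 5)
Your proposal is correct and is, in substance, the argument the paper relies on: the paper's own proof of Theorem \ref{HDLSS dist conv} is a one-line deferral to Proposition 2.2.2 of Zhu et al.\,(2019), which performs exactly this reduction of the limiting ratio to the classical correlation-coefficient/noncentral-$t$ computation after diagonalizing the projection $M$ and conditioning on $\|\tilde Z_1^{(1)}\|$. Your writeup is self-contained where the paper's is not, and the side points you flag (joint convergence of the three statistics in Theorem \ref{ACdcov:dist_conv}, $\psi\in[0,1)$, and a.s.\ positivity of the denominators for the continuous mapping step) are precisely the right things to verify.
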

For an explicit form of $\E \left[P\left(t_{\nu -1, W} \leq t |W \right)\right]$, we refer the reader to Lemma 3 in the appendix of Zhu et al.\,(2019).
Now consider the local alternative hypothesis $\tilde{H}_{A}^*$: $X \nbigCI Y$ with $\psi=\psi_0/\sqrt{\nu}$, where $\psi_0$ is a constant with respect to $n$. The following proposition gives an approximation of $\E \left[P\left(t_{\nu -1, W} \leq t |W \right)\right]$ under the local alternative hypothesis $\tilde{H}_{A}^*$ when $n$ is allowed to grow.
\begin{proposition}\label{local alt power}
Under $\tilde{H}_{A}^*$, as $n \to \infty$ and \,$t = O(1)$, $$\E \left[P\left(t_{\nu -1, W} \leq t |W\right)\right] \; = \; P\left(t_{\nu -1, \,\psi_0} \leq t \right) \; + \; O\Big(\frac{1}{\nu}\Big)\,.$$
\end{proposition}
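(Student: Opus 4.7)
The plan is to use a second-order Taylor expansion of the map $w\mapsto g(w):=P(t_{\nu-1,w}\leq t)$ around $w=\psi_0$, together with the exact moments of $\chi^2_\nu$, to control the resulting error. First I would exploit the mixture representation $t_{\nu-1,w}\overset{d}{=}(Z+w)/S$ with $Z\sim N(0,1)$ independent of $S=\sqrt{V/(\nu-1)}$, $V\sim\chi^2_{\nu-1}$, so that
\begin{align*}
g(w)=\E[\Phi(tS-w)],\qquad g'(w)=-\E[\phi(tS-w)],\qquad g''(w)=\E[\phi'(tS-w)].
\end{align*}
Since $\phi$ and $\phi'$ are bounded on $\bb{R}$, one obtains constants $C,C'$, depending only on the bounded $t$, such that $\sup_w|g'(w)|\leq C$ and $\sup_w|g''(w)|\leq C'$ uniformly in $\nu$.

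Next, under the local alternative $\psi=\psi_0/\sqrt{\nu}$, the representation $W^2=\tfrac{\psi_0^2}{\nu-\psi_0^2}\chi^2_\nu$ permits direct moment calculations. Using $\E[\chi^2_\nu]=\nu$, $\Var(\chi^2_\nu)=2\nu$, and the Stirling expansion $\E[\sqrt{\chi^2_\nu}]=\sqrt{\nu}\,(1-\tfrac{1}{4\nu}+O(\nu^{-2}))$, I can show (assuming $\psi_0>0$ without loss of generality, since $W\geq 0$ aligns with the $|\psi_0|$ limit) that
\begin{align*}
\E[W]=\psi_0+O(\nu^{-1}),\qquad \Var(W)=\frac{\psi_0^2}{2\nu}+O(\nu^{-2}),
\end{align*}
and hence $\E[(W-\psi_0)^2]=O(\nu^{-1})$.

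Applying the Taylor expansion $g(W)=g(\psi_0)+g'(\psi_0)(W-\psi_0)+\tfrac12 g''(\tilde W)(W-\psi_0)^2$ for some $\tilde W$ between $W$ and $\psi_0$, and taking expectations, then yields
\begin{align*}
\bigl|\E[g(W)]-g(\psi_0)\bigr|\leq C\,\bigl|\E[W-\psi_0]\bigr|+\tfrac{C'}{2}\,\E[(W-\psi_0)^2]=O(\nu^{-1}),
\end{align*}
which is precisely the claim since $g(\psi_0)=P(t_{\nu-1,\psi_0}\leq t)$.

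The main obstacle is obtaining the leading-order asymptotics of $\E[W]$ and $\Var(W)$ cleanly; this requires tracking the $O(\nu^{-1})$ corrections that arise from the square-root and can be handled either via Stirling's formula applied to the exact moments $\E[(\chi^2_\nu)^{r}]=2^{r}\Gamma(\nu/2+r)/\Gamma(\nu/2)$, or via a delta-method expansion of $W=|\psi_0|\sqrt{U/(1-\psi_0^2/\nu)}$ with $U=\chi^2_\nu/\nu$ centered at $1$. The uniform boundedness of $g'$ and $g''$ is the other key ingredient and follows immediately from the mixture representation written above; everything else is routine.
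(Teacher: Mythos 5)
Your proof is correct, and in fact it is more than the paper supplies: the paper states Proposition \ref{local alt power} without giving a proof in the Technical Appendix, deferring the explicit form of $\E[P(t_{\nu-1,W}\leq t\,|\,W)]$ to Lemma 3 in the appendix of Zhu et al.\,(2019), where the analogous expansion is carried out. Your route is a clean, self-contained alternative: the mixture representation $t_{\nu-1,w}\overset{d}{=}(Z+w)/S$ gives $g(w)=\E[\Phi(tS-w)]$ with $|g'|$ and $|g''|$ bounded by absolute constants (indeed independent of $t$ and $\nu$, since $\phi$ and $\phi'$ are globally bounded), and the moment computations for $W=|\psi_0|\sqrt{\chi^2_\nu/(\nu-\psi_0^2)}$ via the Gamma-ratio expansion correctly yield $\E[W]-\psi_0=O(\nu^{-1})$ and $\E[(W-\psi_0)^2]=O(\nu^{-1})$, so the second-order Taylor bound closes the argument. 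Two small points worth making explicit in a final write-up: justify the differentiation under the expectation sign by dominated convergence (immediate, since $\phi$ and $\phi'$ are bounded), and note that $\psi_0\geq 0$ is automatic here because $\sigma_{XY}^2$ is a limit of nonnegative sums of squared generalized distance covariances, so the sign caveat is not actually needed.
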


The following summarizes our key findings in this section.
\begin{itemize}
\item \textbf{Advantages of our proposed metrics over the Euclidean dCov and HSIC :} 
\begin{enumerate}[i)]
\item Our proposed dependence metrics completely characterize independence between $X$ and $Y$ in the low-dimensional setup, and can detect group-wise non-linear dependencies between $X$ and $Y$ in the high-dimensional setup as opposed to merely detecting component-wise linear dependencies by the Euclidean dCov and HSIC (in light of Theorem 2.1.1 in Zhu et al.\,(2019)). 

\item We also showed that with $p$ remaining fixed and $q$ growing high, the Euclidean dCov can only quantify conditional mean independence of the components of $Y$ given $X$ (which is weaker than independence). To the best of our knowledge, this has not been pointed out in the literature before.
\end{enumerate}

\item \textbf{Advantages over the marginal aggregation approach by Zhu et al.\,(2019) :} 
\begin{enumerate}[i)]
\item In the low-dimensional setup, our proposed dependence metrics can completely characterize independence between $X$ and $Y$, whereas the metric proposed by Zhu et al.\,(2019) can only capture pairwise dependencies between the components of $X$ and $Y$.  
\item We provide a neater way of generalizing dCov and HSIC between $X$ and $Y$ which is shown to be asymptotically equivalent to the marginal aggregation of cross-component distance covariances proposed by Zhu et al.\,(2019) as dimensions grow high. Also grouping or partitioning the two high-dimensional random vectors (which again may be problem specific) allows us to detect a wider range of alternatives compared to only detecting component-wise non-linear dependencies, as independence of two univariate marginals is implied from independence of two higher dimensional marginals containing the two univariate marginals.
\item 
The computational complexity of the (unbiased) squared sample $\cal{D}(X,Y)$ is $O(n^2(p+q))$. Thus the computational cost of our proposed two-sample t-test only grows linearly with the dimension 
and therefore is scalable to very high-dimensional data. Although a naive aggregation of marginal distance covariances has a computational complexity of $O(n^2 pq)$, the approach of Zhu et al.\,(2019) essentially corresponds to the use of an additive kernel and the computational cost of their proposed estimator can also be made linear in the dimensions if properly implemented.
\end{enumerate}
\end{itemize}

\begin{table}[!h]\small
\caption{Summary of the behaviors of the proposed homogeneity/dependence metrics for different choices of $\rho_i(x,x')$ in high dimension.}
    \centering
\begin{tabularx}{\linewidth}{|L|L|L|}
    \hline
Choice of $\rho_i(x,x')$ & Asymptotic behavior of the proposed homogeneity metric & Asymptotic behavior of the proposed dependence metric \\
    \hline
the semi-metric $\Vert x-x' \Vert^2$ & Behaves as a sum of squared Euclidean distances & Behaves as a sum of squared Pearson correlations \\
    \hline
metric of strong negative type on $\mathbb{R}^{d_i}$ &  Behaves as a sum of groupwise energy distances with the metric $\rho_i$ & Behaves as a sum of groupwise dCov with the metric $\rho_i$\\
    \hline
$k_i(x,x) + k_i(x',x') - 2k_i(x,x')$, where $k_i$ is a characteristic kernel on $\mathbb{R}^{d_i}\times \mathbb{R}^{d_i}$ & Behaves as a sum of groupwise MMD with the kernel $k_i$ & Behaves as a sum of groupwise HSIC with the kernel $k_i$\\
\hline
\end{tabularx}
\end{table}


\section{Numerical studies} \label{sec:num}
\subsection{Testing for homogeneity of distributions}\label{subsec homo num}
We investigate the empirical size and power of the tests for homogeneity of two high dimensional distributions. For comparison, we consider the t-tests based on the following metrics:
\begin{enumerate}[I.]
    \item $\cal{E}$ with $\rho_i$ as the Euclidean distance for $1\leq i\leq p$;
    \item $\cal{E}$ with $\rho_i$ as the distance induced by the Laplace kernel for $1\leq i\leq p$;
    \item $\cal{E}$ with $\rho_i$ as the distance induced by the Gaussian kernel for $1\leq i\leq p$;
    \item the usual Euclidean energy distance;
    \item MMD with the Laplace kernel;
    \item MMD with the Gaussian kernel.
\end{enumerate}

We set $d_i =1$ in Examples \ref{eg1:ed} and \ref{eg2:ed}, and $d_i =2$ in Example \ref{eg3:ed} for $1\leq i\leq p$.

\begin{example}\label{eg1:ed}
Consider $X_k = (X_{k1}, \dots, X_{kp})$\, and \,$Y_l = (Y_{l1}, \dots, Y_{lp})$ with $k=1,\dots, n$ and $l=1,\dots, m$. We generate i.i.d. samples from the following models:
\begin{enumerate}
\item $X_k \sim N(0, I_p)$ \,and \,$Y_l \sim N(0, I_p)$.
\item $X_k \sim N(0, \Sigma)$ \,and\, $Y_l \sim N(0, \Sigma)$, where $\Sigma = (\sigma_{ij})_{i,j=1}^p$ with $\sigma_{ii}=1$ for\, $i=1, \dots, p$, $\sigma_{ij} = 0.25$ if $1 \leq |i-j| \leq 2$ and $\sigma_{ij} = 0$ otherwise.
\item $X_k \sim N(0, \Sigma)$ \,and\, $Y_l \sim N(0, \Sigma)$, where $\Sigma = (\sigma_{ij})_{i,j=1}^p$ with $\sigma_{ij} = 0.7^{|i-j|}$.
\end{enumerate}
\end{example}

\begin{example}\label{eg2:ed}
Consider $X_k = (X_{k1}, \dots, X_{kp})$\, and \,$Y_l = (Y_{l1}, \dots, Y_{lp})$ with $k=1,\dots, n$ and $l=1,\dots, m$. We generate i.i.d. samples from the following models:
\begin{enumerate}
\item $X_k \sim N(\mu, I_p)$ with $\mu = (1, \dots, 1)\in\mathbb{R}^p$\, and \,$Y_{li} \overset{ind}{\sim}$ Poisson\,$(1)$ for $i=1,\dots,p$.
\item $X_k \sim N(\mu, I_p)$ with $\mu = (1, \dots, 1)\in \mathbb{R}^p$\, and\, $Y_{li} \overset{ind}{\sim}$ Exponential\,$(1)$ for $i=1,\dots,p$.
\item $X_k \sim N(0, I_p)$\, and\, $Y_l = (Y_{l1}, \dots, Y_{l\lfloor\beta p\rfloor}, Y_{l(\lfloor\beta p\rfloor +1)}, \dots ,  Y_{lp})$, where $Y_{l1}, \dots, Y_{l \lfloor\beta p\rfloor} \overset{i.i.d.}{\sim}$ Rademacher\,$(0.5)$\, and \,$Y_{l(\lfloor\beta p\rfloor +1)}, \dots ,  Y_{lp} \overset{i.i.d.}{\sim} N(0,1)$.
\item $X_k \sim N(0, I_p)$\, and\, $Y_l = (Y_{l1}, \dots, Y_{l \lfloor\beta p\rfloor}, Y_{l(\lfloor\beta p\rfloor +1)}, \dots ,  Y_{lp})$, where $Y_{l1}, \dots, Y_{l\lfloor \beta p\rfloor} \overset{i.i.d.}{\sim}$ Uniform\,$(-\sqrt{3}, \sqrt{3})$\, and \,$Y_{l(\lfloor\beta p\rfloor +1)}, \dots ,  Y_{lp} \overset{i.i.d.}{\sim} N(0,1)$.
\item $X_k = R^{1/2} Z_{1k}$ and $Y_l = R^{1/2} Z_{2l}$, where $R = (r_{ij})_{i,j=1}^p$ with $r_{ii}=1$ for\, $i=1, \dots, p$, $r_{ij} = 0.25$ if $1 \leq |i-j| \leq 2$ and $r_{ij} = 0$ otherwise, $Z_{1k} \sim N(0, I_p)$ and  $Z_{2l} = \underbrace{(Z_{2l1}, \dots, Z_{2lp})}_{ \overset{i.i.d.}{\sim} Exponential(1)} -\, 1.
$
\end{enumerate}
\end{example}

\begin{example}\label{eg3:ed}
Consider $X_k = (X_{k(1)}, \dots, X_{k(p)})$\, and \,$Y_l = (Y_{l(1)}, \dots, Y_{l(p)})$ with $k=1,\dots, n$ and $l=1,\dots, m$\, and\, $d_i =2$ for $1\leq i \leq p$. We generate i.i.d. samples from the following models:
\begin{enumerate}
\item $X_{k(i)} \sim N(\mu, \Sigma_1)$\, and\, $Y_{l(i)} \sim N(\mu, \Sigma_2)$ with $X_{k(i)} \bigCI X_{k(j)}$ and $Y_{l(i)} \bigCI Y_{l(j)}$ for $1\leq i \neq j \leq p$, where $\mu=(1,1)^\top $, $\Sigma_1 = \begin{pmatrix}
1 & 0.9\\
0.9 & 1
\end{pmatrix}$\,and\, $\Sigma_2 = \begin{pmatrix}
1 & 0.1\\
0.1 & 1
\end{pmatrix}$.
\item $X_{k(i)} \sim N(\mu, \Sigma)$ with $X_{k(i)} \bigCI X_{k(j)}$ for $1\leq i \neq j \leq p$, where $\mu=(1,1)^\top $, $\Sigma = \begin{pmatrix}
1 & 0.7\\
0.7 & 1
\end{pmatrix}$. The components of \,$Y_l$ are i.i.d. Exponential\,$(1)$.
\end{enumerate}
\end{example}

Note that for Examples \ref{eg1:ed} and \ref{eg2:ed}, the metric defined in equation (\ref{Kdef}) essentially boils down to the special case in equation (\ref{Kdef_sp}). We try small sample sizes $n=m=50$, dimensions $p=q=50, 100$ and $200$, and $\beta = 1/2$. Table \ref{table1:ed} reports the proportion of rejections out of $1000$ simulation runs for the different tests. For the tests V and VI, we chose the bandwidth parameter heuristically as the median distance between the aggregated sample observations. For tests II and III, the bandwidth parameters are chosen using the median heuristic separately for each group.

In Example \ref{eg1:ed}, the data generating scheme suggests that the variables $X$ and $Y$ are identically distributed. The results in Table \ref{table1:ed} show that the tests based on both the proposed homogeneity metrics and the usual Euclidean energy distance and MMD perform more or less equally good, and the rejection probabilities are quite close to the $10\%$ or $5\%$ nominal level. In Example \ref{eg2:ed}, clearly $X$ and $Y$ have different distributions but $\mu_X = \mu_Y$ and $\Sigma_X = \Sigma_Y$. The results in Table \ref{table1:ed} indicate that the tests based on the proposed homogeneity metrics are able to detect the differences between the two high-dimensional distributions beyond the first two moments unlike the tests based on the usual Euclidean energy distance and MMD, and thereby outperform the latter in terms of empirical power. In Example \ref{eg3:ed}, clearly $\mu_X = \mu_Y$ and $\textrm{tr}\, \Sigma_X = \textrm{tr}\, \Sigma_Y$ and the results show that the tests based on the proposed homogeneity metrics are able to detect the in-homogeneity of the low-dimensional marginal distributions unlike the tests based on the usual Euclidean energy distance and MMD.

\begin{table}[H]\scriptsize
\centering
\caption{Empirical size and power for the different tests of homogeneity of distributions.}
\label{table1:ed}
\begin{tabular}{c cccc cc cc cc cc cc cc}
\toprule
&&&&&\multicolumn{2}{c}{I}&\multicolumn{2}{c}{II}&\multicolumn{2}{c}{III}&\multicolumn{2}{c}{IV}&\multicolumn{2}{c}{V}&\multicolumn{2}{c}{VI}
\\ \cmidrule(r){6-7}\cmidrule(r){8-9}\cmidrule(r){10-11}\cmidrule(r){12-13}\cmidrule(r){14-15}\cmidrule(r){16-17}
& & $n$ & $m$ & $p$ &
10\% & 5\% & 10\% & 5\% & 10\% & 5\% & 10\% & 5\% & 10\% & 5\% & 10\% & 5\%  \\
\hline
\multirow{9}{*}{Ex \ref{eg1:ed}} & (1) & 50 & 50 & 50 & 0.109 & 0.062 & 0.109 & 0.058 & 0.106 & 0.063 & 0.109 & 0.068 & 0.110 & 0.069 & 0.109 & 0.070   \\
& (1) & 50 & 50 & 100 &  0.124 & 0.073 & 0.119 & 0.053 & 0.121 & 0.063 & 0.116 & 0.067 & 0.114 & 0.068 & 0.117 & 0.068   \\
    & (1) & 50 & 50 & 200 &  0.086 & 0.043 & 0.099 & 0.048 & 0.088 & 0.035 & 0.090 & 0.045 & 0.086 & 0.043 & 0.090 & 0.045  \\
    & (2) & 50 & 50 & 50 & 0.114 & 0.069 & 0.108 & 0.054 & 0.118 & 0.068 & 0.116 & 0.077 & 0.115 & 0.073 & 0.116 & 0.078    \\
    & (2) & 50 & 50 & 100 &  0.130 & 0.069 & 0.133 & 0.073 & 0.124 & 0.070 & 0.126 & 0.067 & 0.123 & 0.068 & 0.124 & 0.067  \\
     & (2) & 50 & 50 & 200 & 0.099 & 0.048 & 0.103 & 0.041 & 0.092 & 0.047 & 0.097 & 0.040 & 0.095 & 0.039 & 0.097 & 0.040   \\
     & (3) & 50 & 50 & 50 &  0.100 & 0.064 & 0.107 & 0.057 & 0.099 & 0.060 & 0.112 & 0.072 & 0.105 & 0.067 & 0.110 & 0.073 \\
     & (3) & 50 & 50 & 100 &  0.103 & 0.062 & 0.113 & 0.061 & 0.113 & 0.063 & 0.097 & 0.060 & 0.100 & 0.057 & 0.098 & 0.059  \\
     & (3) & 50 & 50 & 200 & 0.108 & 0.062 & 0.115 & 0.062 & 0.117 & 0.064 & 0.091 & 0.055 & 0.093 & 0.056 & 0.090 & 0.055  \\
\hline
\multirow{12}{*}{Ex \ref{eg2:ed}} 
    & (1) & 50 & 50 & 50 & 1& 1& 1& 1& 0.995 & 0.994 & 0.102 & 0.067 & 0.111 & 0.069 & 0.103 & 0.066 \\
    & (1) & 50 & 50 & 100 & 1& 1& 1& 1& 1& 1& 0.120 & 0.066 & 0.120 & 0.071 & 0.119 & 0.066  \\
     & (1) & 50 & 50 & 200 & 1& 1& 1& 1& 1& 1& 0.111 & 0.057 & 0.111 & 0.057 & 0.111 & 0.057   \\
     & (2) & 50 & 50 & 50 &  1& 1& 1& 1& 1& 1& 0.126 & 0.085 & 0.154 & 0.105 & 0.119 & 0.073   \\
     & (2) & 50 & 50 & 100 & 1& 1& 1& 1& 1& 1& 0.098 & 0.058 & 0.108 & 0.066 & 0.094 & 0.055  \\
     & (2) & 50 & 50 & 200 & 1& 1& 1& 1& 1& 1& 0.111 & 0.055 & 0.114 & 0.056 & 0.108 & 0.054 \\
     & (3) & 50 & 50 & 50 &  1& 1& 1& 1& 1& 0.999& 0.118 & 0.069 & 0.117 & 0.072 & 0.120 & 0.070    \\
     & (3) & 50 & 50 & 100 & 1& 1& 1& 1& 1& 1& 0.102 & 0.067 & 0.106 & 0.065 & 0.103 & 0.067  \\
     & (3) & 50 & 50 & 200 & 1& 1& 1& 1& 1& 1& 0.103 & 0.046 & 0.103 & 0.049 & 0.102 & 0.046 \\
     & (4) & 50 & 50 & 50 &  0.452 & 0.328 & 0.863 & 0.771 & 0.552 & 0.421 & 0.114 & 0.061 & 0.111 & 0.061 & 0.114 & 0.061    \\
     & (4) & 50 & 50 & 100 & 0.640 & 0.491 & 0.990 & 0.967 & 0.761 & 0.637 & 0.098 & 0.063 & 0.104 & 0.063 & 0.098 & 0.062  \\
     & (4) & 50 & 50 & 200 & 0.840 & 0.733 & 1& 0.999 & 0.933 & 0.876 & 0.105 & 0.042 & 0.108 & 0.042 & 0.105 & 0.043 \\
     & (5) & 50 & 50 & 50 &  1& 1& 1& 1& 1& 1& 0.128 & 0.078 & 0.163 & 0.098 & 0.115 & 0.077   \\
     & (5) & 50 & 50 & 100 & 1& 1& 1& 1& 1& 1& 0.098 & 0.053 & 0.115 & 0.063 & 0.091 & 0.051  \\
     & (5) & 50 & 50 & 200 & 1& 1& 1& 1& 1& 1& 0.100 & 0.050 & 0.103 & 0.054 & 0.098 & 0.050  \\

\hline
\multirow{6}{*}{Ex \ref{eg3:ed}} & (1) & 50 & 50 & 50 & 1& 1& 1& 1& 1 & 1 & 0.157 & 0.098 & 0.223 & 0.137 & 0.156 & 0.098 \\
    & (1) & 50 & 50 & 100 & 1& 1& 1& 1& 1& 1& 0.158 & 0.089 & 0.188 & 0.124 & 0.157 & 0.090 \\
     & (1) & 50 & 50 & 200 & 1& 1& 1& 1& 1& 1& 0.122 & 0.074 & 0.161 & 0.091 & 0.121 & 0.074   \\
     & (2) & 50 & 50 & 50 &  1& 1& 1& 1& 1& 1& 0.140 & 0.078 & 0.190 & 0.118 & 0.137 & 0.075   \\
     & (2) & 50 & 50 & 100 & 1& 1& 1& 1& 1& 1& 0.139 & 0.080 & 0.171 & 0.105 & 0.136 & 0.080   \\
     & (2) & 50 & 50 & 200 & 1& 1& 1& 1& 1& 1& 0.109 & 0.053 & 0.127 & 0.069 & 0.108 & 0.053 \\

\toprule
\end{tabular}
\\
\end{table}

\begin{remark}\label{rem on choice of d_i}
In Example \ref{eg3:ed}.1, marginally the $p$-many two-dimensional groups of $X$ and $Y$ are not identically distributed, but each of the $2p$ unidimensional components of $X$ and $Y$ have identical distributions. Consequently, choosing $d_i = 1$ for $1\leq i \leq p$ leads to trivial power of even our proposed tests, as is evident from Table \ref{table1:ed rem} below. This demonstrates that grouping allows us to detect a wider range of alternatives.

\begin{table}[H]\scriptsize
\centering
\caption{Empirical power in Example \ref{eg3:ed}.1 if we choose $d_i = 1$ for $1\leq i \leq p$.}
\label{table1:ed rem}
\begin{tabular}{c cccc cc cc cc cc cc cc}
\toprule
&&&&&\multicolumn{2}{c}{I}&\multicolumn{2}{c}{II}&\multicolumn{2}{c}{III}&\multicolumn{2}{c}{IV}&\multicolumn{2}{c}{V}&\multicolumn{2}{c}{VI}
\\ \cmidrule(r){6-7}\cmidrule(r){8-9}\cmidrule(r){10-11}\cmidrule(r){12-13}\cmidrule(r){14-15}\cmidrule(r){16-17}
& & $n$ & $m$ & $p$ &
10\% & 5\% & 10\% & 5\% & 10\% & 5\% & 10\% & 5\% & 10\% & 5\% & 10\% & 5\%  \\
\hline
\multirow{3}{*}{Ex \ref{eg3:ed}} & (1) & 50 & 50 & 50 & 0.144 & 0.087 & 0.133 & 0.076 & 0.143 & 0.086 & 0.174 & 0.107 & 0.266 & 0.170 & 0.175 & 0.105 \\
    & (1) & 50 & 50 & 100 & 0.145 & 0.085 & 0.134 & 0.070 & 0.142 & 0.085 & 0.157 & 0.098 & 0.223 & 0.137 & 0.156 & 0.098\\
     & (1) & 50 & 50 & 200 & 0.126 & 0.063 & 0.101 & 0.058 & 0.111 & 0.065 & 0.158 & 0.089 & 0.188 & 0.124 & 0.157 & 0.090   \\

\toprule
\end{tabular}
\\
\end{table} 

\end{remark}

\subsection{Testing for independence}
 We study the empirical size and power of tests for independence between two high dimensional random vectors.
 We consider the t-tests based on the following metrics:
\begin{enumerate}[I.]
    \item $\cal{D}$ with $d_i=1$ and $\rho_i$ be the Euclidean distance for $1\leq i\leq p$;
    \item $\cal{D}$ with $d_i=1$ and $\rho_i$ be the distance induced by the Laplace kernel for $1\leq i\leq p$;
    \item $\cal{D}$ with $d_i=1$ and $\rho_i$ be the distance induced by the Gaussian kernel for $1\leq i\leq p$;
    \item the usual Euclidean distance covariance;
    \item HSIC with the Laplace kernel;
    \item HSIC with the Gaussian kernel.
\end{enumerate}
 The numerical examples we consider are motivated from Zhu et al. (2019).

\begin{example}\label{eg1}
Consider $X_k = (X_{k1}, \dots, X_{kp})$\, and \,$Y_k = (Y_{k1}, \dots, Y_{kp})$ for $k=1,\dots, n$. We generate i.i.d. samples from the following models :
\begin{enumerate}
\item $X_k \sim N(0, I_p)$ \,and \,$Y_k \sim N(0, I_p)$.
\item $X_k \sim AR(1), \phi = 0.5$, $Y_k \sim AR(1), \phi = -0.5$, where $AR(1)$ denotes the autoregressive model of order $1$ with parameter $\phi$.
\item $X_k \sim N(0, \Sigma)$ and $Y_k \sim N(0, \Sigma)$, where $\Sigma = (\sigma_{ij})_{i,j=1}^p$ with $\sigma_{ij} = 0.7^{|i-j|}$.
\end{enumerate}
\end{example}

\begin{example}\label{eg2}
Consider $X_k = (X_{k1}, \dots, X_{kp})$\, and \,$Y_k = (Y_{k1}, \dots, Y_{kp})$, $k=1,\dots, n$. We generate i.i.d. samples from the following models :
\begin{enumerate}
\item $X_k \sim N(0, I_p)$ \,and \,$Y_{kj} = X_{kj}^2$ for $j=1,\dots, p$.
\item $X_k \sim N(0, I_p)$ \,and \,$Y_{kj} = \log|X_{kj}|$ for $j=1,\dots, p$.
\item $X_k \sim N(0, \Sigma)$ and $Y_{kj} = X_{kj}^2$ for $j=1,\dots, p$, where $\Sigma = (\sigma_{ij})_{i,j=1}^p$ with $\sigma_{ij} = 0.7^{|i-j|}$.
\end{enumerate}
\end{example}

\begin{example}\label{eg3}
Consider $X_k = (X_{k1}, \dots, X_{kp})$\, and \,$Y_k = (Y_{k1}, \dots, Y_{kp})$, $k=1,\dots, n$. Let $\circ$ denote the Hadamard product of matrices. We generate i.i.d. samples from the following models:
\begin{enumerate}
\item $X_{kj} \sim U(-1,1)$ for $j=1,\dots, p$, and $Y_k = X_k \circ X_k$.
\item $X_{kj} \sim U(0,1)$ for $j=1,\dots, p$, and $Y_{k} = 4 X_k \circ X_k - 4X_k +2$.
\item $X_{kj} = \sin(Z_{kj})$ and $Y_{kj} = \cos(Z_{kj})$ with $Z_{kj} \sim U(0,2\pi)$ and $j=1,\dots,p$.
\end{enumerate}
\end{example}

For each example, we draw $1000$ simulated datasets and perform tests for independence between the two variables based on the proposed dependence metrics, and the usual Euclidean dCov and HSIC. 
We try a small sample size $n=50$ and dimensions $p=50, 100$ and $200$. For the tests II, III, V and VI, we chose the bandwidth parameter heuristically as the median distance between the sample observations. Table \ref{table1}  reports the proportion of rejections out of the $1000$ simulation runs for the different tests.

In Example \ref{eg1}, the data generating scheme suggests that the variables $X$ and $Y$ are independent. The results in Table \ref{table1} show that the tests based on the proposed dependence metrics perform almost equally good as the other competitors, and the rejection probabilities are quite close to the $10\%$ or $5\%$ nominal level. In Examples \ref{eg2} and \ref{eg3}, the variables are clearly (componentwise non-linearly) dependent by virtue of the data generating scheme. The results indicate that the tests based on the proposed dependence metrics are able to detect the componentwise non-linear dependence between the two high-dimensional random vectors unlike the tests based on the usual Euclidean dCov and HSIC, and thereby outperform the latter in terms of empirical power.

\begin{table}\scriptsize
\centering
\caption{Empirical size and power for the different tests of independence.}
\label{table1}
\begin{tabular}{c ccc cc cc cc cc cc cc}
\toprule
&&&&\multicolumn{2}{c}{I}&\multicolumn{2}{c}{II}&\multicolumn{2}{c}{III}&\multicolumn{2}{c}{IV} &\multicolumn{2}{c}{V} &\multicolumn{2}{c}{VI}
\\ \cmidrule(r){5-6}\cmidrule(r){7-8}\cmidrule(r){9-10}\cmidrule(r){11-12}\cmidrule(r){13-14}\cmidrule(r){15-16}
& & $n$ & $p$ &
10\% & 5\% & 10\% & 5\% & 10\% & 5\% & 10\% & 5\% & 10\% & 5\% & 10\% & 5\% \\
\hline
\multirow{9}{*}{Ex \ref{eg1}} & (1) & 50 & 50 & 0.115 & 0.053 & 0.109 & 0.055 & 0.106 & 0.053 & 0.112 & 0.060 & 0.112 & 0.053 & 0.111 & 0.061   \\
& (1) & 50 & 100 &  0.106 & 0.057 & 0.090 & 0.046 & 0.095 & 0.048 & 0.111 & 0.060 & 0.112 & 0.059 & 0.113 & 0.060  \\
    & (1) & 50 & 200 & 0.076 & 0.031 & 0.084 & 0.046 & 0.084 & 0.042 & 0.096 & 0.035 & 0.090 & 0.038 & 0.095 & 0.035  \\
    & (2) & 50 & 50 & 0.101 & 0.052 & 0.096 & 0.061 & 0.094 & 0.053 & 0.096 & 0.050 & 0.103 & 0.054 & 0.096 & 0.052   \\
    & (2) & 50 & 100 & 0.080 & 0.036 & 0.083 & 0.035 & 0.086 & 0.042 & 0.081 & 0.041 & 0.088 & 0.044 & 0.083 & 0.041  \\
     & (2) & 50 & 200 &  0.117 & 0.051 & 0.098 & 0.056 & 0.103 & 0.052 & 0.104 & 0.048 & 0.103 & 0.052 & 0.106 & 0.048  \\
     & (3) & 50 & 50 & 0.093 & 0.056 & 0.098 & 0.052 & 0.097 & 0.056 & 0.091 & 0.052 & 0.080 & 0.050 & 0.087 & 0.052   \\
     & (3) & 50 & 100 & 0.104 & 0.052 & 0.085 & 0.046 & 0.091 & 0.054 & 0.104 & 0.048 & 0.105 & 0.051 & 0.102 & 0.048 \\
     & (3) & 50 & 200 & 0.105 & 0.059 & 0.110 & 0.057 & 0.103 & 0.051 & 0.106 & 0.055 & 0.099 & 0.052 & 0.105 & 0.056 \\
\hline
\multirow{9}{*}{Ex \ref{eg2}} & (1) & 50 & 50 &  1 & 1 & 1 & 1& 1& 1& 0.267 & 0.172 & 0.534 & 0.398 & 0.277 & 0.182  \\
    & (1) & 50 & 100 &  1& 1& 1& 1& 1& 1& 0.171 & 0.102 & 0.284 & 0.180 & 0.167 & 0.102 \\
    & (1) & 50 & 200 & 1& 1& 1& 1& 1& 1& 0.130 & 0.075 & 0.194 & 0.108 & 0.128 & 0.073  \\
    & (2) & 50 & 50 & 1& 1& 1& 1& 1& 1& 0.154 & 0.092 & 0.199 & 0.130 & 0.154 & 0.091  \\
    & (2) & 50 & 100 & 1& 1& 1& 1& 1& 1& 0.109 & 0.050 & 0.128 & 0.064 & 0.108 & 0.049  \\
     & (2) & 50 & 200 & 1& 1& 1& 1& 1& 1& 0.099 & 0.057 & 0.107 & 0.060 & 0.097 & 0.057  \\
     & (3) & 50 & 50 & 1& 1& 1& 1& 1& 1& 0.654 & 0.546 & 0.981 & 0.959 & 0.708 & 0.631   \\
     & (3) & 50 & 100 &  1& 1& 1& 1& 1& 1& 0.418 & 0.309 & 0.790 & 0.700 & 0.455 & 0.343  \\
     & (3) & 50 & 200 & 1& 1& 1& 1& 1& 1& 0.277 & 0.188 & 0.504 & 0.391 & 0.284 & 0.193  \\
     \hline
\multirow{9}{*}{Ex \ref{eg3}} & (1) & 50 & 50 & 1& 1& 1& 1& 1& 1& 0.129 & 0.072 & 0.193 & 0.105 & 0.130 & 0.071  \\
    & (1) & 50 & 100 & 1& 1& 1& 1& 1& 1& 0.145 & 0.069 & 0.158 & 0.091 & 0.145 & 0.069   \\
    & (1) & 50 & 200 & 1& 1& 1& 1& 1& 1& 0.113 & 0.065 & 0.123 & 0.067 & 0.113 & 0.065 \\
    & (2) & 50 & 50 & 1& 1& 1& 1& 1& 1& 0.129 & 0.072 & 0.193 & 0.105 & 0.130 & 0.071 \\

    & (2) & 50 & 100 & 1& 1& 1& 1& 1& 1& 0.145 & 0.069 & 0.158 & 0.091 & 0.145 & 0.069  \\
     & (2) & 50 & 200 & 1& 1& 1& 1& 1& 1& 0.113 & 0.065 & 0.123 & 0.067 & 0.113 & 0.065  \\
     & (3) & 50 & 50 & 0.540 & 0.388 & 1& 1& 0.859 & 0.760 & 0.110 & 0.057 & 0.108 & 0.063 & 0.111 & 0.056  \\

     & (3) & 50 & 100 & 0.550 & 0.416 & 1 & 1 & 0.857 & 0.761 & 0.108 & 0.063 & 0.112 & 0.063 & 0.108 & 0.062   \\
     & (3) & 50 & 200 & 0.542 & 0.388 & 1& 1& 0.872 & 0.765 & 0.106 & 0.049 & 0.111 & 0.051 & 0.106 & 0.050  \\

\toprule
\end{tabular}
\\
\vspace{0.02in}
\end{table}

\subsection{Real data analysis}\label{sub:real}
\vspace{0.1in}

\subsubsection{Testing for homogeneity of distributions}

We consider the two sample testing problem of homogeneity of two high-dimensional distributions on Earthquakes data. The dataset has been downloaded from UCR Time Series Classification Archive (\url{https://www.cs.ucr.edu/~eamonn/time_series_data_2018/}). The data are taken from Northern California Earthquake Data Center. There are 368 negative and 93 positive earthquake events and each data point is of length 512.

Table \ref{table:real} shows the p-values corresponding to the different tests for the homogeneity of distributions between the two classes. Here we set $d_i=1$ for tests I-III.
Clearly the tests based on the proposed homogeneity metrics reject the null hypothesis of equality of distributions at $5\%$ level. However the tests based on the usual Euclidean energy distance and MMD fail to reject the null at $5\%$ level, thereby indicating no significant difference between the distributions of the two classes.

\begin{table}[!h]\footnotesize
\centering
\caption{p-values corresponding to the different tests for homogeneity of distributions for Earthquakes data.}
\label{table:real}
\begin{tabular}{cccccc}
\toprule
I & II & III & IV & V & VI \\
\hline
$2.27\times 10^{-93}$ & $3.19\times 10^{-86}$ & $9.74\times 10^{-110}$ & $0.070$ & $0.068$ & $0.070$\\
\toprule
\end{tabular}
\end{table}

\subsubsection{Testing for independence}

We consider the daily closed stock prices of $p=127$ companies under the finance sector and $q=125$ companies under the healthcare sector on the first dates of each month during the time period between January 1, 2017 and December 31, 2018. The data has been downloaded from Yahoo Finance via the R package `quantmod'. At each time $t$, denote the closed stock prices of these companies from the two different sectors by $X_t = (X_{1 t}, \dots, X_{p t})$ and $Y_t = (Y_{1 t}, \dots, Y_{q t})$ for $1\leq t \leq 24$. We consider the stock returns $S^X_t = (S^X_{1 t}, \dots, S^X_{p t})$ and $S^Y_t = (S^Y_{1 t}, \dots, S^Y_{q t})$ for $1\leq t \leq 23$, where $S^X_{i t} = \log \frac{X_{i, t+1}}{X_{i t}}$ and $S^Y_{j t} = \log \frac{Y_{j, t+1}}{Y_{j t}}$ for $1\leq i \leq p$ and $1\leq j \leq q$. It seems intuitive that the stock returns for the companies under two different sectors are not totally independent, especially when a large number of companies are being considered. Table \ref{table:real independence} shows the p-values corresponding to the different tests for independence between $\{S^X_t\}_{t=1}^{23}$ and $\{S^Y_t\}_{t=1}^{23}$, where we set $d_i=g_i=1$ for the proposed tests. The tests based on the proposed dependence metrics deliver much smaller p-values compared to the tests based on traditional metrics. We note that the tests based on the usual dCov and HSIC with the Laplace kernel fail to reject the null at $5\%$ level, thereby indicating cross-sector independence of stock return values. These results are consistent with the fact that the dependence among financial asset returns is usually nonlinear and thus cannot be fully characterized by traditional metrics in the high dimensional setup.

\begin{table}[!h]\footnotesize
\centering
\caption{p-values corresponding to the different tests for cross-sector independence of stock returns data.}
\label{table:real independence}
\begin{tabular}{cccccc}
\toprule
I & II & III & IV & V & VI \\
\hline
$5.70\times 10^{-13}$ & $2.36\times 10^{-10}$ & $7.99\times 10^{-11}$ & $0.120$ & $0.093$ & $0.040$\\
\toprule
\end{tabular}
\end{table}

We present an additional real data example on testing for independence in high dimensions in Section \ref{addl data ex} of the supplement. There the data admits a natural grouping, and our results indicate that our proposed tests for independence exhibit better power when we consider the natural grouping than when we consider unit group sizes. It is to be noted that considering unit group sizes makes our proposed statistics essentially equivalent to the marginal aggregation approach proposed by Zhu et al.\,(2019). This indicates that grouping or clustering might improve the power of testing as they are capable of detecting a wider range of dependencies.

\section{Discussions}
In this paper, we introduce a family of distances for high dimensional Euclidean spaces. Built on the new distances, we propose a class of distance and kernel-based metrics for high-dimensional two-sample and independence testing. The proposed metrics overcome certain limitations of the traditional metrics constructed based on the Euclidean distance. The new distance we introduce corresponds to a semi-norm given by
$$B(x)=\sqrt{\rho_1(x_{(1)})+\dots,\rho_p(x_{(p)})},$$
where $\rho_i(x_{(i)})=\rho_i(x_{(i)},0_{d_i})$ and $x = (x_{(1)},\dots,x_{(p)})\in\mathbb{R}^{\tilde{p}}$ with $x_{(i)}=(x_{i,1},\dots,x_{i,d_i}).$ Such a semi-norm has an interpretation based on a tree as illustrated by Figure \ref{fig}.

\begin{figure}[!h]
\caption{An interpretation of the semi-norm $B(\cdot)$ based on a tree}\label{fig}
\centering
\includegraphics[height=6cm,width=10cm]{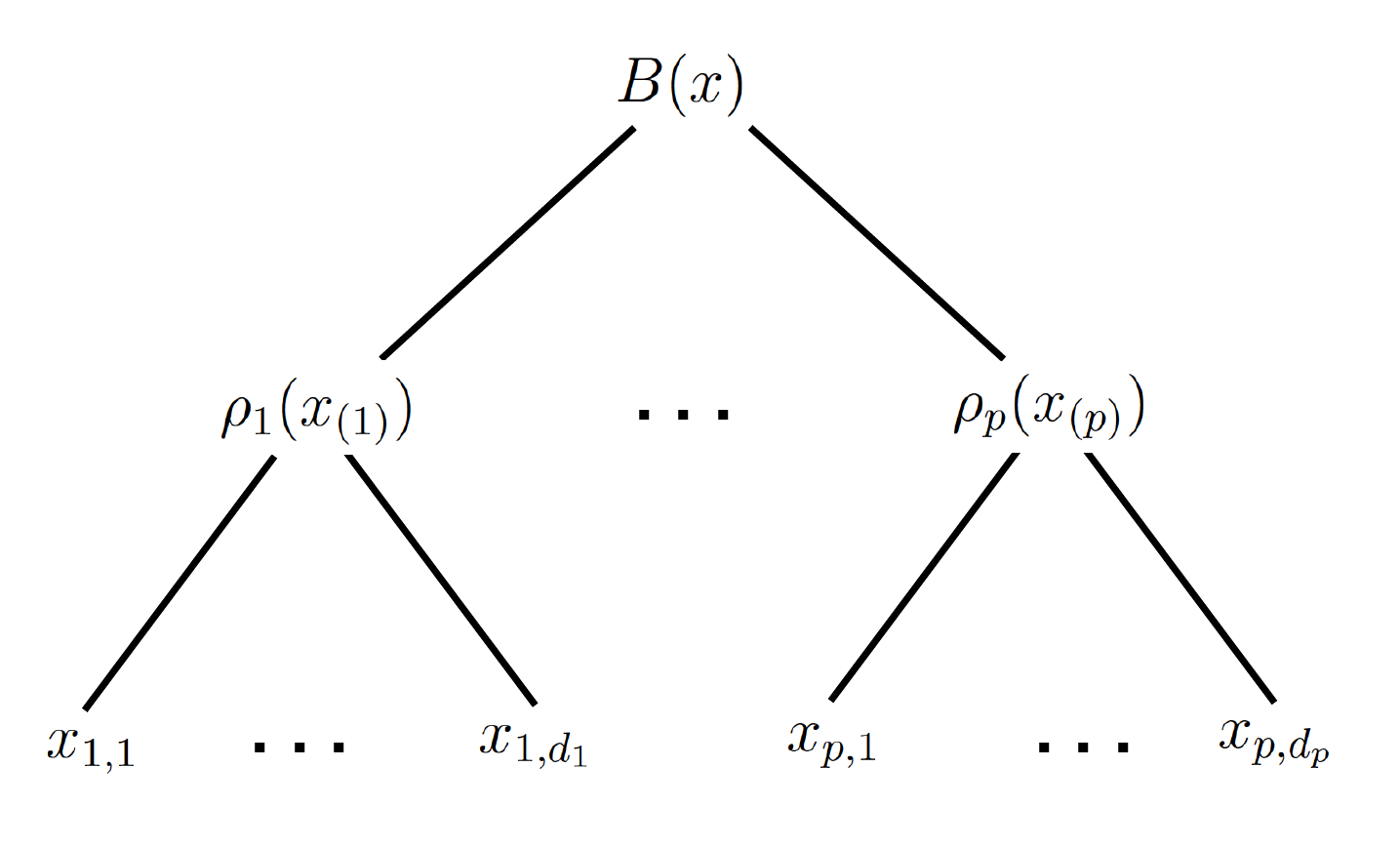}
\end{figure}

Tree structure provides useful information for doing grouping at different levels/depths. Theoretically, grouping allows us to detect a wider range of alternatives. For example, in two-sample testing, the difference between two one-dimensional marginals is always captured by the difference between two higher dimensional marginals that contain the two one-dimensional marginals. The same thing is true for dependence testing. Generally, one would like to find blocks which are nearly independent, but the variables inside a block have significant dependence among themselves. It is interesting to develop an algorithm for finding the optimal groups using the data and perhaps some auxiliary information. Another interesting direction is to study the semi-norm and distance constructed based on a more sophisticated tree structure. For example, in microbiome-wide association studies, phylogenetic tree or evolutionary tree which is a branching diagram or ``tree'' showing the evolutionary relationships among various biological species. Distance and kernel-based metrics constructed based on the distance utilizing the phylogenetic tree information is expected to be more powerful in signal detection. We leave these topics for future investigation.

{\small

}


\vspace{0.1in}

\begin{alphasection}

\begin{center}
{\bf \Large Supplement to ``A New Framework for Distance and Kernel-based Metrics in High Dimensions"}\\
\vspace{0.1in}
\large Shubhadeep Chakraborty\\ \vspace{-0.1in} Department of Statistics, Texas A\&M University\\ \vspace{-0.1in} and \\ \vspace{-0.1in} Xianyang Zhang\\ \vspace{-0.1in} Department of Statistics, Texas A\&M University\\
\end{center}
\spacingset{1.45} 

\vspace{0.3cm}
The supplement is organized as follows. In Section \ref{ld app} we 
explore our proposed homogeneity and dependence metrics in the low-dimensional setup. In Section \ref{HDMSS app} we study the asymptotic behavior of our proposed homogeneity and dependence metrics in the high dimension medium sample size (HDMSS) framework where both the dimension(s) and the sample size(s) grow. Section \ref{addl data ex} illustrates an additional real data example for testing for independence in the high-dimensional framework. Finally, Section \ref{technical} contains additional proofs of the main results in the paper and Sections \ref{ld app} and \ref{HDMSS app} in the supplement.

\section{Low-dimensional setup}\label{ld app}

In this section we illustrate that the new class of homogeneity metrics proposed in this paper inherits all the nice properties of generalized energy distance and MMD in the low-dimensional setting. Likewise, the proposed dependence metrics inherit all the desirable properties of generalized dCov and HSIC in the low-dimensional framework.

\subsection{Homogeneity metrics}\label{ld E}
Note that in either Case S1 or S2, the Euclidean space equipped with distance $K$ is of strong negative type.
As a consequence, we have the following result.
\begin{theorem}\label{ED HD homo}
$\cal{E}(X,Y) = 0$ if and only if $X \overset{d}{=} Y$, in other words $\cal{E}(X,Y)$ completely characterizes the homogeneity of the distributions of $X$ and $Y$.
\end{theorem}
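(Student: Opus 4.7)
The plan is to recognize that $\cal{E}(X,Y)$ is nothing but the generalized energy distance $ED_K(X,Y)$ defined in (\ref{ed def general}), with the role of the abstract metric played by the distance $K=K_{\mathbf{d}}$ on $\mathbb{R}^{\tilde p}$. Once this identification is made, the theorem reduces to invoking the characterization of homogeneity via generalized energy distance on metric spaces of strong negative type (see Proposition 3 in Lyons (2013) and the discussion following Definition \ref{negative type}), provided the relevant integrability conditions are verified.

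First, I would check that $\cal{E}(X,Y)$ is well-defined and finite. Assumption \ref{ass0} plus the elementary inequality $\rho_i(x_{(i)},x_{(i)}')\leq 2\rho_i(x_{(i)},0_{d_i})+2\rho_i(x_{(i)}',0_{d_i})$ (after taking square roots and applying Cauchy--Schwarz) gives
\begin{equation*}
\E\,K_{\mathbf{d}}(X,X')\leq \E\Bigl(\textstyle\sum_{i=1}^{p}\rho_i(X_{(i)},X_{(i)}')\Bigr)^{1/2}\leq \sum_{i=1}^{p}\E\rho_i^{1/2}(X_{(i)},X_{(i)}')<\infty,
\end{equation*}
and similarly for the cross and $Y$ terms; hence $P_X,P_Y\in\mathcal{M}_1(\mathbb{R}^{\tilde p})\cap\mathcal{M}_{K_{\mathbf{d}}}^1(\mathbb{R}^{\tilde p})$.

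Next, I would verify that $(\mathbb{R}^{\tilde p},K_{\mathbf{d}})$ is of strong negative type in both Cases S1 and S2. For Case S1 this is exactly the content of Proposition \ref{metric}(2). For Case S2, we have $\rho_i(x_{(i)},x_{(i)}')=\|x_{(i)}-x_{(i)}'\|^2$, so
\begin{equation*}
K_{\mathbf{d}}(x,x')=\sqrt{\textstyle\sum_{i=1}^{p}\|x_{(i)}-x_{(i)}'\|^2}=\|x-x'\|,
\end{equation*}
which is the usual Euclidean distance on $\mathbb{R}^{\tilde p}$; since $(\mathbb{R}^{\tilde p},\|\cdot\|)$ is a separable Hilbert space, Theorem 3.16 of Lyons (2013) ensures strong negative type.

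Finally, I would invoke the general equivalence (Theorem 1 of Sz\'ekely et al. (2005) or Theorem 3.20 of Lyons (2013)): for any metric space $(\mathcal{X},d)$ of strong negative type and any $P_X,P_Y\in\mathcal{M}_1(\mathcal{X})\cap\mathcal{M}_d^1(\mathcal{X})$, the generalized energy distance $ED_d(X,Y)$ is nonnegative and equals zero if and only if $P_X=P_Y$. Applying this with $d=K_{\mathbf{d}}$ yields $\cal{E}(X,Y)=0\iff X\stackrel{d}{=}Y$, as claimed. I do not anticipate a substantive obstacle here, since all the machinery is already in place from Section \ref{new distance}; the proof is essentially a bookkeeping reduction to Lyons' characterization, the only mild point being the moment verification under Assumption \ref{ass0}.
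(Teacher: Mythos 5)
Your proposal is correct and follows essentially the same route as the paper: the paper's argument is precisely that $(\mathbb{R}^{\tilde p},K_{\mathbf{d}})$ is of strong negative type in both Cases S1 (via Proposition \ref{metric}) and S2 (where $K_{\mathbf{d}}$ reduces to the Euclidean distance), so that $\cal{E}=ED_{K_{\mathbf{d}}}$ characterizes homogeneity by the standard result for generalized energy distance on spaces of strong negative type. Your added verification that Assumption \ref{ass0} places $P_X,P_Y$ in $\cal{M}_1\cap\cal{M}^1_{K_{\mathbf{d}}}$ is a sensible explicit step that the paper leaves implicit.
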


The following proposition shows that $\cal{E}_{n,m}(X,Y)$ is a two-sample U-statistic and an unbiased estimator of $\cal{E} (X,Y)$.

\begin{proposition}\label{prop ED U stat}
The U-statistic type estimator enjoys the following properties:
\begin{enumerate}
\item $\cal{E}_{n,m}$ is an unbiased estimator of the population $\cal{E}$.
\item $\cal{E}_{n,m}$ admits the following form : $$\cal{E}_{n,m}(X,Y) \;=\; \frac{1}{\binom{n}{2} \,\binom{m}{2}} \dis \sum_{1 \leq i < j \leq n}\, \sum_{1 \leq k < l \leq m} h(X_i, X_j ; Y_k, Y_l)\,,$$ where $$h(X_i, X_j ; Y_k, Y_l) \;=\; \frac{1}{2}\Big( K(X_i, Y_k) \,+\,K(X_i, Y_l)\,+\,K(X_j, Y_k)\,+\,K(X_j, Y_l)\Big) \,-\,  K(X_i, X_j) \,-\, K(Y_k, Y_l)\,.$$
\end{enumerate}
\end{proposition}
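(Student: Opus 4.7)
The proposition has two parts, both of which I would establish by direct calculation. The first claim is a statement about expectations; the second is a purely combinatorial identity. The plan is to verify part (1) by taking expectations term-by-term, and verify part (2) by computing the multiplicity with which each pairwise term $K(X_i, X_j)$, $K(Y_k, Y_l)$, and $K(X_a, Y_b)$ appears in the double sum on the right-hand side, then matching with the definition of $\cal{E}_{n,m}$ in (\ref{unif U est ED}).

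For part (1), since $\{X_k\}_{k=1}^n$ and $\{Y_l\}_{l=1}^m$ are i.i.d.\ samples from the marginal distributions of $X$ and $Y$ respectively, and the two samples are mutually independent, we have $\E K(X_k, Y_l) = \E K(X,Y)$ for all $k,l$, while $\E K(X_k, X_l) = \E K(X, X')$ for $k \neq l$ and $\E K(Y_k, Y_l) = \E K(Y, Y')$ for $k \neq l$. Assumption \ref{ass0} guarantees finiteness of these expectations. Applying linearity of expectation to the three sums in (\ref{unif U est ED}) yields $\E \cal{E}_{n,m}(X,Y) = 2\,\E K(X,Y) - \E K(X,X') - \E K(Y,Y') = \cal{E}(X,Y)$, establishing unbiasedness.

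For part (2), I would count the contribution of each individual kernel evaluation to the double sum $\sum_{i<j}\sum_{k<l} h(X_i, X_j; Y_k, Y_l)$. For any fixed $a \in \{1,\dots,n\}$ and $b \in \{1,\dots,m\}$, the number of pairs $(i,j)$ with $i<j$ containing $a$ is $n-1$, and similarly $m-1$ for $(k,l)$ containing $b$; hence $K(X_a, Y_b)$ appears with total weight $\tfrac{1}{2}(n-1)(m-1)$ in the sum of the four cross terms. Dividing by $\binom{n}{2}\binom{m}{2} = n(n-1)m(m-1)/4$ produces the coefficient $2/(nm)$, matching the first term of $\cal{E}_{n,m}$. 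Similarly, $K(X_i, X_j)$ for fixed $i<j$ appears with each of the $\binom{m}{2}$ pairs $(k,l)$, giving a normalized coefficient $1/\binom{n}{2} = 2/(n(n-1))$ when summed over unordered pairs, equivalently $1/(n(n-1))$ over ordered pairs $k \neq l$; an analogous argument handles the $K(Y_k, Y_l)$ term.

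There is no real obstacle here: both claims reduce to bookkeeping. The only care needed is to keep track of the factor of $1/2$ in $h$ together with the indicator structure of $i<j$ versus $i\neq j$, and to verify that the finite-moment hypothesis in Assumption \ref{ass0} justifies interchanging expectation and summation in part (1). Once the counting argument for part (2) matches the three normalizing constants $2/(nm)$, $1/(n(n-1))$, and $1/(m(m-1))$, the two displayed expressions for $\cal{E}_{n,m}$ coincide identically as functions of the data.
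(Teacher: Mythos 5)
Your proposal is correct and matches the paper's approach: the paper simply asserts that the unbiasedness and the two-sample U-statistic representation with kernel $h$ are "easy to verify," and your term-by-term expectation argument together with the multiplicity count (weight $\tfrac{1}{2}(n-1)(m-1)$ for each $K(X_a,Y_b)$ and $\binom{m}{2}$, resp.\ $\binom{n}{2}$, for each within-sample distance) is precisely that verification, with the normalizing constants checking out. No gaps.
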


The following theorem shows the asymptotic behavior of the U-statistic type estimator of $\cal{E}$ for fixed $p$ and growing $n$.


\begin{theorem}\label{th ED U stat}
Under Assumption \ref{ass0.5} and the assumption that $\sup_{1\leq i\leq p}\E \rho_i(X_{(i)},0_{d_i})< \infty$ and $\sup_{1\leq i\leq p}\E \rho_i(Y_{(i)},0_{d_i})< \infty$, as $m, n \to \infty$ with $p$ remaining fixed, we have the following:
\begin{enumerate}
\item $\cal{E}_{n,m} (X,Y) \,\overset{a.s.}{\longrightarrow}\, \cal{E} (X,Y)$.
\item When $X \overset{d}{=} Y$, $\cal{E}_{n,m}$ has degeneracy of order $(1,1)$, and 
$$\frac{(m-1)(n-1)}{n+m} \, \cal{E}_{n,m} (X,Y) \, \overset{d}{\longrightarrow} \,  \sum_{k=1}^{\infty} \lambda_k^2 \left(Z_k^2 \,-\, 1 \right)\,,$$ where $\{Z_k\}$ is a sequence of independent $N(0,1)$ random variables and $\lambda_k$'s depend on the distribution of $(X, Y)$.
\end{enumerate}
\end{theorem}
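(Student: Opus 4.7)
The plan is to exploit the representation given in Proposition \ref{prop ED U stat}: $\cal{E}_{n,m}(X,Y)$ is a two-sample U-statistic of degree $(2,2)$ with symmetric kernel $h$. Since $K^2(x,y)=\sum_{i=1}^{p}\rho_i(x_{(i)},y_{(i)})$, the moment hypothesis $\sup_{1\leq i\leq p}\E\rho_i(X_{(i)},0_{d_i})<\infty$ (and likewise for $Y$) together with Cauchy--Schwarz ensures $\E|h|<\infty$. Part 1 then follows immediately from the strong law of large numbers for two-sample U-statistics (Hoeffding): $\cal{E}_{n,m}(X,Y)\xrightarrow{a.s.}\E h=\cal{E}(X,Y)$.

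For part 2, I compute the first-order Hoeffding projections of $h$. With $h_{1,0}(x):=\E h(x,X_2;Y_1,Y_2)-\cal{E}(X,Y)$, a direct calculation using the explicit form of $h$ gives
\[
h_{1,0}(x)=\E K(x,Y)+\E K(X',Y)-\E K(x,X')-\E K(Y,Y')-\cal{E}(X,Y).
\]
Under $H_0: X\overset{d}{=}Y$ we have $\cal{E}(X,Y)=0$, $\E K(x,Y)=\E K(x,X')$ and $\E K(X',Y)=\E K(Y,Y')$, so $h_{1,0}\equiv 0$. A symmetric computation yields $h_{0,1}\equiv 0$. Hence the U-statistic is completely degenerate of order $(1,1)$, as claimed.

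With degeneracy in hand, I would invoke the classical limit theorem for degenerate two-sample U-statistics (Neuhaus (1977); see also Section 3.6 of Lee (1990)). Under Assumption \ref{ass0.5} and the assumed second-moment condition (which guarantees that the double-centered kernel
\[
\widetilde{K}(x,x'):=K(x,x')-\E K(x,X'')-\E K(X'',x')+\E K(X'',X''')
\]
is Hilbert--Schmidt with summable eigenvalues $\{\lambda_k^2\}$ of the associated integral operator $(Tf)(x)=\E[\widetilde{K}(x,X')f(X')]$), the rescaled statistic $\frac{(m-1)(n-1)}{m+n}\cal{E}_{n,m}(X,Y)$ converges in distribution to the weighted sum $\sum_{k=1}^{\infty}\lambda_k^2(Z_k^2-1)$, where the $Z_k$ are i.i.d. $N(0,1)$. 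The scaling $(m-1)(n-1)/(m+n)\sim mn/(m+n)$ is the standard normalization in the degenerate two-sample setting under $m/n\to\alpha_0$.

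The main obstacle is the spectral/Hilbert--Schmidt step: one must identify the second-order kernel in the Hoeffding decomposition of $h$ with (a multiple of) $\widetilde{K}$ under $H_0$, verify that it is square-integrable under the stated moment condition, and then justify the passage from the L$^2$-expansion of the degenerate component in terms of eigenfunctions to the stated chi-square series limit, uniformly controlling the tail. Everything else---identification of $h$, computation of projections, almost-sure convergence---is essentially bookkeeping from the two-sample U-statistic toolkit.
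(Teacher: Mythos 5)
Your proposal is correct and follows essentially the same route as the paper: verify integrability of the kernel $h$ from the moment assumptions and apply the SLLN for two-sample U-statistics (the paper cites Sen (1977)) for part 1, then compute the first-order projections, show they vanish under $H_0$, and invoke Neuhaus (1977) for the degenerate limit in part 2. The only small caveat is that the two-sample SLLN requires $\E\left[|h|\log^{+}|h|\right]<\infty$ rather than merely $\E|h|<\infty$; the paper checks this via $\E[h^2]<\infty$, which your moment bound (triangle/power-mean inequality applied to the $\rho_i$) in fact also delivers.
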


Proposition \ref{prop ED U stat}, Theorem \ref{ED HD homo} and Theorem \ref{th ED U stat} demonstrate that $\cal{E}$ inherits all the nice properties of generalized energy distance and MMD in the low-dimensional setting.

\subsection{Dependence metrics}\label{ld D}

Note that Proposition \ref{metric} in Section \ref{new distance} and Proposition 3.7 in Lyons\,(2013) ensure that $\cal{D}(X,Y)$ completely characterizes independence between $X$ and $Y$, which leads to the following result.

\begin{theorem}\label{characterization}
Under Assumption \ref{ass1}, $\cal{D}(X,Y) = 0$ if and only if $X \bigCI Y$.
\end{theorem}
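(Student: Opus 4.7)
The plan is to reduce the claim to the characterization of independence via generalized distance covariance on metric spaces of strong negative type, namely Theorem 3.11 (resp.\ Proposition 3.7) of Lyons\,(2013). First I would observe that the quantity $\cal{D}^2(X,Y)$ defined in (\ref{ACdcov def}) is exactly the generalized distance covariance $D^2_{d_{\cal{X}},d_{\cal{Y}}}(X,Y)$ of (\ref{dCov Lyons}) with the choices $d_{\cal{X}}=K_{\mathbf{d}}$ and $d_{\cal{Y}}=K_{\mathbf{g}}$. This is an immediate comparison of the two formulas.

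Next I would verify that the hypotheses of Lyons' theorem are satisfied. On the metric side, Proposition \ref{metric} already establishes in Case S1 that both $(\mathbb{R}^{\tilde p},K_{\mathbf{d}})$ and $(\mathbb{R}^{\tilde q},K_{\mathbf{g}})$ are valid metric spaces of strong negative type; in Case S2, $K_{\mathbf{d}}$ and $K_{\mathbf{g}}$ reduce to the usual Euclidean distances, which are of strong negative type by Theorem 3.16 of Lyons\,(2013) as separable Hilbert spaces. On the integrability side, I would show that Assumption \ref{ass1} implies $P_X\in\mathcal{M}_1(\mathbb{R}^{\tilde p})\cap\mathcal{M}^1_{K_{\mathbf{d}}}(\mathbb{R}^{\tilde p})$ and analogously for $P_Y$: using subadditivity of $\sqrt{\cdot}$ applied to (\ref{Kdef}) gives
\begin{equation*}
K_{\mathbf{d}}(X,0_{\tilde p})\;\leq\;\sum_{i=1}^{p}\sqrt{\rho_i(X_{(i)},0_{d_i})},
\end{equation*}
so $\E\,K_{\mathbf{d}}(X,0_{\tilde p})\leq p\sup_{1\leq i\leq p}\E\,\rho_i^{1/2}(X_{(i)},0_{d_i})<\infty$ since $p$ is fixed in the low-dimensional setting, and similarly $\E\,K_{\mathbf{g}}(Y,0_{\tilde q})<\infty$.

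With these two ingredients in place, Theorem 3.11 of Lyons\,(2013) directly yields $\cal{D}(X,Y)=D_{K_{\mathbf{d}},K_{\mathbf{g}}}(X,Y)=0$ if and only if $X\bigCI Y$, which is the desired conclusion. The only step requiring care is the verification that $\cal{D}^2$ is well defined and finite under Assumption \ref{ass1}, together with the integrability required by Lyons' framework; both reduce to the bound displayed above. In that sense the proof is essentially a bookkeeping exercise combining Proposition \ref{metric} with Lyons' characterization, and no substantive obstacle is anticipated.
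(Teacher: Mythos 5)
Your proposal is correct and follows essentially the same route as the paper: the paper's own justification is precisely the combination of Proposition \ref{metric} (strong negative type of $(\mathbb{R}^{\tilde p},K_{\mathbf{d}})$ and $(\mathbb{R}^{\tilde q},K_{\mathbf{g}})$) with Lyons' characterization of independence via generalized distance covariance (the paper cites Proposition 3.7 of Lyons\,(2013), you cite Theorem 3.11 — the same underlying result). Your explicit verification of the integrability condition $\E\,K_{\mathbf{d}}(X,0_{\tilde p})<\infty$ via subadditivity of the square root is a detail the paper leaves implicit, but it is correct and does not change the argument.
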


The following proposition shows that $\widetilde{\cal{D}^2_n}(X,Y)$ is an unbiased estimator of $\cal{D}^2 (X,Y)$ and is a U-statistic of order four.

\begin{proposition}\label{unbiased and ustat form} The U-statistic type estimator $\widetilde{\cal{D}^2_n}$ (defined in (\ref{ustat dcov}) in the main paper) has the following properties:
\begin{enumerate}
\item $\widetilde{\cal{D}^2_n}$ is an unbiased estimator of the squared population $\cal{D}^2$.
\item $\widetilde{\cal{D}^2_n}$ is a fourth-order U-statistic which admits the following form: $$ \widetilde{\cal{D}^2_n} \;=\; \frac{1}{\binom{n}{4}} \dis \sum_{i<j<k<l} h_{i,j,k,l}\,,$$ where \begin{align*}
h_{i,j,k,l} \,&= \, \frac{1}{4!} \dis \sum_{(s,t,u,v)}^{(i,j,k,l)} (d^X_{st} d^Y_{st} + d^X_{st} d^Y_{uv} - 2 d^X_{st} d^Y_{su}) \\ &= \, \frac{1}{6} \dis \sum_{s<t, u<v}^{(i,j,k,l)} (d^X_{st} d^Y_{st} + d^X_{st} d^Y_{uv}) - \frac{1}{12} \dis \sum_{(s,t,u)}^{(i,j,k,l)} d^X_{st} d^Y_{su}\,,
\end{align*}
the summation is over all possible permutations of the $4$-tuple of indices $(i,j,k,l)$.
For example, when $(i,j,k,l)=(1,2,3,4)$, there exist 24 permutations, including $(1,2,3,4),\dots,(4,3,2,1)$. Furthermore, $\widetilde{\cal{D}^2_n}$ has degeneracy of order 1 when $X$ and $Y$ are independent.
\end{enumerate}
\end{proposition}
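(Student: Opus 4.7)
The plan is to establish both claims by adapting the strategy of Sz\'ekely and Rizzo (2014, Proposition 1) for classical distance covariance, observing that the algebraic manipulations there depend only on the bivariate symmetry of the underlying distance and not on its Euclidean origin. Hence the same identities go through verbatim with $d^X_{kl} = K_{\mathbf d}(X_k,X_l)$ and $d^Y_{kl} = K_{\mathbf g}(Y_k,Y_l)$ in place of the Euclidean entries. I would treat the two claims jointly: first derive the $h$-kernel representation, and then read off unbiasedness and degeneracy from it.

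For the U-statistic representation, I would start from the definition $\widetilde{\cal{D}^2_n}=\frac{1}{n(n-3)}\sum_{k\neq l}\tilde a_{kl}\tilde b_{kl}$, substitute the $\cal U$-centering formula (\ref{U centering def}) for $\tilde a_{kl}$ only (keeping $\tilde b_{kl}$ as is), and use the identity $\sum_{l\neq k}\tilde b_{kl}=0$ (a direct consequence of $\cal U$-centering) to cancel the ``row,'' ``column,'' and ``grand'' averages that involve only one of the two centered matrices. What remains is a linear combination of three kinds of sums: $\sum_{k\neq l}d^X_{kl}d^Y_{kl}$, $\sum_{(s,t,u)\text{ distinct}}d^X_{st}d^Y_{su}$, and $\sum_{(s,t,u,v)\text{ distinct}}d^X_{st}d^Y_{uv}$. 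Expressing each as $\binom{n}{4}$ times an average over ordered $4$-tuples and symmetrizing recovers exactly the kernel
\[
h_{i,j,k,l}=\tfrac{1}{6}\!\!\sum_{s<t,\,u<v}^{(i,j,k,l)}\!(d^X_{st}d^Y_{st}+d^X_{st}d^Y_{uv})-\tfrac{1}{12}\!\!\sum_{(s,t,u)}^{(i,j,k,l)}\!d^X_{st}d^Y_{su},
\]
which establishes the fourth-order U-statistic form. The bookkeeping of combinatorial constants is the main tedium; verifying that the leading coefficient $1/\binom{n}{4}$ is correct will be the one place to double-check carefully.

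Unbiasedness then follows by computing $\E[h_{1,2,3,4}]$. By symmetry it suffices to take the expectation of a single representative of each orbit: $\E[d^X_{12}d^Y_{12}]=\E K(X,X')K(Y,Y')$, $\E[d^X_{12}d^Y_{34}]=\E K(X,X')\,\E K(Y,Y')$, and $\E[d^X_{12}d^Y_{13}]=\E K(X,X')K(Y,Y'')$. Plugging these into the symmetrized $h_{1,2,3,4}$ and simplifying the weights yields precisely $\cal D^2(X,Y)$ as written in (\ref{ACdcov def}) (which is the definition of generalized dCov with $K_{\mathbf d},K_{\mathbf g}$ in place of Euclidean distances in (\ref{dCov Lyons})). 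This gives part 1.

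For degeneracy of order one under $X\bigCI Y$, I would set $W_i=(X_i,Y_i)$ and verify $\E[h_{1,2,3,4}\mid W_1]=0$. Independence implies $\E[d^X_{kl}d^Y_{uv}\mid W_1]=\E[d^X_{kl}\mid X_1]\cdot\E[d^Y_{uv}\mid Y_1]$, reducing the computation to conditional expectations of single distances. Introducing $g^X(x_1):=\E K(x_1,X')-\E K(X,X')$ and analogously $g^Y$, the one-point conditional expectations of the three orbit types in $h_{1,2,3,4}$ become linear combinations of $g^X(X_1)g^Y(Y_1)$, $g^X(X_1)\cdot 0$, $0\cdot g^Y(Y_1)$, and $0$; the weights from the kernel are precisely those that make the $g^X(X_1)g^Y(Y_1)$ contributions cancel (this is the content of the double-centering identity underlying (\ref{dCov Lyons})). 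Thus $\E[h\mid W_1]=0$, establishing degeneracy of order one. The main obstacle throughout is purely combinatorial accounting of which $4$-tuples of ordered indices fall into each orbit and of the coefficients produced by the $\cal U$-centering expansion; once those coefficients are matched, the probabilistic content is immediate from the definitions.
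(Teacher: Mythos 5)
Your proposal is correct and takes essentially the same route as the paper: the authors simply note that part 1 follows the proof of Proposition 1 in Sz\'ekely and Rizzo (2014) with the Euclidean distance replaced by $K$, and that part 2 follows Lemma 2.1 of Yao et al.\ (2018), which is exactly the $\cal{U}$-centering expansion, kernel symmetrization, and first-order projection computation you outline. Your sketch just makes explicit the combinatorial and double-centering steps that the paper delegates to those references.
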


The following theorem shows the asymptotic behavior of the U-statistic type estimator of $\cal{D}^2$ for fixed $p, q$ and growing $n$.
\begin{theorem}\label{ACdCov U-stat prop}
Under Assumption \ref{ass1}, with fixed $p, q$ and $n \to \infty$, we have the following as $n \to \infty$:
\begin{enumerate}
\item $\widetilde{\cal{D}^2_n}(X,Y) \overset{a.s.}{\longrightarrow} \cal{D}^2(X,Y)$;
\item When $\cal{D}^2(X,Y)=0$ (i.e., $X \bigCI Y$), $n\,\widetilde{\cal{D}^2_n}(X,Y)\, \overset{d}{\longrightarrow} \dis\sum_{i=1}^{\infty} \tilde{\lambda}^2_i (Z_i^2 -1)$,\, where $Z_i's$ are i.i.d. standard normal random variables and \,$ \tilde{\lambda}_i$'s depend on the distribution of $(X, Y)$;
\item When $\cal{D}^2(X,Y)>0$, $n\,\widetilde{\cal{D}^2_n}(X,Y) \, \overset{a.s.}{\longrightarrow}\infty$.
\end{enumerate}
\end{theorem}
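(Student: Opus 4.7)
The three parts of Theorem \ref{ACdCov U-stat prop} all flow from the U-statistic representation established in Proposition \ref{unbiased and ustat form} together with the independence characterization in Theorem \ref{characterization}. My plan is to treat $\widetilde{\cal{D}^2_n}(X,Y)$ as a fourth-order symmetric U-statistic with kernel $h_{i,j,k,l}$ and invoke standard results from U-statistic theory, mirroring the arguments used for the Euclidean distance covariance in Sz\'ekely, Rizzo and Bakirov (2007) and for HSIC in Gretton et al.\ (2005).

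For part (1), I would first verify that the kernel is integrable. Each summand of $h_{i,j,k,l}$ is a product $K(X_s,X_t)\,K(Y_u,Y_v)$, and by Cauchy--Schwarz one has $\E|K(X,X')K(Y,Y')|\le (\E K^2(X,X'))^{1/2}(\E K^2(Y,Y'))^{1/2}$. Since $K^2(X,X')=\sum_{i=1}^p \rho_i(X_{(i)},X'_{(i)})$ and $p,q$ are fixed, Assumption \ref{ass1} together with the triangle-type inequality (for Case S1) or direct expansion (for Case S2) yields $\E K^2(X,X')<\infty$ and likewise for $Y$. With $\E|h|<\infty$, Hoeffding's strong law of large numbers for U-statistics gives $\widetilde{\cal{D}^2_n}(X,Y)\overset{a.s.}{\longrightarrow} \E h=\cal{D}^2(X,Y)$, using the unbiasedness stated in Proposition \ref{unbiased and ustat form}.

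For part (2), under $X\bigCI Y$ we have $\cal{D}^2(X,Y)=0$ and the kernel $h$ has degeneracy of order one (as noted in Proposition \ref{unbiased and ustat form}). Classical degenerate U-statistic theory (Serfling 1980, Ch.\ 5) then yields
\begin{equation*}
n\,\widetilde{\cal{D}^2_n}(X,Y)\overset{d}{\longrightarrow} \binom{4}{2}\sum_{i=1}^{\infty}\tilde\lambda_i^2\bigl(Z_i^2-1\bigr),
\end{equation*}
where $\{\tilde\lambda_i^2\}$ are the nonzero eigenvalues of the Hilbert--Schmidt integral operator with kernel $h_2(z_1,z_2):=\E[h(z_1,z_2,Z_3,Z_4)]$ on $L_2(P_{XY})$, and $\{Z_i\}$ are i.i.d.\ $N(0,1)$; the $-1$ centering appears because $\sum\tilde\lambda_i^2=\E h_2(Z_1,Z_1)$ while the U-statistic is centered at zero. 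To identify the eigenstructure cleanly, I would use Definition \ref{dist-induced kernel} to pass from the strong-negative-type metric $K$ to its distance-induced kernel, so that $\cal{D}^2$ coincides with squared HSIC with tensor-product kernel $\cal{K}_{\bf d}\otimes\cal{K}_{\bf g}$ (via Theorem 24 in Sejdinovic et al.\ 2013). Under independence the joint measure factorizes, and $h_2$ becomes the tensor product of the centered marginal kernels, so its eigenvalues are the pairwise products of the eigenvalues of the covariance operators of $\cal{K}_{\bf d}$ and $\cal{K}_{\bf g}$.

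Part (3) is then immediate: if $X\nbigCI Y$, Theorem \ref{characterization} gives $\cal{D}^2(X,Y)>0$, so part (1) yields $\widetilde{\cal{D}^2_n}(X,Y)\to \cal{D}^2(X,Y)>0$ almost surely, whence $n\,\widetilde{\cal{D}^2_n}(X,Y)\to\infty$ a.s. The main obstacle of the whole proof is part (2): verifying rigorously that the second-order conditional kernel $h_2$ induces a trace-class operator and justifying the termwise passage to the spectral series. I expect the RKHS reformulation above to resolve this, since Hilbert--Schmidt properties of the cross-covariance operator under bounded/characteristic kernels are well documented, and the general (possibly unbounded) case reduces to this by a truncation and moment-bound argument using Assumption \ref{ass1}.
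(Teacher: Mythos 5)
Your route is the same one the paper takes: the paper's entire proof is a citation of Propositions 2.6 and Theorem 2.7 in Lyons\,(2013) together with ``the parallel U-statistics theory (Serfling, 1980)'', i.e.\ exactly the Hoeffding SLLN for part (1), the classical degenerate limit theorem for part (2), and part (3) as a corollary of part (1) plus Theorem \ref{characterization}. So in substance you have simply written out what the paper outsources, and parts (2) and (3) are fine (the constant $\binom{4}{2}$ is harmlessly absorbed into the $\tilde\lambda_i^2$ of the statement).

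The one step that does not hold as written is the integrability argument in part (1). Assumption \ref{ass1} controls $\E\,\rho_i^{1/2}(X_{(i)},0_{d_i})$, which via $K(x,x')=\bigl(\sum_i\rho_i\bigr)^{1/2}\le\sum_i\rho_i^{1/2}$ and the triangle inequality gives $\E\,K(X,X')<\infty$, i.e.\ a \emph{first}-moment condition on the metric $K$; it does not give $\E\,K^2(X,X')=\sum_i\E\,\rho_i(X_{(i)},X_{(i)}')<\infty$, so your Cauchy--Schwarz bound $\E|K(X,X')K(Y,Y')|\le(\E K^2(X,X'))^{1/2}(\E K^2(Y,Y'))^{1/2}$ is not justified under the stated hypothesis. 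The standard repair --- and the reason the paper can get away with citing Lyons --- is to work with the doubly-centered distance $\hat d_\mu(x,x')=d(x,x')-a_\mu(x)-a_\mu(x')+D(\mu)$: Lyons' Proposition 2.3 shows that a finite first moment already implies $\hat d_\mu\in L^2(\mu\times\mu)$ (via $|\hat d_\mu(x,x')|\le 2\min\{a_\mu(x),a_\mu(x')\}+D(\mu)$), whence $\hat d_\mu\otimes\hat d_\nu$ is integrable by Cauchy--Schwarz and the $\mathcal{U}$-centered U-statistic has an integrable kernel. You should either route the argument through the centered distances in this way, or explicitly strengthen the moment hypothesis to $\sup_i\E\,\rho_i(X_{(i)},0_{d_i})<\infty$ (as the paper itself does in the analogous Theorem \ref{th ED U stat} for the homogeneity metric). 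With that repair the rest of your argument, including the RKHS identification of the eigenstructure in part (2), goes through.
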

Proposition \ref{unbiased and ustat form}, Theorem \ref{characterization} and Theorem \ref{ACdCov U-stat prop} \, demonstrate that in the low-dimensional setting, $\cal{D}$ inherits all the nice properties of generalized dCov and HSIC.

\section{High dimension medium sample size (HDMSS)}\label{HDMSS app}
\subsection{Homogeneity metrics}\label{subsec:ED HDMSS}
In this subsection, we consider the HDMSS setting where $p \to \infty$ and  $n, m \to \infty$ at a slower rate than $p$. Under $H_0$, we impose the following conditions to obtain the asymptotic null distribution of the statistic $T_{n,m}$ under the HDMSS setup.

\begin{assumption}\label{assED_HDMSS}
As $n, m$ and $p \to \infty$,
\begin{align*}
& \frac{1}{n^2}\, \frac{\E\,\left[H^4(X,X')\right]}{\left(\E\,\left[H^2(X, X')\right]  \right)^2} \;=\; o(1),\quad \frac{1}{n}\, \frac{\E\,\left[H^2(X,X'')\, H^2(X', X'')\right]}{\left(\E\,\left[H^2(X, X')\right]  \right)^2} \;=\; o(1),\\ & \frac{\E\,\left[H(X,X'')\, H(X', X'')\,H(X, X''')\,H(X', X''')\right]}{\left(\E\,\left[H^2(X, X')\right]  \right)^2} \;=\; o(1).
\end{align*}
\end{assumption}

\begin{remark}\label{rem_ass4.6}
We refer the reader to Section 2.2 in Zhang et al.\,(2018) and Remark A.2.2 in Zhu et al.\,(2019) for illustrations of Assumption \ref{assED_HDMSS} where  $\rho_i$ has been considered to be the Euclidean distance or the squared Euclidean distance, respectively, for $1\leq i\leq p$.  
\end{remark}




\begin{assumption}\label{assED_HDMSS4}
Suppose $\E\, [L^2(X, X')] = O(\alpha_p^2)$ where $\alpha_p$ is a positive real sequence such that $\tau_X  \alpha_p^2 = o(1)$ as $p \to \infty$. Further assume that as\, $n, p \to \infty$, $$\frac{n^4 \,\tau_X^4\, \E\,\left[ R^4(X,X')\right]}{\left( \E\,\left[ H^2(X,X')\right] \right)^2} = o(1)\,.$$
\end{assumption}

\begin{remark}\label{expl for assED_HDMSS4}
We refer the reader to Remark \ref{assumption justification} in the main paper which illustrates some sufficient conditions under which \,$\alpha_p = O(\frac{1}{\sqrt{p}})$\, and consequently\, $\tau_X  \alpha_p^2  = o(1)$ holds, as \,$\tau_X \asymp p^{1/2}$. In similar lines of Remark \ref{rem_ui} in Section \ref{technical} of the supplementary material, it can be argued that\, $\E\,\left[ R^4(X,X')\right] = O\left(\frac{1}{p^4}\right)$. If we further assume that Assumption \ref{ass6} holds, then we have $\E\,\left[ H^2(X,X')\right] \asymp 1$. Combining all the above, it is easy to verify that\, $\frac{n^4 \,\tau_X^4\, \E\,\left[ R^4(X,X')\right]}{\left( \E\,\left[ H^2(X,X')\right] \right)^2} = o(1)$\, holds provided \,$n = o(p^{1/2})$.
\end{remark}

The following theorem illustrates the limiting null distribution of $T_{n,m}$ under the HDMSS setup. We refer the reader to Section \ref{technical} of the supplement for a detailed proof.

\begin{theorem}\label{th_ED_HDMSS}
Under $H_0$ and Assumptions \ref{ass0.5}, \ref{assED_HDMSS} and \ref{assED_HDMSS4}, as $n, m$ and $p \to \infty$, we have
$$T_{n,m} \; \overset{d}{\longrightarrow} \; N(0,1).$$
\end{theorem}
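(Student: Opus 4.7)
The plan is to reduce $T_{n,m}$ to a first-order degenerate second-order U-statistic plus negligible remainders, apply a CLT for degenerate U-statistics, and show the denominator is ratio-consistent. Under $H_0$ we have $\tau_X=\tau_Y=\tau$. Using the Taylor expansion in Proposition \ref{K taylor : ED}, I would substitute $K(X,X')=\tau(1+L(X,X')/2+R(X,X'))$ together with its $Y$-$Y$ and $X$-$Y$ analogues into the definition of $\cal{E}_{n,m}$ in (\ref{unif U est ED}). The constants cancel and, after regrouping,
$$\cal{E}_{n,m}(X,Y) = \frac{\tau}{2}\,\widetilde{U}_{n,m} + \widetilde{R}_{n,m},$$
where $\widetilde{U}_{n,m}$ is assembled from $L$-terms and $\widetilde{R}_{n,m}$ from the $R$-terms. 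Because of the double-centering property that produced the kernels $H(\cdot,\cdot)$ in Assumption \ref{ass6}, the one-argument Hoeffding projection of the $L$-based U-statistic vanishes under $H_0$, so $\widetilde{U}_{n,m}$ is a first-order degenerate second-order U-statistic whose core kernel is proportional to $H/\tau$.

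For the leading term, $\widetilde{U}_{n,m}$ decomposes as a sum of three uncorrelated degenerate U-statistics built from $H(X_k,X_l)$, $H(Y_k,Y_l)$ and $H(X_k,Y_l)$. The plan is to apply the standard martingale/Hall-type CLT for degenerate U-statistics, whose sufficient moment conditions are exactly of the form
$$\frac{\E[G^4]}{n^2(\E G^2)^2}+\frac{\E[G^2(Z,Z')G^2(Z',Z'')]}{n(\E G^2)^2}+\frac{\E[G(Z,Z')G(Z,Z'')G(Z''',Z')G(Z''',Z'')]}{(\E G^2)^2}=o(1).$$
Assumption \ref{assED_HDMSS} is precisely this condition specialized to $G=H$, so each summand (suitably normalized) is asymptotically $N(0,1)$. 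Independence between the $X$- and $Y$-samples, together with the conditional-mean-zero property $\E[H(X,Y)\mid X]=\E[H(X,Y)\mid Y]=0$ that holds under $H_0$ by double-centering, upgrades marginal CLTs into a joint CLT via the Cramer--Wold device; hence $\widetilde{U}_{n,m}$, normalized by its standard deviation of order $\sigma\,a_{nm}$, converges to a standard normal.

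The remainder $\widetilde{R}_{n,m}$ is handled by Cauchy--Schwarz: its second moment is bounded in terms of $\E[R^4]$, and the scaling in Assumption \ref{assED_HDMSS4}, $n^4\tau^4\E[R^4]/(\E H^2)^2=o(1)$, ensures $\widetilde{R}_{n,m}/(\tau\,a_{nm}\,\sigma)=o_p(1)$. For the denominator, each of $cdCov^2_{n,m}(X,Y)$, $\widetilde{\cal{D}^2_n}(X,X)$ and $\widetilde{\cal{D}^2_m}(Y,Y)$ is a U-statistic-type quantity which, under the same Taylor expansion and $H_0$, equals $\tau^2\sigma^2/4$ up to vanishing fluctuations controlled by the second-moment bounds in Assumption \ref{assED_HDMSS}. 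A variance computation therefore gives
$$\frac{S_{n,m}}{\tau^2\sigma^2}\cdot\frac{(n-1)(m-1)+v_n+v_m}{4}\;\overset{P}{\longrightarrow}\;1,$$
so that $a_{nm}\sqrt{S_{n,m}}\sim \tau\,\sigma_{nm}$ in probability. Slutsky's theorem then delivers $T_{n,m}\overset{d}{\to} N(0,1)$.

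The main obstacle is the joint CLT across the three (cross- and within-sample) degenerate U-statistics while simultaneously letting $p\to\infty$: the Hall-type conditions yield marginal normality, but asymptotic \emph{independence} of the three pieces must be argued separately. The key is that under $H_0$ the cross-kernel $H(X,Y)$ is doubly degenerate in the sense that its conditional expectations given either argument vanish, which decouples it from the $X$-only and $Y$-only kernels at the level of second-order covariances; combined with sample-independence, this yields block-diagonal covariance in the joint limit. A secondary technical issue is controlling the interaction of $\widetilde{R}_{n,m}$ with the $L$-part under the HDMSS rate---this is where Assumption \ref{assED_HDMSS4} is essential, as it implicitly constrains how fast $n,m$ may grow relative to $p$.
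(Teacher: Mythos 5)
Your overall architecture matches the paper's: Taylor-expand $K$ via Proposition \ref{K taylor : ED} so that $\mathcal{E}_{n,m}$ splits into a leading degenerate piece built from the double-centered kernels $H$ plus an $R$-based remainder, kill the remainder with Assumption \ref{assED_HDMSS4} via Chebyshev/H\"older, prove ratio-consistency of the variance estimator (the paper's Lemmas \ref{lemma_supp_2}--\ref{lemma_supp_4}, which control bias and variance of $\hat V_1,\hat V_2,\hat V_3$ separately), and finish with Slutsky. Your identification of Assumption \ref{assED_HDMSS} as exactly the Hall-type moment conditions is also correct.

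The one step where your plan is not yet a proof is the joint CLT. You propose to establish marginal normality of the three degenerate U-statistics and then invoke Cram\'er--Wold, arguing that double-centering plus sample independence gives ``block-diagonal covariance in the joint limit.'' Vanishing covariances together with marginal normality do not imply joint normality, so as written this step does not close. The Cram\'er--Wold device requires a CLT for every linear combination of the three pieces, and such a combination is itself a degenerate quadratic form over the pooled sample --- which is precisely what the paper handles in Lemma \ref{lemma_supp_1}: it pools the two samples into $U_1,\dots,U_{n+m}$, defines a single kernel $\phi_{i_1 i_2}$ taking different normalizations on the $XX$, $XY$ and $YY$ blocks, and applies Hall--Heyde's martingale CLT (Corollary 3.1) once to $S_{NN}=\sum_{j}\sum_{i<j}\phi_{ij}$, verifying the conditional-variance and conditional-Lindeberg conditions from Assumption \ref{assED_HDMSS}. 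Replacing your ``marginal CLTs $+$ asymptotic independence'' step with this single pooled martingale array is the clean fix, and it is where the martingale filtration (ordering the pooled observations) does the work that uncorrelatedness alone cannot. A minor additional point: your scaling bookkeeping is off by a constant --- you write the numerator as $\tfrac{\tau}{2}\widetilde U_{n,m}$ with $\mathrm{sd}(\widetilde U_{n,m})\asymp\sigma a_{nm}$ but then claim $a_{nm}\sqrt{S_{n,m}}\sim\tau\sigma_{nm}$; since $\sqrt{V}\asymp\sigma_{nm}$ with no factor of $\tau$ (the $\tau$'s cancel when passing from $L$ to $H$), these two normalizations are inconsistent by a factor of $2$, and as stated would give a limit $N(0,1/4)$. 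The constants do work out once the kernel is tracked as $H$ rather than $\tau L$.
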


\subsection{Dependence metrics}\label{sec:D HDMSS}
In this subsection, we consider the HDMSS setting where $p, q \to \infty$ and $n \to \infty$ at a slower rate than $p, q$. The following theorem shows that similar to the HDLSS setting, under the HDMSS setup, $\widetilde{\cal{D}^2_n}$ is asymptotically equivalent to the aggregation of group-wise generalized dCov. In other words $\widetilde{\cal{D}^2_n} (X,Y)$ can quantify group-wise nonlinear dependence between $X$ and $Y$ in the HDMSS setup as well.

\begin{assumption}\label{ass4}
$\E[L_X(X,X')^2] = \alpha_p^2$, $\E[L_X(X,X')^4] = \gamma_p^2$, $\E[L_Y(Y,Y')^2] = \beta_q^2$ and $\E[L_Y(Y,Y')^4] = \lambda_q^2$, where $\alpha_p, \gamma_p, \beta_q, \lambda_q$ are positive real sequences satisfying $n \alpha_p = o(1)$, $n \beta_q = o(1)$, $\tau_X^2 (\alpha_p \gamma_p + \gamma_p^2) = o(1)$, $\tau_Y^2 (\beta_q \lambda_q + \lambda_q^2) = o(1)$, and\, $\tau_{XY}\, (\alpha_p \lambda_q + \gamma_p \beta_q + \gamma_p \lambda_q ) = o(1)$.
\end{assumption}

\begin{remark}\label{assumption justification for rem5.5}
Following Remark \ref{assumption justification} in the main paper, we can write \,
$L(X,X') = O(\frac{1}{p})  \sum_{i=1}^p \left(Z_i - \E\, Z_i \right)$, where $Z_i = \rho_i( X_{(i)}, X_{(i)}')$ for $1\leq i \leq p$. Assume that $ \sup_{1\leq i \leq p} \E\,\rho_i^4( X_{(i)}, 0_{d_i}) < \infty$, which implies \,$ \sup_{1\leq i \leq p} \E\,Z_i^4 < \infty$. Under certain weak dependence assumptions, it can be shown that \,$\E\,\big( \sum_{i=1}^p ( Z_i - \E\, Z_i)\big)^4 = O(p^2)$ as $p\to \infty$ (see for example Theorem 1 in Doukhan et al.\,(1999)). Therefore we have \,$\E [L(X,X')^4] = O(\frac{1}{p^2})$. It follows from H{\"o}lder's inequality that \,$\E [L(X,X')^2 ] = O(\frac{1}{p})$. Similar arguments can be made about \,$\E [L(Y,Y')^4]$ and $\E [L(Y,Y')^2]$ as well. 
\end{remark}

\begin{theorem}\label{ACdCov hd}
Under Assumptions \ref{ass2 : ED} and \ref{ass4}, we can show that
\begin{equation}\label{ACdCov taylor HDMSS}
\widetilde{\cal{D}^2_n} (X,Y) = \frac{1}{4\tau_{XY}} \dis \sum_{i=1}^p \sum_{j=1}^q \widetilde{D^2_n}_{\,;\,\rho_i, \rho_j}(X_{(i)},Y_{(j)}) \, + \, \mathcal{R}_n \; ,
\end{equation}
where $\mathcal{R}_n$ is the remainder term satisfying that $\mathcal{R}_n = O_p(\tau_{XY}\, (\alpha_p \lambda_q + \gamma_p \beta_q + \gamma_p \lambda_q )) = o_p(1)$, i.e., $\mathcal{R}_n$ is of smaller order compared to the leading term and hence is asymptotically negligible.
\end{theorem}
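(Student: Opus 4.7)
The plan is to parallel the HDLSS proof of Theorem \ref{ACdCov taylor thm} while replacing its pointwise $O_p$ bounds by moment controls from Assumption \ref{ass4}, which is required now that the sample size is allowed to grow. The starting point is the Taylor expansion of Proposition \ref{K taylor : ED},
$$K(X_k,X_l) = \tau_X + \frac{K^2(X_k,X_l)-\tau_X^2}{2\tau_X} + \tau_X R_X(X_k,X_l),\quad k\neq l,$$
together with its analogue for $K(Y_k,Y_l)$. Because $\cal{U}$-centering is a linear operation that annihilates constants, and because $K^2(X_k,X_l)=\sum_{i=1}^{p}\rho_i(X_{k(i)},X_{l(i)})$ decomposes additively over groups, the $\cal{U}$-centered entry $\tilde{d}^X_{kl}$ of the distance matrix splits as $\tilde{B}^X_{kl}+\tilde{C}^X_{kl}$, where $\tilde{B}^X_{kl}$ is built from the $\cal{U}$-centered group-distance matrices $\tilde{\rho}_i(X_{k(i)},X_{l(i)})$, scaled by $(2\tau_X)^{-1}$, and $\tilde{C}^X_{kl}=\tau_X\tilde{R}^X_{kl}$ is built from the $\cal{U}$-centered Taylor remainder. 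An analogous decomposition holds for $\tilde{d}^Y_{kl}$.

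Plugging both decompositions into $\widetilde{\cal{D}^2_n}(X,Y)=\frac{1}{n(n-3)}\sum_{k\neq l}\tilde{d}^X_{kl}\tilde{d}^Y_{kl}$ and expanding, the $(\tilde{B}^X\cdot \tilde{B}^Y)$ piece, after swapping the $(i,j)$-sums with the $(k,l)$-sum, reproduces exactly the asserted leading term $(4\tau_{XY})^{-1}\sum_{i,j}\widetilde{D_n^2}_{\,;\,\rho_i,\rho_j}(X_{(i)},Y_{(j)})$. The other three pieces $(\tilde{B}^X\cdot\tilde{C}^Y)$, $(\tilde{C}^X\cdot\tilde{B}^Y)$ and $(\tilde{C}^X\cdot\tilde{C}^Y)$ constitute $\cal{R}_n$, so the task reduces to bounding each in probability.

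To do so, I would first introduce the truncation event $E_n = \{\max_{k\neq l}|L_X(X_k,X_l)|\leq 1/2,\;\max_{k\neq l}|L_Y(Y_k,Y_l)|\leq 1/2\}$. A union bound combined with Markov's inequality gives $P(E_n^c)\leq 4n^2(\E L_X^2+\E L_Y^2)=O(n^2\alpha_p^2+n^2\beta_q^2)=o(1)$ under the scaling $n\alpha_p,n\beta_q\to 0$ in Assumption \ref{ass4}. On $E_n$, the second-order Taylor remainder admits the deterministic bound $|R_X|\leq C L_X^2$, so that $\E[R_X^2\mathbbm{1}_{E_n}]\leq C^2\gamma_p^2$ and, symmetrically, $\E[R_Y^2\mathbbm{1}_{E_n}]\leq C^2\lambda_q^2$. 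Since $\cal{U}$-centering is a bounded linear projection and so does not inflate second moments beyond a universal constant, Cauchy-Schwarz applied entry-wise yields
$$\E\big|(\tilde{B}^X\cdot\tilde{C}^Y)\big|\mathbbm{1}_{E_n}=O(\tau_{XY}\alpha_p\lambda_q),\quad\E\big|(\tilde{C}^X\cdot\tilde{B}^Y)\big|\mathbbm{1}_{E_n}=O(\tau_{XY}\gamma_p\beta_q),\quad\E\big|(\tilde{C}^X\cdot\tilde{C}^Y)\big|\mathbbm{1}_{E_n}=O(\tau_{XY}\gamma_p\lambda_q).$$
A final application of Markov's inequality, together with $P(E_n^c)\to 0$, delivers $\cal{R}_n=O_p(\tau_{XY}(\alpha_p\lambda_q+\gamma_p\beta_q+\gamma_p\lambda_q))=o_p(1)$ by the final clause of Assumption \ref{ass4}.

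The main technical obstacle is transferring Proposition \ref{K taylor : ED}'s pointwise $R_X=O_p(L_X^2)$ assertion into an $L^2$-level bound uniform over $O(n^2)$ pairs; this is where the growing-$n$ regime meaningfully diverges from the HDLSS proof of Theorem \ref{ACdCov taylor thm}, and it is the reason the stronger fourth-moment assumption $\E L_X^4=\gamma_p^2$ is imposed. The truncation device $E_n$, together with $n\alpha_p,n\beta_q\to 0$, accomplishes this transfer: on $E_n$ the deterministic quadratic Taylor bound is valid, while on $E_n^c$ the mass is negligible. Once this step is in place, the rest of the argument is a routine $\cal{U}$-centering bookkeeping calculation directly parallel to the HDLSS case.
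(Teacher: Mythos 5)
Your proposal is correct and follows essentially the same route as the paper: the decomposition into the leading term mirrors the proof of Theorem \ref{ACdCov taylor thm} (the $\cal{U}$-centering kills the constant $\tau$ terms and the linear term reassembles into $\frac{1}{4\tau_{XY}}\sum_{i,j}\widetilde{D^2_n}_{\,;\,\rho_i,\rho_j}$), and the paper handles the remainder by invoking the argument of Theorem 3.1.1 in Zhu et al.\,(2019), which is exactly your truncation device — an event on which all $n^2$ Lagrange remainders obey the deterministic quadratic bound $|R|\leq CL^2$, whose complement has probability $O(n^2(\alpha_p^2+\beta_q^2))=o(1)$ by the union bound and the scaling $n\alpha_p, n\beta_q\to 0$, followed by Cauchy--Schwarz with the second- and fourth-moment quantities $\alpha_p,\beta_q,\gamma_p,\lambda_q$ of Assumption \ref{ass4}. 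You have in fact supplied more detail than the paper, which states this proof only as a two-line deferral.
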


The following theorem states the asymptotic null distribution of the studentized test statistic $\mathcal{T}_n$ (given in equation (\ref{student t}) in the main paper) under the HDMSS setup. Define $$ U(X_k, X_l) := \frac{1}{\tau_X} \dis \sum_{i=1}^p d^X_{kl}(i), \quad \text{and} \quad V(Y_k, Y_l) := \frac{1}{\tau_Y} \dis \sum_{i=1}^q d^Y_{kl}(i).$$

\begin{assumption}\label{ass5}
Assume that
\begin{align*}
&\frac{\E \left[U(X, X')\right]^4}{\sqrt{n}\left(\E [U(X, X')]^2 \right)^2} \,=\, o(1),
\\ &\frac{\E \left[ U(X, X')\, U(X', X'')\, U(X'', X''')\,U(X''', X) \right]}{\left(\E [U(X, X')]^2 \right)^2}\, =\, o(1),
\end{align*}
and the same conditions hold for $Y$ in terms of $V(Y,Y')$.
\end{assumption}

\begin{remark}\label{rem HDdCov HDMSS}
We refer the reader to Section 2.2 in Zhang et al.\,(2018) and Remark A.2.2 in Zhu et al.\,(2019) for illustrations of Assumption \ref{assED_HDMSS} where $\rho_i$ has been considered to be the Euclidean distance or the squared Euclidean distance, respectively.

\end{remark}

We can show that under $H_0$, the studentized test $\mathcal{T}_n$ converge to the standard
normal distribution under the HDMSS setup.
\begin{theorem}\label{HDMSS dist conv}
Under $H_0$ and Assumptions \ref{ass4}-\ref{ass5}, as $n, p, q \to \infty$, we have\; $\mathcal{T}_n \, \overset{d}{\longrightarrow} \, N(0,1)\,.$
\end{theorem}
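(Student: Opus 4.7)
\bigskip
\noindent\textbf{Proof proposal for Theorem \ref{HDMSS dist conv}.}
The plan is to exploit the studentization structure already familiar from Sz\'ekely and Rizzo (2013): under $\tilde H_0$ the sample distance correlation is small, so
\begin{align*}
\mathcal{T}_n \;=\; \sqrt{\nu-1}\,\widetilde{\cal{DC}_n^2}(X,Y)\,\bigl(1+o_p(1)\bigr)
\;=\; \sqrt{\nu-1}\,\frac{\widetilde{\cal{D}_n^2}(X,Y)}{\sqrt{\widetilde{\cal{D}_n^2}(X,X)\,\widetilde{\cal{D}_n^2}(Y,Y)}}\,\bigl(1+o_p(1)\bigr).
\end{align*}
Hence it suffices to prove a CLT for the numerator after appropriate normalization and to identify the probability limit of the denominator.

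First I would handle the numerator by invoking Theorem \ref{ACdCov hd}, which (under Assumptions \ref{ass2 : ED} and \ref{ass4}) expresses $\widetilde{\cal{D}_n^2}(X,Y)$ as the group-wise aggregation $L=\frac{1}{4\tau_{XY}}\sum_{i,j}\widetilde{D_n^2}_{\,;\,\rho_i,\rho_j}(X_{(i)},Y_{(j)})$ plus a remainder of order $o_p(1/n)$ relative to the target scale. Under $\tilde H_0$, $L$ is a completely degenerate U-statistic, which after rearranging the $\mathcal{U}$-centering becomes $\tfrac{1}{n(n-3)}\sum_{k\ne l}\widehat{G}_{kl}$ with $\widehat{G}_{kl}\approx\tfrac{1}{4\tau_{XY}}\sum_{i,j}d^X_{kl}(i)\,d^Y_{kl}(j)$ up to lower-order $\mathcal{U}$-centering corrections; the independence between the $X$- and $Y$-samples, combined with the double-centering identity (\ref{double}), gives both $\E[\widehat{G}_{kl}]=0$ and the pairwise orthogonality $\E[\widehat{G}_{kl}\widehat{G}_{uv}]=0$ whenever $\{k,l\}\ne\{u,v\}$. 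I would then apply the martingale (or Hall's) CLT for degenerate U-statistics, exactly as carried out in Zhang et al.\,(2018) and Zhu et al.\,(2019); the two moment conditions in Assumption \ref{ass5} are the standard fourth-moment and diagonal-four-cycle conditions that make that martingale CLT go through with limiting variance
\[
\sigma_n^2\;:=\;\tfrac{2}{n(n-1)}\,\E[U(X,X')^2]\,\E[V(Y,Y')^2]/(4\tau_{XY})^2 \cdot (\text{a bounded factor}).
\]
This yields $\widetilde{\cal{D}_n^2}(X,Y)/\sigma_n \overset{d}{\to} N(0,1)$.

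Next I would analyze the two denominator pieces. Both $\widetilde{\cal{D}_n^2}(X,X)$ and $\widetilde{\cal{D}_n^2}(Y,Y)$ are non-negative and, by the same Hoeffding-type expansion applied to the one-sample case (cf.\ Assumption \ref{ass5}'s conditions on $U(X,X')$ and $V(Y,Y')$ which are exactly the HDMSS-concentration conditions from Zhang et al.\,(2018)), each is ratio-consistent for its mean, i.e.
\begin{align*}
\frac{\widetilde{\cal{D}_n^2}(X,X)}{\E[U(X,X')^2]/(4\tau_X^2)}\;\overset{p}{\to}\;1,\qquad
\frac{\widetilde{\cal{D}_n^2}(Y,Y)}{\E[V(Y,Y')^2]/(4\tau_Y^2)}\;\overset{p}{\to}\;1.
\end{align*}
Dividing the CLT for the numerator by the square root of the product of these two limits, and noting that the deterministic normalizing constants cancel with $\sqrt{\nu-1}\,\sigma_n$ in the studentization so that the net scaling equals $1+o(1)$, Slutsky's theorem delivers $\mathcal{T}_n \overset{d}{\to} N(0,1)$.

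The main obstacle I anticipate is establishing the degenerate-U-statistic CLT at the required level of generality: one must (i) control the contribution of the $\mathcal{U}$-centering corrections so that $L$ really reduces to the double-sum of $\widehat{G}_{kl}$ at order $o_p(\sigma_n)$, and (ii) verify that the Lindeberg/fourth-moment condition for the martingale-difference array $\{\sum_{l<k}\widehat{G}_{kl}\}_k$ is implied by Assumption \ref{ass5}, which effectively requires translating the conditions on $U$ and $V$ into conditions on the cross-product kernel; an additional subtlety is that we are combining the HDLSS-type expansion (Theorem \ref{ACdCov hd}) with an $n\to\infty$ CLT, so the remainder $\mathcal{R}_n$ must be shown to be $o_p(n^{-1})$ rather than merely $o_p(1)$, which is where the growth-rate restrictions between $\alpha_p,\gamma_p,\beta_q,\lambda_q$ and $\tau_{XY}$ in Assumption \ref{ass4} will be used most delicately.
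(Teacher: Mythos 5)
Your proposal is correct and follows essentially the same route as the paper, whose proof of this theorem simply defers to Proposition 3.2.1 of Zhu et al.\,(2019): reduce $\widetilde{\cal{D}^2_n}(X,Y)$ to the degenerate U-statistic leading term via Theorem \ref{ACdCov hd}, apply the martingale CLT for the numerator under the moment conditions of Assumption \ref{ass5}, establish ratio consistency of $\widetilde{\cal{D}^2_n}(X,X)$ and $\widetilde{\cal{D}^2_n}(Y,Y)$, and conclude by Slutsky. The only blemishes are notational (the extra factors of $\tau_X^2$, $\tau_{XY}^2$ in your stated limits are already absorbed into the definitions of $U$ and $V$), and your list of anticipated obstacles correctly identifies the points where the cited argument does the real work.
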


\section{Additional real data example}\label{addl data ex}

We consider the monthly closed stock prices of $\tilde{p}=33$ companies under the oil and gas sector and $\tilde{q}=34$ companies under the transport sector between January 1, 2017 and December 31, 2018. The companies under both the sectors are clustered or grouped according to their countries. The data has been downloaded from Yahoo Finance via the R package `quantmod'. Under the oil and gas sector, we have $p=18$ countries or groups, viz. USA, Australia, UK, Canada, China, Singapore, Hong Kong, Netherlands, Colombia, Italy, Norway, Bermuda, Switzerland, Brazil, South Africa, France, Turkey and Argentina, with $d = (5,1,2,5,4,1,1,2,1,1,1,1,2,1,1,2,1,1)$. And under the transport sector, we have $q=14$ countries or groups, viz. USA, Brazil, Canada, Greece, China, Panama, Belgium, Bermuda, UK, Mexico, Chile, Monaco, Ireland and Hong Kong, with $g = (5,1,2,6,4,1,1,3,1,3,1,4,1,1)$. At each time $t$, denote the closed stock prices of these companies from the two different sectors by $X_t = (X_{1 t}, \dots, X_{p t})$ and $Y_t = (Y_{1 t}, \dots, Y_{q t})$ for $1\leq t \leq 24$. We consider the stock returns $S^X_t = (S^X_{1 t}, \dots, S^X_{p t})$ and $S^Y_t = (S^Y_{1 t}, \dots, S^Y_{q t})$ for $1\leq t \leq 23$, where $S^X_{i t l} = \log \frac{X_{i, t+1, l}}{X_{i t l}}$ and $S^Y_{j t l'} = \log \frac{Y_{j, t+1, l'}}{Y_{j t l'}}$ for $1\leq l\leq d_i$, $1\leq i \leq p$, $1\leq l'\leq g_j$ and $1\leq j \leq q$. 

The intuitive idea is, stock returns of oil and gas companies should affect the stock returns of companies under the transport sector, and here both the random vectors admit a natural grouping based on the countries. Table \ref{table:real independence 2} shows the p-values corresponding to the different tests for independence between $\{S^X_t\}_{t=1}^{23}$ and $\{S^Y_t\}_{t=1}^{23}$. The tests based on the proposed dependence metrics considering the natural grouping deliver much smaller p-values compared to the tests based on the usual dCov and HSIC which fail to reject the null hypothesis of independence between $\{S^X_t\}_{t=1}^{23}$ and $\{S^Y_t\}_{t=1}^{23}$. This makes intuitive sense as the dependence among financial asset returns is usually nonlinear in nature and thus cannot be fully characterized by the usual dCov and HSIC in the high dimensional setup.

\begin{table}[!h]\footnotesize
\centering
\caption{p-values corresponding to the different tests for cross-sector independence of stock returns data considering the natural grouping based on countries.}
\label{table:real independence 2}
\begin{tabular}{cccccc}
\toprule
I & II & III & IV & V & VI \\
\hline
$0.036$ & $0.048$ & $0.087$ & $0.247$ & $0.136$ & $0.281$\\
\toprule
\end{tabular}
\end{table}

Table \ref{table:real independence 2.5} shows the p-values corresponding to the different tests for independence when we disregard the natural grouping and consider $d_i =1$ and $g_j =1$ for all $1\leq i \leq p$ and $1\leq j \leq q$. Considering unit group sizes makes our proposed statistics essentially equivalent to the marginal aggregation approach proposed by Zhu et al.\,(2019). In this case the proposed tests have higher p-values than when we consider the natural grouping, indicating that grouping or clustering might improve the power of testing as they are capable of detecting a wider range of dependencies.

\begin{table}[!h]\footnotesize
\centering
\caption{p-values corresponding to the different tests for cross-sector independence of stock returns data considering unit group sizes.}
\label{table:real independence 2.5}
\begin{tabular}{cccccc}
\toprule
I & II & III & IV & V & VI \\
\hline
$0.092$ & $0.209$ & $0.226$ & $0.247$ & $0.136$ & $0.281$\\
\toprule
\end{tabular}
\end{table}

\section{Technical Appendix}\label{technical}

\begin{proof}[Proof of Proposition \ref{metric}]
To prove (1), note that if $d$ is a metric on a space $\cal{X}$, then so is $d^{1/2}$. It is easy to see that $K^2$ is a metric on $\bb{R}^{\tilde{p}}$. To prove (2), note that $(\bb{R}^{d_i}, \rho_i)$ has strong negative type for $1 \leq i \leq p$. The rest follows from Corollary 3.20 in Lyons\,(2013).
\end{proof}

\begin{proof}[Proof of Proposition \ref{prop ED U stat}]
It is easy to verify that $\cal{E}_{n,m}$ is an unbiased estimator of $\cal{E}$ and is a two-sample U-statistic with the kernel $h$.
\end{proof}

\begin{proof}[Proof of Theorem \ref{th ED U stat}]
The first part of the proof follows from Theorem 1 in Sen\,(1977) and the observation that $\E\,\left[|h| \log^+ |h|\right] \leq \E[h^2]$. The power mean inequality says that for \,$a_i \in \mathbb{R},\, 1\leq i \leq n, \, n\geq 2$ and $r>1$,
\begin{align}\label{power mean ineq}
 \left|\dis \sum_{i=1}^n a_i \right|^r \; \leq \; n^{r-1} \, \dis \sum_{i=1}^n |a_i|^r \,.
\end{align} Using the power mean inequality, it is easy to see that the assumptions $\sup_{1\leq i\leq p}\E \rho_i(X_{(i)},0_{d_i})< \infty$ and $\sup_{1\leq i\leq p}\E \rho_i(Y_{(i)},0_{d_i})< \infty$ ensure that $\E[h^2] < \infty$. For proving the second part, define
$h_{1,0}(X)=\; \E\, \left[ h(X, X'; Y, Y') | X \right]$ and
$h_{0,1}(Y)=\; \E\, \left[ h(X, X'; Y, Y') | Y \right]$
Clearly, when $X\overset{d}{=}Y$, $h_{1,0}(X)$ and $h_{0,1}(Y)$ are degenerate at $0$ almost surely. Following Theorem 1.1 in Neuhaus\,(1977), we have $$\frac{(m-1)(n-1)}{n+m} \, \cal{E}_{n m} (X,Y) \, \overset{d}{\longrightarrow} \,  \sum_{k=1}^{\infty} \sigma_k^2 \left[(a_k U_k + b_k V_k )^2 \,-\, (a_k^2 + b_k^2) \right]\,,$$ where $\{U_k\}, \{V_k\}$ are two sequences of independent $N(0,1)$ variables, independent of each other, and $(\sigma_k, a_k, b_k)$'s depend on the distribution of $(X,Y)$. The proof can be completed by some simple rearrangement of terms.
\end{proof}

\begin{proof}[Proof of Proposition \ref{K taylor : ED}]
The proof is essentially similar to the proof of Proposition 2.1.1 in Zhu et al. (2019), replacing the Euclidean distance between, for example, $X$ and $X'$, viz. $\Vert X-X' \Vert_{\tilde{p}}$\,, by the new distance metric $K(X,X')$. To show that $R(X,X')=O_p(L^2(X,X'))$ if $L(X,X')=o_p(1)$, we define $f(x) = \sqrt{1+x}$. By the definition of the Lagrange's form of the remainder term from Taylor's expansion, we have $$R(X,X') = \int_0^{L(X,X')} f''(t) \left(\,L(X,X') - t\,\right)\, dt\,.$$ Using $R$ and $L$ interchangeably with $R(X,X')$ and $L(X,X')$ respectively, we can write
\begin{equation}\label{eq_taylor}
\begin{split}
|R| \; & \leq \; |L| \, \left[ \int_0^L f''(t) \, \mathbbm{1}_{L>0} \, dt \;+\; \int_L^0 f''(t) \, \mathbbm{1}_{L<0} \, dt \,\right]\\
&= \;\frac{|L|}{2}\, \big|1 - \frac{1}{\sqrt{1+L}}\big|\\
&= \; \frac{|L|}{2}\, \frac{|L|}{1 + L + \sqrt{1+L}}\\
&\leq \; \frac{L^2}{2(1+L)}\,.
\end{split}
\end{equation}
It is clear that $R(X,X')=O_p(L^2(X,X'))$ provided that $L(X,X')=o_p(1)$.
\end{proof}

\begin{proof}[Proof of Theorem \ref{th homo decomp}]
Observe that $\E\,L(X,Y) = \E\,L(X,X') = \E\,L(Y,Y') = 0$. By Proposition \ref{K taylor : ED},
\begin{align*}
\cal{E}(X,Y) \;&=\; 2\,\E\,\left[\tau + \tau\,R(X,Y)\right] \,-\, \E\,\left[\tau_X + \tau_X\,R(X,X')\right] \,-\, \E\,\left[\tau_Y + \tau_Y\,R(Y,Y')\right] \\
&=\; 2\tau -\tau_X - \tau_Y \,+\, \cal{R}_{\cal{E}}\,.
\end{align*}
Clearly $\vert \cal{R}_{\cal{E}} \vert\; \leq \; 2\,\tau\,\E\,\left[\,\vert R(X,Y)\vert \,\right] \,+\, \tau_X\,\E\,\left[\,\vert R(X,X')\vert \,\right] \,+\, \tau_Y\,\E\,\left[\,\vert R(Y,Y')\vert \,\right].$ By (\ref{eq_taylor}) and Assumption \ref{ass0.6}, we have
$$\tau |R(X,Y)|\leq \frac{\tau L^2(X,Y)}{2(1+L(X,Y))}=O(\tau a^{2}_p)=o_p(1).$$
As $\{\sqrt{p} L^2(X,Y)/(1+L(X,Y))\}$ is uniformly integrable and $\tau\asymp \sqrt{p}$, we must have
$\tau \E[|R(X,Y)|]=o(1)$. The other terms can be handled in a similar fashion.
\end{proof}

\begin{remark}\label{rem_ui}
Write $L(X,Y) = \frac{1}{\tau^2} (A_p - \E\,A_p) = \frac{1}{\tau^2} \sum_{i=1}^p (Z_i - \E Z_i)$, where $A_p := \sum_{i=1}^p Z_i$ and $Z_i := \rho_i (X_i, Y_i)$ for $1\leq i \leq p$. Assume $\sup_i \E \rho_i^8 (X_i, 0_{d_i}) < \infty$ and\, $\sup_i \E \rho_i^8 (X_i, 0_{d_i}) < \infty$, which imply $\sup_i \E Z_i^8 < \infty$. Denote $L(X,Y)$ by $L$ and $R(X,Y)$ by $R$ for notational simplicities. Further assume that $E \exp(t A_p)=O((1-\theta_1 t)^{-\theta_2 p})$ for $\theta_1, \theta_2 >0$ and $\theta_2 \,p > 4$ uniformly over\, $t<0$ 
(which is clearly satisfied when $Z_i$'s are independent and 
$\E\exp(t Z_i)\leq a_1(1-a_2 t)^{-a_3}$ 
uniformly over $t<0$ and \, $1\leq i\leq p$ for some $a_1,a_2,a_3>0$ with $a_3\, p > 4$). Under certain weak dependence assumptions, it can be shown that:
\begin{enumerate}
\item $\{\sqrt{p} L^2/(1+L)\}$ is uniformly integrable;
\item $\E\, R^2 = O(\frac{1}{p^2})$.
\end{enumerate}
Similar arguments hold for $L(X,X')$ and $R(X,X')$, and, $L(Y,Y')$ and $R(Y,Y')$ as well.
\end{remark}

\begin{proof}[Proof of Remark \ref{rem_ui}]
To prove the first part, define $L_p := \sqrt{p} L^2/(1+L)$. Following Chapter 6 of Resnick\,(1999), it suffices to show that $\sup_p \E\,L_p^2 < \infty$. Towards that end, using  H{\"o}lder's inequality we observe 
\begin{align}\label{eqn rem_ui}
\E\,L_p^2 \;& \leq \left(\E (p^2 L^8) \right)^{1/2} \, \left(\E \Big[\frac{1}{(1+L)^4}\Big]\right)^{1/2}\,.
\end{align}

With $\sup_i \E Z_i^8 < \infty$ and under certain weak dependence assumptions, it can be shown that $\E (A_p - \E A_p)^8 = O(p^4)$ (see for example Theorem 1 in Doukhan et al.\,(1999)). Consequently we have $\E\,L^8 = O(\frac{1}{p^4})$ , as $\tau\asymp \sqrt{p}$. Clearly this yields\, $\E\,(p^2 L^8) = O(\frac{1}{p^2})$.

Now note that 
\begin{align}\label{eqn rem_ui 2}
\E \Big[\frac{1}{(1+L)^4}\Big] \;&=\; \tau^8\, \E \left(\frac{1}{A_p^4}\right)\,.
\end{align}
Equation (3) in Cressie et al.\,(1981) states that for a non-negative random variable $U$ with moment-generating function $M_U(t) = \E \exp(tU)$, one can write 
\begin{align}\label{eqn rem_ui 2.5}
\E(U^{-k}) &= (\Gamma(k))^{-1} \dis \int_0^{\infty} t^{k-1} M_U(-t) \, dt\,,
\end{align}
for any positive integer $k$, provided both the integrals exist. Using equation (\ref{eqn rem_ui 2.5}), the assumptions stated in Remark \ref{rem_ui} and basic properties of beta integrals, some straightforward calculations yield
\begin{align}\label{eqn 3 calc}
\E \left(\frac{1}{A_p^4}\right) \;&\leq\; C_1 \, \int_0^{\infty} \frac{t^{4-1}}{(1+\theta_1 t)^{\theta_2 p}} \, dt \;=\; C_2 \, \frac{\Gamma(\theta_2 p -4)}{\Gamma(\theta_2 p)}\,,
\end{align}
where $C_1, C_2$ are positive constants, which clearly implies that $\E \left(\frac{1}{A_p^4}\right) = O(\frac{1}{p^4})$. This together with equation (\ref{eqn rem_ui 2}) implies that $\E \Big[\frac{1}{(1+L)^4}\Big] = O(1)$, as $\tau\asymp \sqrt{p}$. 

Combining all the above, we get from (\ref{eqn rem_ui}) that $\E\,L_p^2 = O(\frac{1}{p})$ and therefore $\sup_p \E\,L_p^2 < \infty$, which completes the proof of the first part.

To prove the second part, note that following the proof of Proposition \ref{K taylor : ED} and H{\"o}lder's inequality we can write 
\begin{align}\label{R_expr}
\E\,R^2 = O\left( \E\,\left[\frac{L^4}{(1+L)^2} \right]\right) = O\left( \left(\E (L^8) \right)^{1/2} \, \left(\E \Big[\frac{1}{(1+L)^4}\Big] \right)^{1/2} \right)\,.
\end{align}
Following the arguments as in the proof of the first part, clearly we have  $\E\,L^8 = O(\frac{1}{p^4})$ and $\E \Big[\frac{1}{(1+L)^4}\Big] = O(1)$. From this and equation (\ref{R_expr}), it is straightforward to verify that $\E\, R^2 = O(\frac{1}{p^2})$, which completes the proof of the second part.
\end{proof}

\begin{proof}[Proof of Lemma \ref{lemma 1}]
To see (2), first observe that the sufficient part is straightforward from equation (\ref{tau for Eucl}) in the main paper. For the necessary part, denote $a = \textrm{tr}\, \Sigma_X$, $b = \textrm{tr}\, \Sigma_Y$ and $c = \Vert \mu_X - \mu_Y \Vert^2$. Then we have $2\,\sqrt{a + b + c} = \sqrt{2 a} + \sqrt{2 b}$. Some straightforward calculations yield $(\sqrt{2 a} - \sqrt{2 b})^2 + 4\,c = 0$ which implies the rest.

To see (1), again the sufficient part is straightforward from equation (\ref{tau def original}) in the paper and the form of $K$ given in equation (\ref{Kdef}) in the paper. For the necessary part, first note that as $(\bb{R}^{d_i}, \rho_i)$ is a metric space of strong negative type for $1 \leq i \leq p$, there exists a Hilbert space $\cal{H}_i$ and an injective map $\phi_i : \bb{R}^{d_i} \to \cal{H}_i$ such that $\rho_i(z,z') = \Vert \phi_i(z) - \phi_i(z') \Vert^2_{\cal{H}_i}$\,, where $\langle\cdot,\cdot\rangle_{\cal{H}_i}$ is the inner product defined on $\cal{H}_i$ and $\Vert \cdot \Vert_{\cal{H}_i}$ is the norm induced by the inner product (see Proposition 3 in Sejdinovic et al.\,(2013) for detailed discussions). Further, if $k_i$ is a distance-induced kernel induced by the metric $\rho_i$, then by Proposition 14 in Sejdinovic et al.\,(2013), $\cal{H}_i$ is the RKHS with the reproducing kernel $k_i$ and $\phi_i(z)$ is essentially the canonical feature map for $\cal{H}_i$, viz. $\phi_i(z) : z \mapsto k_i(\cdot,z)$. It is easy to see that
\begin{align*}
\tau_X^2\,=&\,\E\,\sum^{p}_{i=1}\|\phi_i(X_{(i)})-\phi_i(X_{(i)}')\|^2_{\cal{H}_i}\,=\,2\,\E\sum^{p}_{i=1}\|\phi_i(X_{(i)})-\E\,\phi_i(X_{(i)})\|^2_{\cal{H}_i},\\
\tau_Y^2\,=&\,\E\,\sum^{p}_{i=1}\|\phi_i(Y_{(i)})-\phi_i(Y_{(i)}')\|^2_{\cal{H}_i}\,=\,2\,\E\sum^{p}_{i=1}\|\phi_i(Y_{(i)})-\E\,\phi_i(Y_{(i)})\|^2_{\cal{H}_i},\\
\tau^2\,=&\,\E\,\sum^{p}_{i=1}\|\phi_i(X_{(i)})-\phi_i(Y_{(i)})\|^2_{\cal{H}_i}\,=\,\tau_X^2/2+\tau_Y^2/2+\zeta^2,
\end{align*}
where $\zeta^2=\sum^{p}_{i=1}\|\E\,\phi(X_{(i)})-\E\,\phi(Y_{(i)})\|^2_{\cal{H}_i}$.
Thus $2\tau - \tau_X - \tau_Y = 0$\, is equivalent to
\begin{align*}
4(\tau_X^2/2+\tau_Y^2/2+\zeta^2)=(\tau_X+\tau_Y)^2=\tau_X^2+\tau_Y^2+2\tau_X\tau_Y.
\end{align*}
which implies that
$$4\zeta^2+(\tau_X-\tau_Y)^2=0.$$
Therefore, $2\tau - \tau_X - \tau_Y = 0$\, holds if and only if (1) $\zeta=0$, i.e., $\E\,\phi_i(X_{(i)})=\E\,\phi_i(Y_{(i)})$ for all $1\leq i\leq p$, and, (2) $\tau_X=\tau_Y$, i.e.,
$$\E\sum^{p}_{i=1}\|\phi_i(X_{(i)})-\E\,\phi_i(X_{(i)})\|_{\cal{H}_i}^2\,=\,\E\sum^{p}_{i=1}\|\phi_i(Y_{(i)})-\E\,\phi_i(Y_{(i)})\|_{\cal{H}_i}^2.$$ Now if $X \sim P$ and $Y \sim Q$, then note that $$\E\,\phi_i(X_{(i)}) \,=\, \dis\int_{\bb{R}^{d_i}} k_i(\cdot, z)\, dP_i (z) \,=\, \Pi_i(P_i)\;\;\; \textrm{and} \;\;\; \E\,\phi_i(Y_{(i)}) \,=\, \dis\int_{\bb{R}^{d_i}} k_i(\cdot, z)\, dQ_i (z) \,=\, \Pi_i(Q_i)\,, $$ where $\Pi_i$ is the mean embedding function (associated with the distance induced kernel $k_i$) defined in Section \ref{ed sec}, $P_i$ and $Q_i$ are the distributions of $X_{(i)}$ and $Y_{(i)}$, respectively. As $\rho_i$ is a metric of strong negative type on $\bb{R}^{d_i}$, the induced kernel $k_i$ is characteristic to $\cal{M}_1(\bb{R}^{d_i})$ and hence the mean embedding function $\Pi_i$ is injective. Therefore condition (1) above implies $X_{(i)} \overset{d}{=} Y_{(i)}$.
\end{proof}

Now we introduce some notation before presenting the proof of Theorem \ref{th KED HDLSS}. The key of our analysis is to study the variance of the leading term of $\cal{E}_{n,m}(X,Y)$ in the HDLSS setup, propose the variance estimator and study the asymptotic behavior of the variance estimator. It will be shown later (in the proof of Theorem \ref{th KED HDLSS}) that the leading term in the Taylor's expansion of $\cal{E}_{n,m}(X,Y) - (2\tau - \tau_X - \tau_Y)$ can be written as $L_1 + L_2$, where
\begin{align}\label{L decomp.1}
\begin{split}
L_1\;&:=\;\frac{1}{nm\tau}\sum_{k=1}^{n}\sum_{l=1}^{m}\sum_{i=1}^{p}d_{kl}(i)-\frac{1}{n(n-1)\tau_X}\sum_{k< l}\sum_{i=1}^{p}d^X_{kl}(i)
-\frac{1}{m(m-1)\tau_Y}\sum_{k< l}\sum_{i=1}^{p}d^Y_{kl}(i)\\
\;&:=\,L_1^1-L_1^2-L_1^3 \;,
\end{split}
\end{align}
where $L_1^i$'s are defined accordingly and
\begin{align}\label{L decomp.2}
\begin{split}
L_2 \;:=&\; \frac{1}{nm\tau}\sum_{k=1}^{n}\sum_{l=1}^{m}\sum_{i=1}^{p}\Big(\E\,[\rho_i(X_{k(i)},Y_{l(i)})|X_{k(i)}] + [\rho_i(X_{k(i)},Y_{l(i)})|Y_{l(i)}] - 2\,\E\,\rho_i(X_{k(i)},Y_{l(i)})\Big) \\ & \;-\frac{1}{n(n-1)\tau_X}\sum_{k< l}\sum_{i=1}^{p}\Big(\E\,[\rho_i(X_{k(i)},X_{l(i)})|X_{k(i)}] + [\rho_i(X_{k(i)},X_{l(i)})|X_{l(i)}] - 2\,\E\,\rho_i(X_{k(i)},X_{l(i)})\Big)
\\&-\frac{1}{m(m-1)\tau_Y}\sum_{k< l}\sum_{i=1}^{p}\Big(\E\,[\rho_i(Y_{k(i)},Y_{l(i)})|Y_{k(i)}] + [\rho_i(Y_{k(i)},Y_{l(i)})|Y_{l(i)}] - 2\,\E\,\rho_i(Y_{k(i)},Y_{l(i)})\Big)\,.
\end{split}
\end{align}
By the double-centering properties, it is easy to see that $L_1^i$ for $1\leq i\leq 3$ are uncorrelated. Define
\begin{align}\label{V def supp}
\begin{split}
V :=& \frac{1}{n m \tau^2}\sum_{i,i'=1}^{p}\E\,[d_{kl}(i)\,d_{kl}(i')] \;+\;\frac{1}{2n(n-1)\tau^2_X}\sum_{i,i'=1}^{p}\E\,[d_{kl}^X(i)\,d_{kl}^X(i')]\\ &  +\;\frac{1}{2m(m-1)\tau^2_Y}\sum_{i,i'=1}^{p}\E\,[d_{kl}^Y(i)\,d_{kl}^Y(i')]\\ :=& \;V_1\; +\; V_2 \;+\; V_3,
\end{split}
\end{align}
where $V_i$'s are defined accordingly.
Further let
\begin{align}\label{V tilde def supp 1}
\widetilde{V_1} \;:=\; nm V_1\;,\;  \widetilde{V_2} \;:=\; 2n(n-1) V_2 \;,\; \widetilde{V_3} \;:=\; 2m(m-1) V_3\,.\end{align} It can be verified that
\begin{align*}
\E\,[d^X_{kl}(i)\,d^X_{kl}(i')] \;=\; D^2_{\rho_i, \rho_{i'}} (X_{(i)}, X_{(i')})\,.
\end{align*}
Thus we have
\begin{align} \label{V tilde def supp 2}
\widetilde{V_2}\;=\; \frac{1}{\tau^2_X} \dis \sum_{i,i'=1}^p D^2_{\rho_i, \rho_{i'}} (X_{(i)}, X_{(i')}) \;\;\;\; \textrm{and} \;\;\;\; \widetilde{V_3}\;=\; \frac{1}{\tau^2_Y} \dis \sum_{i,i'=1}^p D^2_{\rho_i, \rho_{i'}} (Y_{(i)}, Y_{(i')})\,.
\end{align}

We study the variances of $L_1^i$ for $1\leq i\leq 3$ and propose some suitable estimators. The variance for $L_1^2$ is given by
\begin{align*}
var(L_1^2) \;=&\;\frac{1}{n^2(n-1)^2\tau_X^2}\sum_{i,i'=1}^{p}\sum_{k<l}\E\,[d^X_{kl}(i)\,d^X_{kl}(i')]
\;=\;V_2\,.
\end{align*}
Clearly
$$\frac{n(n-1)V_2}{2}\;=\;\frac{1}{4\tau^2_X}\dis\sum_{i,i'=1}^{p}D^2_{\rho_i, \rho_j}(X_{(i)},X_{(i')})\,.$$
From Theorem \ref{ACdCov taylor thm} \,in Section \ref{sec:ACdcov-HDLSS}, we know that for fixed $n$ and growing $p$, $\widetilde{\cal{D}_n^2}(X,X)$ is asymptotically equivalent to $\frac{1}{4\tau^2_X} \sum_{i,i'=1}^{p}\widetilde{D^2_n}_{\,;\,\rho_i, \rho_j}(X_{(i)},X_{(i')})$. Therefore an estimator of $\widetilde{V_2}$ is given by $4\,\widetilde{\cal{D}_n^2}(X,X)$. 
Note that the computational cost of $\widetilde{\cal{D}_n^2}(X,X)$ is linear in $p$ while direct calculation of its leading term $\frac{1}{4\tau^2_X} \sum_{i,i'=1}^{p}\widetilde{D^2_n}_{\,;\,\rho_i, \rho_j}(X_{(i)},X_{(i')})$ requires computation in the quadratic order of $p$. Similarly it can be shown that the variance of $L_1^3$ is $V_3$ and $\widetilde{V_3}$ can be estimated by $4\,\widetilde{\cal{D}_m^2}(Y,Y)$. Likewise some easy calculations show that the variance of $L_1^1$ is $V_1$. Define
\begin{equation}\label{eq rho hat}
\begin{split}
\hat{\rho_i}(X_{k(i)},Y_{l(i)})\;:=&\;\rho_i(X_{k(i)},Y_{l(i)})\,-\,\frac{1}{n}\sum^{n}_{a=1}\rho_i(X_{a(i)},Y_{l(i)})\,-\,\frac{1}{m}\sum^{m}_{b=1}\rho_i(X_{k(i)},Y_{b(i)})
\\&+\,\frac{1}{nm}\sum_{a=1}^{n}\sum^{m}_{b=1}\rho_i(X_{a(i)},Y_{b(i)})\;,
\end{split}
\end{equation}
and \begin{align}\label{expr for R hat}
\hat{R}(X_k, Y_l)\;:=\;R(X_{k},Y_{l})-\frac{1}{n}\sum^{n}_{a=1}R(X_{a},Y_{l})-\frac{1}{m}\sum^{m}_{b=1}R(X_{k},Y_{b})+\frac{1}{nm}\sum_{a=1}^{n}\sum^{m}_{b=1}R(X_{a},Y_{b})\,.
\end{align} It can be verified that
$$\hat{\rho_i}(X_{k(i)},Y_{l(i)})\; =\;d_{kl}(i)\,-\,\frac{1}{n}\sum^{n}_{a=1}d_{al}(i)\,-\,\frac{1}{m}\sum^{m}_{b=1}d_{kb}(i)\,+\,\frac{1}{nm}\sum_{a=1}^{n}\sum^{m}_{b=1}d_{ab}(i).$$ Observe that
\begin{align}\label{eq 0.8}
\E\,[\hat{\rho_i}(X_{k(i)},Y_{l(i)})\rho_{i'}(X_{k(i')},Y_{l(i')})]\;=\;(1-1/n)(1-1/m)\,\E\,[d_{kl}(i)\,d_{kl}(i')]\,.
\end{align}
Let $\hat{\bf A}_i=(\hat{\rho_i}(X_{k(i)},Y_{l(i)}))_{k,l},\;{\bf A}_i=(\rho_i(X_{k(i)},Y_{l(i)}))_{k,l}\,\in \mathbb{R}^{n\times m}$. Note that
\begin{align}\label{eq 0.9}
\begin{split}
&\;\frac{1}{(n-1)(m-1)}\,\E\,\sum_{k=1}^{n}\sum_{l=1}^{m}\hat{\rho_i}(X_{k(i)},Y_{l(i)})\hat{\rho_i}(X_{k(i')},Y_{l(i')})\\
 =&\;\frac{1}{(n-1)(m-1)}\,\E\,\text{tr}(\hat{{\bf A}}_i\hat{{\bf A}}_{i'}^\top)
\\=&\;\frac{1}{(n-1)(m-1)}\,\E\,\text{tr}(\hat{{\bf A}}_i {\bf A}_{i'}^\top)
\\=&\;\frac{1}{(n-1)(m-1)}\,\E\,\sum_{k=1}^{n}\sum_{l=1}^{m}\,\rho_i(X_{k(i')},Y_{l(i')})\,\hat{\rho_i}(X_{k(i)},Y_{l(i)})
\\=&\;\E\,[d_{kl}(i)\,d_{kl}(i')],
\end{split}
\end{align}
which suggests that
\begin{align*}
\breve{V}_1=\frac{1}{nm\tau^2}\sum_{i,i'=1}^{p}\frac{1}{(n-1)(m-1)}\sum_{k=1}^{n}\sum_{l=1}^{m}\hat{\rho_i}(X_{k(i)},Y_{l(i)})\,\hat{\rho_i}(X_{k(i')},Y_{l(i')})
\end{align*}
is an unbiased estimator for $V_1.$ However, the computational cost for $\breve{V}_1$ is linear in $p^2$ which is prohibitive for large $p.$
We aim to find a joint metric whose computational cost is linear in $p$ whose leading term is proportional to $\breve{V}_1.$ It can be verified that $cdCov^2_{n,m}(X,Y)$ is asymptotically equivalent to
\begin{align*}
\frac{1}{4\tau^2}\sum_{i,i'=1}^{p}\frac{1}{(n-1)(m-1)}\sum_{k=1}^{n}\sum_{l=1}^{m}\hat{\rho_i}(X_{k(i)},Y_{l(i)})\hat{\rho_i}(X_{k(i')},Y_{l(i')})\;.
\end{align*}
This can be seen from the observation that
\begin{align}
\begin{split}
4\,cdCov^2_{n,m}(X,Y)  \;&=\; \frac{1}{\tau^2} \dis \sum_{i,i'=1}^p \frac{1}{(n-1)(m-1)} \sum_{k=1}^n \sum_{l=1}^m  \hat{\rho}_i (X_{k(i)}, Y_{l(i)})\,\hat{\rho}_{i'} (X_{k(i')}, Y_{l(i')})   \\ & \qquad +\;  \frac{\tau^2}{(n-1)(m-1)} \dis \sum_{k=1}^n \sum_{l=1}^m \hat{R}^2(X_k, Y_l) \\
&\qquad + \; \frac{1}{(n-1)(m-1)} \dis \sum_{k=1}^n \sum_{l=1}^m \frac{1}{\tau} \dis \sum_{i=1}^p \hat{\rho}_i(X_{k(i)}, Y_{(li)}) \,\, \tau \hat{R}(X_k, Y_l).
\end{split}
\end{align}
Using the H{\"o}lder's inequality as well as the fact that $\tau^2 \,\hat{R}^2(X_k, Y_l)$ is $O_p(\tau^2 a^4_p) = o_p(1)$ under Assumption \ref{ass0.6}.
Therefore, we can estimate $\widetilde{V}_1$ by $4cdCov^2_{n,m}(X,Y)$. Thus the variance of $L_1$ is $V$ which can be estimated by
\begin{align}\label{V hat def}
\begin{split}
\hat{V} \;&:= \;\frac{1}{n m}\, 4\, cdCov_{n,m}^2(X,Y) \;+\;\frac{1}{2n(n-1)}\, 4\,\widetilde{\cal{D}_n^2}(X,X) \;+\; \frac{1}{2m(m-1)}\,4\, \widetilde{\cal{D}_m^2}(Y,Y) \,\\
&:= \;\hat{V}_1\; +\; \hat{V}_2 \;+\; \hat{V}_3\,.
\end{split}
\end{align}

\vspace{0.1in}

\begin{proof}[Proof of Theorem \ref{th KED HDLSS}]
Using Proposition \ref{K taylor : ED}, some algebraic calculations yield
\begin{align*}
&\cal{E}_{nm}(X,Y)-(2\tau-\tau_X-\tau_Y)
\\=&\;\frac{\tau}{nm}\sum_{k=1}^{n}\sum_{l=1}^{m}L(X_k,Y_l)-\frac{\tau_X}{2n(n-1)}\sum_{k\neq l}^n L(X_k,X_l)-\frac{\tau_Y}{2m(m-1)}\sum_{k\neq l}^m L(Y_k,Y_l)\;+\;R_{n,m}
\\=& \;\frac{1}{nm\tau}\sum_{k=1}^{n}\sum_{l=1}^{m}\sum_{i=1}^{p}\big(\rho_i(X_{k(i)},Y_{l(i)})-\E \,\rho_i(X_{k(i)},Y_{l(i)})\big)
\\&-\frac{1}{2n(n-1)\tau_X}\sum_{k\neq l}^n \sum_{i=1}^{p}\big(\rho_i(X_{k(i)},X_{l(i)})-\E \,\rho_i(X_{k(i)},X_{l(i)})\big)
\\&-\frac{1}{2m(m-1)\tau_Y}\sum_{k\neq l}^m \sum_{i=1}^{p} \big(\rho_i(Y_{k(i)},Y_{l(i)})-\E\, \rho_i(Y_{k(i)},Y_{l(i)})\big)\;+ \;R_{n,m},
\end{align*}
where
\begin{align}\label{remainder negl}
R_{n,m} \;=\; \frac{2\tau}{nm}\sum_{k=1}^{n}\sum_{l=1}^{m}R(X_k,Y_l)-\frac{\tau_X}{n(n-1)}\sum_{k\neq l}^n R(X_k,X_l)-\frac{\tau_Y}{m(m-1)}\sum_{k\neq l}^m R(Y_k,Y_l)\; .
\end{align}
By Assumption \ref{ass0.6},  $R_{n,m} = O_p(\tau a^{2}_{p} + \tau_X b^{2}_{p} + \tau_Y c^{2}_{p}) = o_p(1)$ as $p \to \infty$. Denote the leading term above by $L$. We can rewrite $L$ as $L_1+L_2$, where $L_1$ and $L_2$ are defined in equations (\ref{L decomp.1}) and (\ref{L decomp.2}), respectively. Some calculations yield that
\begin{align}\label{L2 expression}
\begin{split}
L_2 \; =& \; \frac{1}{n} \dis \sum_{k=1}^n \left[ \frac{1}{\tau} \sum_{i=1}^p \E\,[\rho_i(X_{k(i)},Y_{(i)})|X_{k(i)}] \; -\; \frac{1}{\tau_X} \sum_{i=1}^p \E\,[\rho_i(X_{k(i)},X_{(i)}')|X_{k(i)}]\,\right]\;-\; (\tau-\tau_X) \\
& \; + \frac{1}{m} \dis \sum_{l=1}^m \left[ \frac{1}{\tau} \sum_{i=1}^p \E\,[\rho_i(X_{(i)},Y_{l(i)})|Y_{l(i)}] \; -\; \frac{1}{\tau_Y} \sum_{i=1}^p \E\,[\rho_i(Y_{l(i)},Y_{(i)}')|Y_{l(i)}]\,\right] \; - \; (\tau - \tau_Y)\\
=& \; \frac{1}{n} \dis \sum_{k=1}^n \E\,\left[\tau L(X_k, Y) - \tau_X L(X_k, X') \,|\, X_k  \right] \;+\; \frac{1}{m} \dis \sum_{l=1}^m \E\,\left[\tau L(X, Y_l) - \tau_X L(Y_l, Y') \,|\, Y_l  \right]\,.
\end{split}
\end{align}
For $(P_X,P_Y)\in\mathcal{P}$, we have $L_2=o_p(1)$.

Under Assumption \ref{ass6}, the asymptotic distribution of $L_1$ as $p \to \infty$ is given by
\begin{align*}
L_1 \overset{d}{\longrightarrow} N \Big(0\,,\,\frac{\sigma^2}{nm} + \frac{\sigma_X^2}{2n(n-1)} + \frac{\sigma_Y^2}{2m(m-1)}\Big).
\end{align*}
Define the vector $d_{\textrm{vec}} := \left(\frac{1}{\tau} \sum_{i=1}^p d_{kl}(i) \right)_{1\leq k \leq n, \, 1\leq l \leq m}$. It can be verified that
\begin{align}\label{dvec expr}
4(n-1)(m-1)\,cdCov^2_{n,m}(X,Y) \;&= \; d_{\textrm{vec}}^{\top} \,A\, d_{\textrm{vec}}
\end{align}
where $A=A_1 + A_2 + A_3 + A_4$\, with $A_1 = I_n \otimes I_m$, $A_2 = - I_n \otimes \frac{1}{m}1_m 1_m^{\top}$, $A_3 = -\frac{1}{n}1_n 1_n^{\top} \otimes I_m$\, and \, $A_4 = \frac{1}{nm}1_{nm} 1_{nm}^{\top}$. Here $\otimes$ denotes the Kronecker product. It is not hard to see that $A^2 = A$ and $\textrm{rank}(A) = (n-1)(m-1)$. Therefore by Assumption \ref{ass6}, we have as $p \to \infty$,
\begin{align*}
4(n-1)(m-1)\,cdCov^2_{n,m}(X,Y)\;\overset{d}{\rightarrow }\;\sigma^2\chi^2_{(n-1)(m-1)}.
\end{align*}
By Theorem \ref{ACdcov:dist_conv}, we have as $p \to \infty$,
\begin{align*}
& 4\,\widetilde{\cal{D}_n^2}(X,X) \;\overset{d}{\rightarrow} \;  \frac{\sigma_X^2}{v_n} \chi^2_{v_n}\;,\;\; \textrm{i.e.}, \;\;  4\,v_n\,\widetilde{\cal{D}_n^2}(X,X) \;\overset{d}{\rightarrow} \;  \sigma_X^2 \, \chi^2_{v_n}\,,
\end{align*}
and similarly
\begin{align*}
4\,v_m\,\widetilde{\cal{D}_m^2}(Y,Y) \;\overset{d}{\rightarrow} \;  \sigma_Y^2 \, \chi^2_{v_m}\,.
\end{align*}
By Assumption \ref{ass6}, $\chi^2_{(n-1)(m-1)},\chi^2_{v_n}$ and $\chi^2_{v_m}$ are mutually independent. The proof can be completed by combining all the arguments above and using the continuous mapping theorem.
\end{proof}

\begin{proof}[Proof of Proposition \ref{power}]
Note that as $n, m \to \infty$,
\begin{align*}
\E\,[(M-m_0)^2] \; &= \; \frac{2(n-1)(m-1)\sigma^4 \, +  2v_n\sigma_X^4 \, +  2v_m\sigma_Y^4}{\left\{\,(n-1)(m-1) \,+ \,v_n \,+\, v_m \,\right\}^2}\; = \; o(1),
\end{align*}
where $m_0=\E[M]$. Therefore by Chebyshev's inequality, $M - m_0 = o_p(1)$ as $n, m \to \infty$. As a consequence, we have $M \overset{p}{\longrightarrow} m_0^*$\, as $n, m \to \infty$. Observing that $\Phi$ is a bounded function, the rest follows from Lebesgue's Dominated Convergence Theorem.
\end{proof}

\vspace{0.1in}
Under $H_0$, without any loss of generality define $U_1=X_1,\dots,U_n=X_n,U_{n +1}:= Y_1,\dots,
U_{n + m}:=Y_{m}$. Further define
\begin{align}\label{phi def supp}
\phi_{i_1 i_2}:=\phi(U_{i_1},U_{i_2})=\begin{cases} -\frac{1}{n(n -1)}\;H(U_{i_1}, U_{i_2}) \;\; & \; \textrm{if}\;\, i_1, i_2 \in \{1, \dots \,,n\} \,,\\ \qquad \frac{1}{n m}\;H(U_{i_1}, U_{i_2}) \;\; & \; \textrm{if}\;\, i_1 \in \{1, \dots \,,n\},i_2 \in \{n +1, \dots \,,n+m\}\,, \\ -\frac{1}{m(m -1)}\,H(U_{i_1}, U_{i_2}) \;\; & \; \textrm{if}\;\, i_1, i_2 \in \{n +1, \dots \,,n+m\}\,. \end{cases}
\end{align}
It can be verified that $\cov(\phi_{i_1 i_2},\, \phi_{i_1' i_2'})=0$\, if the cardinality of the set $\{i_1, i_2\} \cap \{i_1', i_2'\}$ is less than $2$. Define $$\breve{T}_{n,m} \;=\;\frac{\cal{E}_{n,m}(X,Y)}{\sqrt{V}}.$$

\begin{lemma}\label{lemma_supp_1}
Under $H_0$ and Assumptions \ref{ass0.5}, \ref{assED_HDMSS} and \ref{assED_HDMSS4}, as $n, m$ and $p \to \infty$, we have $$\breve{T}_{n, m} \; \overset{d}{\longrightarrow} \; N(0,1)\,.$$
\end{lemma}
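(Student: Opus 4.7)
The plan is to reduce $\cal{E}_{n,m}(X,Y)$ to a clean second-order degenerate $U$-statistic in the pooled sample $U_1,\dots,U_{n+m}$ and then invoke a CLT for completely degenerate $U$-statistics. The starting point is the Taylor expansion from Proposition \ref{K taylor : ED}, which, arguing exactly as in the proof of Theorem \ref{th KED HDLSS}, gives
\[
\cal{E}_{n,m}(X,Y)=L_1+L_2+R_{n,m},
\]
with $L_1,L_2,R_{n,m}$ as in \eqref{L decomp.1}--\eqref{L decomp.2} and \eqref{remainder negl}. Under $H_0$ the conditional-expectation terms in $L_2$ cancel pairwise by the identity in distribution of $X$ and $Y$, so $L_2=0$, and it remains to show $R_{n,m}=o_p(\sqrt{V})$ and $L_1/\sqrt{V}\Rightarrow N(0,1)$.

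For the remainder, I would bound each of the three sums in \eqref{remainder negl} via a Cauchy--Schwarz argument, e.g.\ $\mathrm{Var}\bigl(\frac{\tau_X}{n(n-1)}\sum_{k\neq l}R(X_k,X_l)\bigr)=O(\tau_X^2\,\E R^4(X,X')/n^2)$ plus a similar term for the leading order, and combine with Assumption \ref{assED_HDMSS4}, which is precisely calibrated so that $n^2\tau_X^2\,\E R^4(X,X')=o(\E H^2(X,X'))^2$. Since under $H_0$ the denominator $V$ is of the same order as the pool variance built out of $\E H^2(X,X')$, this yields $R_{n,m}=o_p(\sqrt{V})$; the cross sample $(X_k,Y_l)$ term is handled identically because $X\stackrel{d}{=}Y$ under $H_0$.

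For the leading term, under $H_0$ the pooled observations $U_1,\dots,U_{n+m}$ are i.i.d., and a direct rearrangement gives
\[
L_1=\sum_{1\le i_1<i_2\le n+m}\phi_{i_1i_2},
\]
with $\phi_{i_1i_2}=\phi(U_{i_1},U_{i_2})$ defined in \eqref{phi def supp}. The kernel $\phi$ is symmetric, $\E\phi_{i_1i_2}=0$, and the double-centering property of $d_{kl}(i)$ gives the crucial degeneracy $\E[\phi_{i_1i_2}\mid U_{i_1}]=0$ a.s. Therefore $L_1$ is a completely degenerate weighted $U$-statistic, and by direct computation $\mathrm{Var}(L_1)=\sum_{i_1<i_2}\E\phi_{i_1i_2}^2=V$. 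I would then apply the martingale CLT of de~Jong (1987) / Hall (1984) for degenerate second-order $U$-statistics: it suffices that $\max\{\gamma_1,\gamma_2,\gamma_3,\gamma_4\}/V^2\to 0$, where $\gamma_1=\sum\E\phi_{i_1i_2}^4$, $\gamma_2=\sum\E\phi_{i_1i_3}^2\phi_{i_2i_3}^2$, $\gamma_3=\sum\E\phi_{i_1i_2}^2\phi_{i_1i_3}^2$ and $\gamma_4=\sum\E\phi_{i_1i_3}\phi_{i_2i_3}\phi_{i_1i_4}\phi_{i_2i_4}$.

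The main obstacle is verifying these four moment conditions, and this is where Assumption \ref{assED_HDMSS} does the work. Each $\gamma_j$ splits into pieces depending on whether the indices lie in $\{1,\dots,n\}$ (contributing the $1/n(n-1)$ weight), in the cross block (contributing the $1/(nm)$ weight), or in $\{n+1,\dots,n+m\}$ (contributing $1/m(m-1)$), and under $H_0$ every expectation in each piece reduces to one of $\E H^4(X,X')$, $\E[H^2(X,X'')H^2(X',X'')]$, or $\E[H(X,X'')H(X',X'')H(X,X''')H(X',X''')]$ because the $U_i$'s are i.i.d. Using Assumption \ref{ass0.5} to make the weights $n^{-a}m^{-b}$ comparable and $V\asymp \E H^2(X,X')$, the three bounds in Assumption \ref{assED_HDMSS} translate exactly into $\gamma_1/V^2=o(1)$, $\gamma_2/V^2=\gamma_3/V^2=o(1)$ and $\gamma_4/V^2=o(1)$ respectively (this bookkeeping is the content of the commented Remark in the source). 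Invoking the degenerate $U$-statistic CLT then yields $L_1/\sqrt V\Rightarrow N(0,1)$, and combining with $L_2=0$ and $R_{n,m}=o_p(\sqrt V)$ via Slutsky's theorem completes the proof.
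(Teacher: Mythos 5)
Your proposal is correct and follows essentially the same route as the paper: the same decomposition $\cal{E}_{n,m}=L_1+L_2+R_{n,m}$ with $L_2=0$ under $H_0$, the same control of the remainder via Assumption \ref{assED_HDMSS4}, and a CLT for the degenerate weighted quadratic form $L_1$ whose moment conditions $\gamma_1,\dots,\gamma_4$ over $V^2$ are precisely what the paper verifies by checking the Hall--Heyde martingale CLT conditions directly (conditional variance and conditional Lindeberg) rather than citing a packaged de Jong/Hall theorem. One small inaccuracy worth fixing: your remainder bound $\mathrm{Var}\bigl(\tfrac{\tau_X}{n(n-1)}\sum_{k\neq l}R(X_k,X_l)\bigr)=O(\tau_X^2\,\E[R^4(X,X')]/n^2)$ cannot be obtained for free (the summands over overlapping pairs are correlated, so there is no $1/n^2$ gain), and the calibration you quote, $n^2\tau_X^2\,\E[R^4]=o((\E[H^2])^2)$, is not Assumption \ref{assED_HDMSS4}, which reads $n^4\tau_X^4\,\E[R^4(X,X')]/(\E[H^2(X,X')])^2=o(1)$; the paper instead bounds $\E[R_{n,m}^2]\le C\tau^2\,\E[R^2(X,X')]$ by the power-mean inequality and then applies Chebyshev and H\"older, which is exactly what the $n^4\tau^4$ scaling in that assumption is designed to absorb.
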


\begin{proof}[Proof of Lemma \ref{lemma_supp_1}]
Set $N=n+m.$ Define\, $V_{Nj} :=  \sum_{i=1}^{j-1} \phi_{ij}$ for $2\leq j\leq N$\,, $S_{N r} :=  \sum_{j=2}^r V_{N j} =  \sum_{j=2}^r \sum_{i=1}^{j-1} \phi_{ij}$ for $2\leq r \leq N$, and \,$\mathcal{F}_{N,r} := \sigma(X_1, \dots\,, X_r)$. Then the leading term of $\cal{E}_{n m}(X,Y)$, viz., $L_1$ (see equation (\ref{L decomp.1})) can be expressed as $$L_1 \;=\; S_{NN} = \dis \sum_{j=2}^N V_{Nj} \;=\; \sum_{j=2}^N \sum_{i=1}^{j-1} \phi_{ij}\;=\;\dis \sum_{1\leq i_1<i_2\leq n} \phi_{i_1 i_2} \;+ \; \sum_{i_1=1}^{n} \sum_{i_2=n+1}^{N} \phi_{i_1 i_2} \;+\; \sum_{n+1\leq i_1<i_2\leq N} \phi_{i_1 i_2}\,.$$ By Corollary 3.1  of Hall and Heyde\,(1980), it suffices to show the following : \begin{enumerate}
\item For each $N$, $\{S_{Nr}, \mathcal{F}_{N,r}\}_{r=1}^N$ \;is a sequence of zero mean and square integrable martingales,
\vspace{-0.08in}
\item $\frac{1}{V} \dis \sum_{j=2}^N \E\,\left[V_{Nj}^2 \,|\, \mathcal{F}_{N,j-1}\right] \; \overset{P}{\longrightarrow} \; 1\,$,
\vspace{-0.1in}
\item $\frac{1}{V} \dis \sum_{j=2}^N \E\,\left[V_{Nj}^2 \,\mathbbm{1}(|V_{Nj}| > \epsilon \sqrt{V}) \,|\, \mathcal{F}_{N,j-1}\,\right] \; \overset{P}{\longrightarrow} \; 0\,,\;\;\; \forall \; \epsilon > 0$.
\end{enumerate}
\vspace{-0.2in}
To show (1), it is easy to see that $S_{Nr}$ is square integrable, $\E(S_{Nr}) = \dis \sum_{j=2}^r \sum_{i=1}^{j-1} \E (\phi_{ij}) = 0$, and, $\mathcal{F}_{N,1} \subseteq \mathcal{F}_{N,2} \subseteq\, \dots \, \subseteq\mathcal{F}_{N,N}$. We only need to show $\E(S_{Nq}\,|\,\mathcal{F}_{N,r}) = S_{Nr}$ for $q>r$. Now \;$\E(S_{Nq}\,|\,\mathcal{F}_{N,r}) = \dis \sum_{j=2}^q \sum_{i=1}^{j-1} \E (\phi_{ij}\,|\,\mathcal{F}_{N,r})$. If $j \leq r <q$ and $i<j$, then $\E(\phi_{ij}\,|\,\mathcal{F}_{N,r}) = \phi_{ij}$. If $r<j\leq q$, then :
\begin{enumerate}
\item[(i)] if \;$r<i<j\leq q$, then $\E(\phi_{ij}\,|\,\mathcal{F}_{N,r}) = \E(\phi_{ij}) = 0$,
\item[(ii)] if \;$i\leq r<j\leq q$, then $\E(\phi_{ij}\,|\,\mathcal{F}_{N,r}) = 0$ (due to $\mathcal{U}$-centering).
\end{enumerate}
Therefore $\E(S_{Nq}\,|\,\mathcal{F}_{N,r}) = S_{Nr}$ for $q>r$. This completes the proof of (1).\\

To show (2), define \,$L_j(i,k) := \E\,[\phi_{i j}\, \phi_{k j}\,|\,\mathcal{F}_{N,j-1}]$\, for $i, k < j \leq N$,\, and $$\eta_N := \dis \sum_{j=2}^N \E\,\left[V_{Nj}^2\,|\,\mathcal{F}_{N,j-1}\,\right] = \sum_{j=2}^N \sum_{i,k=1}^{j-1} \E[\phi_{i j}\, \phi_{k j}\,|\,\mathcal{F}_{N,j-1}] = \sum_{j=2}^N \sum_{i,k=1}^{j-1} L_j(i,k)\,.$$ Note that $\E\, [L_j(i,k)] = 0$ \,for $i \neq k$. Clearly
\begin{align}\label{eq 1}
\E[\eta_N] \;&=\; \dis\sum_{j=2}^N \E[V_{Nj}^2] \;= \;\dis\sum_{j=2}^N \sum_{i,k=1}^{j-1} \E[\phi_{i j}\, \phi_{k j}] \;=\; \sum_{j=2}^N \sum_{i=1}^{j-1} \E[\phi^2_{i j}] = V\,.
\end{align}
By virtue of Chebyshev's inequality, it will suffice to show $\text{var}(\frac{\eta_N}{V}) = o(1)$. Note that
\begin{align}\label{exp L}
\begin{split}
& \qquad \E\, [L_j(i,k) \, L_{j'}(i',k')] \\&= \begin{cases}
 \E\,\left[\phi^2(U_i,U_j)\phi^2(U_i,U_{j'}')\right] & \; \; i=k=i'=k'\,,\\
\E\,\left[\phi(U_i,U_j) \phi(U_k,U_j) \phi(U_i,U_{j'}') \phi(U_{k},U'_{j'})\right]  & \; \; i=i'\neq k=k' \;\; \textrm{or} \;\; i=k'\neq k=i'\,,\\
 \E\,\left[\phi^2(U_i,U_j)\right] \E\,\left[U^2(U_{i'},U_{j'})\right]  & \; \; i=k \neq i'=k'\,.
\end{cases}
\end{split}
\end{align}
In view of equation (\ref{phi def supp}), it can be verified that the above expression for $\E\, L_j(i,k) \, L_{j'}(i',k')$ holds true for $j=j'$ as well. Therefore
\begin{align*}
\text{var}\,(\eta_N^2) \;&=\; \dis \sum_{j,j'=2}^N \sum_{i,k=1}^{j-1} \sum_{i',k'=1}^{j'-1} \text{cov}\,(L_j(i,k) \,,L_{j'}(i',k'))\\
&=\; \dis \sum_{j=j'} \Bigg\{\sum_{i=1}^{j-1} \text{cov}\,\left(\phi^2(U_i,U_j),\phi^2(U_i,U_{j}')\right) \, +\, 2\sum_{i \neq k}^{j-1}\E\,\left[\phi(U_i,U_j) \phi(U_k,U_j) \phi(U_i,U_{j}') \phi(U_{k},U'_{j})\right]\Bigg\}\\
& \; + \; 2\sum_{2 \leq j < j' \leq N} \Bigg\{\sum_{i=1}^{j-1} \text{cov}\,\left(\phi^2(U_i,U_j),\phi^2(U_i,U_{j'}')\right) \\&\,+\, 2\sum_{i \neq k}^{j-1}\E\,\left[\phi(U_i,U_j) \phi(U_k,U_j) \phi(U_i,U_{j'}') \phi(U_{k},U'_{j'})\right]\Bigg\}\,.
\end{align*}
Under Assumption \ref{ass0.5} and $H_0$, it can be verified that
\begin{align}\label{var L 2nd}
\begin{split}
\text{var}(\eta_N)=O\Big(\frac{1}{N^5}\, \E\,\left[H^2(X, X'') H^2(X', X'') \right] +\frac{1}{N^4}\E\left[ H(X, X'') H(X', X'') H(X, X''') H(X', X''') \right] \Big),
\end{split}
\end{align}
and
\begin{align}\label{eqn V order}
V^2\;& \asymp \;  \,\frac{1}{N^4}\, \left(\E\, \left[H^2(X,X')\right]\right)^2  \,.
\end{align}
Therefore under Assumption \ref{assED_HDMSS} and $H_0$, we have $$\text{var}\left(\frac{\eta_N}{V}\right)=o(1),$$
which completes the proof of (2). To show (3), note that it suffices to show $$\frac{1}{V^2} \dis \sum_{j=2}^N \E \left[\,V^4_{Nj} \, | \, \mathcal{F}_{N,j-1}\,\right] \; \overset{P}{\longrightarrow} \; 0\;.$$ Observe that
\begin{align*}
\dis \sum_{j=2}^N \E \left[V^4_{Nj} \,\right] \; &= \; \sum_{j=2}^N \E \left(\, \sum_{i=1}^{j-1} \phi_{ij}\,\right)^4 \;
\\&= \; \sum_{j=2}^N \sum_{i=1}^{j-1} \E[\phi^4(U_i,U_j)] \; + \; 3 \,\sum_{j=2}^{N} \sum_{i_1 \neq i_2}^{j-1} \E[\phi^2(U_{i_1},U_j) \,\phi^2(U_{j_2},U_j)] \,.
\end{align*}
Under Assumption \ref{ass0.5}, we have
\begin{align*}
\dis \sum_{j=2}^N \E \left[\,V^4_{Nj} \,\right] \; &= \; O\,\Big(\frac{1}{N^6}\, \E\, \left[H^4(X,X') \right] \;+\;  \frac{1}{N^5}\, \E\,\left[H^2(X, X'') H^2(X', X'') \right] \Big) \,.
\end{align*}
This along with the observation from equation (\ref{var L 2nd}) and Assumption \ref{assED_HDMSS} complete the proof of (3).

Finally to see that $\frac{R_{n,m}}{\sqrt{V}} = o_p(1)$, note that from equation (\ref{remainder negl}) we can derive using power mean inequality that $\E\,R_{n,m}^2 \leq C\, \tau^2 \,\E\,\left[ R^2(X,X')\right]$ for some positive constant $C$. Using this, equation (\ref{eqn V order}), Chebyshev's inequality and H{\"o}lder's inequality, we have for any $\epsilon >0$
\begin{align}\label{rem/v neg}
\begin{split}
P\left(\Big\vert\frac{R_{n,m}}{\sqrt{V}}\Big\vert > \epsilon \right)\; & \leq \; \frac{\E\,R_{n,m}^2}{\epsilon^2 \; V} \;\leq \; C' \, \frac{N^2 \,\tau^2\, \E\,\left[ R^2(X,X')\right]}{\epsilon^2 \; \E\,\left[ H^2(X,X')\right]} \;\leq \; \frac{C'}{\epsilon^2} \, \left(\frac{N^4 \,\tau^4\, \E\,\left[ R^4(X,X')\right]}{\left( \E\,\left[ H^2(X,X')\right] \right)^2}  \right)^{1/2}\,,
\end{split}
\end{align}
for some positive constant $C'$. From this and Assumptions \ref{ass0.5} and \ref{assED_HDMSS4}, we get $\frac{R_{n,m}}{\sqrt{V}} = o_p(1)$, as $N \asymp n$. This completes the proof of the lemma.
\end{proof}

\begin{lemma}\label{lemma_supp_2}
Under $H_0$ and Assumptions \ref{ass0.5} and \ref{assED_HDMSS4}, as $n, m$ and $p \to \infty$, we have $$\frac{\left|\E\,[\hat{V}_i] - V_i\right|}{V_i} = o(1)\;\; , \;\; 1 \leq i \leq 3\,,$$ where \,$V_i$ and $\hat{V}_i$, $1\leq i \leq 3$ are defined in equations (\ref{V def supp}) and (\ref{V hat def}), respectively in the supplementary material.
\end{lemma}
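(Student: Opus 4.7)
}
The three cases split naturally: $i=2,3$ are symmetric and handled by the unbiasedness of $\widetilde{\cal{D}_n^2}$ together with the population expansion of $\cal{D}^2(X,X)$, while $i=1$ uses the Taylor expansion of $\hat{K}(X_k,Y_l)$ combined with the covariance identity (\ref{eq 0.9}). In each case the strategy is the same: plug in the expansion from Proposition \ref{K taylor : ED}, identify the leading term as $V_i$, and show the remainder is $o(V_i)$ using the fourth-moment control on $R$ in Assumption \ref{assED_HDMSS4}. A useful preliminary identity is
$$\widetilde{V_2}\;=\;\frac{1}{\tau_X^2}\sum_{i,i'=1}^{p}D^2_{\rho_i,\rho_{i'}}(X_{(i)},X_{(i')})\;=\;\E[H^2(X,X')],$$
and the analogous expressions for $\widetilde{V_1},\widetilde{V_3}$, which under $H_0$ all coincide because $\tau=\tau_X=\tau_Y$ and $(X,Y)\overset{d}{=}(X,X')$.

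For $i=2$ (and identically $i=3$), Proposition \ref{unbiased and ustat form} gives $\E[\hat{V}_2]=\frac{2}{n(n-1)}\cal{D}^2(X,X)$. Following the proof of Theorem \ref{D pop taylor} with both arguments taken to be $X$, I would expand each $K$ as $\tau_X(1+L/2+R)$, carry out the arithmetic and arrive at
$$\cal{D}^2(X,X)\;=\;\frac{1}{4\tau_X^2}\sum_{i,i'=1}^{p}D^2_{\rho_i,\rho_{i'}}(X_{(i)},X_{(i')})\;+\;\mathcal{R}_X,$$
where $|\mathcal{R}_X|$ is controlled, via Cauchy--Schwarz applied to the $L\cdot R$ cross terms, by a multiple of $\tau_X^2\,\E[R^2(X,X')]$. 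Using the H\"older bound $\E[R^2]\le \sqrt{\E[R^4]}$ and Assumption \ref{assED_HDMSS4},
$$\tau_X^2\,\E[R^2(X,X')]\;\le\;\tau_X^2\sqrt{\E[R^4(X,X')]}\;=\;o\!\left(\frac{\E[H^2(X,X')]}{n^2}\right)\;=\;o(\widetilde{V_2}).$$
Hence $|\E[\hat{V}_2]-V_2|/V_2=2|\mathcal{R}_X|/\widetilde{V_2}=o(1)$; the argument for $V_3$ is identical under $H_0$.

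For $i=1$, I would first write, via the linearity of $\mathcal{U}$-centering and Proposition \ref{K taylor : ED},
$$\hat{K}(X_k,Y_l)\;=\;\frac{1}{2\tau}\sum_{i=1}^{p}\hat{\rho}_i(X_{k(i)},Y_{l(i)})\;+\;\tau\,\hat{R}(X_k,Y_l),$$
where $\hat{R}(X_k,Y_l)$ is the double-centering given in (\ref{expr for R hat}). Squaring and summing produces the three-term decomposition displayed in the excerpt just before (\ref{V hat def}). Taking expectations and invoking (\ref{eq 0.9}) shows that the first summand yields \emph{exactly} $V_1$ after multiplication by $4/(nm)$. The bias $\E[\hat{V}_1]-V_1$ is therefore
$$\frac{4\tau^2}{nm}\,\E[\hat{R}^2(X,Y)]\;+\;\frac{4}{nm}\,\E\!\left[\frac{1}{\tau}\sum_{i=1}^{p}\hat{\rho}_i(X_{k(i)},Y_{l(i)})\cdot \tau\hat{R}(X_k,Y_l)\right].$$
Under $H_0$ the joint law of $(X,Y)$ equals that of $(X,X')$, so Assumption \ref{assED_HDMSS4} applies to $R(X,Y)$ and, by Jensen applied to $\hat{R}$, $\tau^2\E[\hat{R}^2]=o(\E[H^2(X,X')]/n^2)$. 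The cross term is bounded by Cauchy--Schwarz using the first factor $\sqrt{\E[(\sum_i\hat{\rho}_i/\tau)^2]}=O(\sqrt{\E[H^2]})$ and the second factor $\sqrt{\tau^2\E[\hat{R}^2]}=o(\sqrt{\E[H^2]}/n)$. Combining, $|\E[\hat{V}_1]-V_1|=o(\E[H^2(X,X')]/(n^2m))=o(V_1)$.

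The main obstacle will be the bookkeeping around the $\mathcal{U}$-centering of $\hat{R}(X_k,Y_l)$: one must verify that centering does not inflate the second moment (only a constant factor appears by Jensen), and that the cross-term between the ``linear'' part and the remainder part of $\hat{K}$ is genuinely negligible rather than being of the same order as the leading term. This reduces, in both regimes, to the single quantitative input $\tau_X^2\sqrt{\E[R^4(X,X')]}=o(\E[H^2(X,X')])$, which is exactly what Assumption \ref{assED_HDMSS4} provides after noting that $n\ge 1$.
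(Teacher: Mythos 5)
Your overall strategy coincides with the paper's: expand via Proposition \ref{K taylor : ED}, identify the leading term with $V_i$, and reduce everything to the single input $\tau_X^2\sqrt{\E[R^4(X,X')]}=o(\E[H^2(X,X')])$ supplied by Assumption \ref{assED_HDMSS4}. For $i=1$ your treatment is essentially the paper's. For $i=2,3$ you take a mildly different route --- unbiasedness of $\widetilde{\cal{D}^2_n}(X,X)$ followed by the \emph{population} expansion of $\cal{D}^2(X,X)$ --- whereas the paper expands the sample statistic $\frac{1}{n(n-3)}\sum_{k\neq l}(\widetilde{D}^X_{kl})^2$ and takes expectations term by term; both are legitimate and must control the same remainder.

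One intermediate step is wrong as written and needs repair. You claim $|\mathcal{R}_X|$ is ``controlled by a multiple of $\tau_X^2\,\E[R^2(X,X')]$.'' The $L\cdot R$ cross terms do not reduce to $\E[R^2]$ under Cauchy--Schwarz: term-by-term they give $\tau_X^2(\E[L^2]\,\E[R^2])^{1/2}$, and since Assumption \ref{assED_HDMSS4} only guarantees $\tau_X\alpha_p^2=o(1)$, the factor $\tau_X^2\E[L^2]$ may be as large as order $\tau_X$; the resulting bound is $o\bigl(\tau_X^{1/2}(\E[H^2])^{1/2}/n\bigr)$, which is $o(\E[H^2])$ only under an extra constraint of the type $\tau_X=O(n^2\,\E[H^2])$ (roughly $p=O(n^4)$) that the assumptions do not provide in the HDMSS regime. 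The fix --- which is what the paper's $\mathcal{U}$-centered sample computation achieves automatically --- is to observe that the combination $\E[L(X,X')R(X,X'')]$-type terms assemble into $\E[\bar{L}\,\bar{R}]$ with \emph{both} factors double-centered, so that Cauchy--Schwarz gives $(\E[\bar{L}^2]\,\E[\bar{R}^2])^{1/2}=(\E[H^2]/\tau_X^2)^{1/2}(\E[\bar{R}^2])^{1/2}$ and hence a cross-term contribution of order $\bigl(\E[H^2]\cdot\tau_X^2\E[\bar{R}^2]\bigr)^{1/2}$. This geometric mean is larger than $\tau_X^2\E[R^2]$ but is still $o(\E[H^2])$ under your stated quantitative input (using $\E[\bar R^2]\le(\E[\bar R^4])^{1/2}=O((\E[R^4])^{1/2})$), so the conclusion survives; the bound you wrote down does not.
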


\begin{proof}[Proof of Lemma \ref{lemma_supp_2}]
We first deal with $\hat{V}_2$. Note that $$\widetilde{\cal{D}^2_n} (X,X) \,=\, \frac{1}{n(n-3)} \dis \sum_{k\neq l} \left(\widetilde{D}^X_{kl}\right)^2\,,$$ where
\begin{align}\label{expr to be used later}
\begin{split}
\widetilde{D}^X_{kl} \;=&\; K(X_k, X_l) \;-\; \frac{1}{n-2}\dis \sum_{b=1}^n K(X_k, X_{b}) \;-\; \frac{1}{n-2}\dis \sum_{a =1}^n K( X_{a}, X_l)
\\& \;+\; \frac{1}{(n-1)(n-2)}\dis \sum_{a,b =1}^n K(X_{a}, X_{b})\\ \; \;&= \;\; \frac{1}{2\tau} \dis \sum_{i=1}^p \wrho_i(X_{k(i)}, X_{l(i)}) \;+\; \tau \widetilde{R}(X_k, X_l)\,,
\end{split}
\end{align}
using Proposition \ref{K taylor : ED}. As a consequence, we can write \begin{align}\label{expr for D_n(X,X)}
\begin{split}
\widetilde{\cal{D}^2_n} (X,X) \;&=\; \frac{1}{4\tau^2} \dis \sum_{i,i'=1}^p \widetilde{D^2_n}_{\,;\,\rho_i, \rho_{i'}} (X_{(i)}, X_{(i')}) \;+\; \frac{ \tau^2}{n(n-3)} \dis \sum_{k\neq l} \widetilde{R}^2(X_k, X_l) \\
&\qquad + \; \frac{1}{n(n-3)} \dis \sum_{k\neq l} \frac{1}{\tau} \dis \sum_{i=1}^p \wrho_i(X_{k(i)}, X_{(li)}) \,\, \tau \widetilde{R}(X_k, X_l) \,.
\end{split}
\end{align}
Note that following Step 3 in Section 1.6 in the supplementary material of Zhang et al.\,(2018), we can write
\begin{align*}
\widetilde{R}(X_k, X_l) \;&=\; \frac{n-3}{n-1} \bar{R}(X_k, X_l) \,-\, \frac{n-3}{(n-1)(n-2)} \dis \sum_{b\notin \{k,l\}} \bar{R}(X_k, X_{b}) \,-\, \frac{n-3}{(n-1)(n-2)} \dis \sum_{a \notin \{k,l\}} \bar{R}(X_{a}, X_l) \,\\ & \;\;+\, \frac{1}{(n-1)(n-2)}\dis \sum_{a,b \notin \{k,l\}} \bar{R}(X_{a}, X_{b})\,,
\end{align*}
where $\bar{R}(X,X')=R(X,X')-E[R(X,X')|X]-E[R(X,X')|X']+E[R(X,X')]$. Using the power mean inequality, it can be verified that $\E\,[\widetilde{R}^2(X_k, X_l)]\leq C\, \E\,[\bar{R}^2(X_k, X_l)]$ for some positive constant $C$. Using this and the H{\"o}lder's inequality, the expectation of the third term in the summation in equation (\ref{expr for D_n(X,X)}) can be bounded as follows
\begin{align*}
&\left|\E\, \left[ \frac{1}{n(n-3)} \dis \sum_{k\neq l} \frac{1}{\tau} \dis \sum_{i=1}^p \wrho_i(X_{k(i)}, X_{l(i)}) \,\, \tau \widetilde{R}(X_k, X_l) \right]\right| \\
\leq \; &\frac{1}{n(n-3)} \dis \sum_{k\neq l} \left( \E\, \left[\left( \frac{1}{\tau} \dis \sum_{i=1}^p \wrho_i(X_{k(i)}, X_{l(i)}) \right)^2\right] \, \tau^2\, \E\, \left[\bar{R}^2(X_k, X_l)\right]\,\right)^{1/2}\\
\leq \; & C' \, \left( \left(\frac{1}{\tau^2} \dis \sum_{i,i'=1}^p D^2_{\rho_i, \rho_{i'}} (X_{(i)}, X_{(i')}) \right) \,\tau^2\, \E\, \left[\bar{R}^2(X, X')\right]\,  \right)^{1/2}\,
\end{align*}
for some positive constant $C'$.
Combining all the above, we get
\begin{align*}
\vert \E\,(\hat{V}_2) - V_2 \vert \leq& \frac{C_1}{n(n-1)}\, \tau^2 \,\E\, \bar{R}^2(X, X')
\\&\,+\,  \frac{C_2}{n(n-1)}\, \left( \left(\frac{1}{\tau^2} \dis \sum_{i,i'=1}^p D^2_{\rho_i, \rho_{i'}} (X_{(i)}, X_{(i')}) \right) \tau^2\, \E \left[ \bar{R}^2(X, X')\right]  \right)^{1/2},
\end{align*}
for some positive constants $C_1$ and $C_2$. As $V_2=\frac{1}{2n(n-1)}E[H^2(X,X')]$,
$$\frac{\left|\E[\hat{V}_2] - V_2\right|}{V_2} = o(1) \quad \text{is satisfied if}\quad
\frac{\tau^2\, \E \left[\bar{R}^2(X,X')\right]}{\E [H^2(X,X')]} \;=\; o(1)\,.$$ Using power mean inequality and Jensen's inequality, it is not hard to verify that $\E\left[\bar{R}^4(X, X')\right] = O\left(\E\, \left[R^4(X, X')\right]\right)$. Using this and H{\"o}lder's inequality, we have $$\frac{\tau^2\, \E \left[\bar{R}^2(X,X')\right]}{\E [H^2(X,X')]} \;= \; O\left(\left(\frac{\tau^4\, \E\, [R^4(X,X')]}{\left( \E\,[ H^2(X,X')] \right)^2}  \right)^{1/2}\right)\,.$$ Clearly Assumption \ref{assED_HDMSS4} implies \,$\frac{\tau^4\, \E\, [R^4(X,X')]}{\left( \E\,[ H^2(X,X')] \right)^2}   = o(1)$, which in turn implies $$\frac{\tau^2\, \E \left[\bar{R}^2(X,X')\right]}{\E [H^2(X,X')]} \;=\; o(1)\,.$$ Similar expressions can be derived for $\hat{V}_3$ as well. For the term involving $\hat{V}_1$, in the similar fashion, we can write
\begin{align}\label{expr for first term}
\begin{split}
\E\, \left[4\, cdCov^2_{n,m}(X,Y) \right] \;&=\; \frac{1}{\tau^2} \dis \sum_{i,i'=1}^p \frac{1}{(n-1)(m-1)} \sum_{k=1}^n \sum_{l=1}^m \E \left[ \hat{\rho}_i (X_{k(i)}, Y_{l(i)})\,\hat{\rho}_{i'} (X_{k(i')}, Y_{l(i')})  \right] \\ & \qquad +\; \tau^2 \frac{1}{(n-1)(m-1)} \dis \sum_{k=1}^n \sum_{l=1}^m \E\left[\hat{R}^2(X_k, Y_l)\right] \\
&\qquad + \; \frac{1}{(n-1)(m-1)} \dis \sum_{k=1}^n \sum_{l=1}^m \frac{1}{\tau} \dis \sum_{i=1}^p \E\left[\hat{\rho}_i(X_{k(i)}, Y_{(li)}) \,\, \tau \hat{R}(X_k, Y_l)\right] \,,
\end{split}
\end{align}
where the expression for $\hat{R}(X_k, Y_l)$ is given in equation (\ref{expr for R hat}).
Following equation (\ref{eq 0.9}) we can write $$\frac{1}{\tau^2} \dis \sum_{i,i'=1}^p \frac{1}{(n-1)(m-1)} \sum_{k=1}^n \sum_{l=1}^m \E \left[ \hat{\rho}_i (X_{k(i)}, Y_{l(i)})\,\hat{\rho}_{i'} (X_{k(i')}, Y_{l(i')})  \right] \;=\; \E \left[H^2(X,Y)\right] \,.$$ Therefore in view of equations (\ref{V def supp}), (\ref{eq 0.8}) and (\ref{V hat def}), using the power mean inequality we can write
\begin{align*}
\vert \E\,(\hat{V}_1) - V_1 \vert &\leq \frac{C_1'}{nm}\, \tau^2 \,\E\, \bar{R}^2(X, Y) \,+\,  \frac{C_2'}{nm}\, \left( \left(\frac{1}{\tau^2} \dis \sum_{i,i'=1}^p \E \left[d_{kl}(i) d_{kl}(i')  \right] \right) \tau^2\, \E \left[ \bar{R}^2(X, Y)\right]  \right)^{1/2},
\end{align*}
for some positive constants $C_1'$ and $C_2'$. Then under $H_0$ and Assumptions \ref{ass0.5} and \ref{assED_HDMSS4}, we have
\begin{align*}
\frac{\left|\E\,(\hat{V}_1) - V_1\right|}{V_1} = o(1)\,.
\end{align*}
\end{proof}

\begin{lemma}\label{lemma_supp_3}
Under $H_0$ and Assumptions \ref{ass0.5}, \ref{assED_HDMSS} and \ref{assED_HDMSS4}, as $n, m$ and $p \to \infty$, we have $$\frac{\text{var}(\hat{V}_i)}{V_i^2} = o(1) , \;\; 1 \leq i \leq 3\,.$$
\end{lemma}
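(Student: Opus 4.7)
The three parts are handled symmetrically, so I will focus on $\hat{V}_2$ and then indicate the modifications needed for $\hat{V}_1$ and $\hat{V}_3$. The plan is to reuse the decomposition that was already exploited in the proof of Lemma \ref{lemma_supp_2}: starting from (\ref{expr to be used later}) and (\ref{expr for D_n(X,X)}), I would write
\[
\hat{V}_2 \;=\; \underbrace{\frac{2}{n(n-1)}\cdot\frac{1}{n(n-3)}\sum_{k\neq l}\Bigl(\tfrac{1}{2\tau_X}\sum_{i=1}^{p}\wrho_i(X_{k(i)},X_{l(i)})\Bigr)^{\!2}}_{=:\,\hat{V}_2^{\star}}\;+\;\Delta_n,
\]
where $\Delta_n$ collects the pure $\widetilde{R}^2$ term and the cross term between $\sum_i \wrho_i$ and $\tau_X\widetilde{R}$ appearing in (\ref{expr for D_n(X,X)}). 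The lemma will follow if I can show (i) $\mathrm{var}(\hat{V}_2^{\star})/V_2^2 = o(1)$ and (ii) $\mathrm{var}(\Delta_n)/V_2^2 = o(1)$.

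For (i), after replacing the $\mathcal{U}$-centered $\wrho_i$ by the population double-centered $d^X_{kl}(i)=\tau_X H(X_k,X_l)/p$-type summands at the cost of lower-order corrections (the standard interchange between $\mathcal{U}$-centering and population double-centering, cf.\ Proposition 2.2.1 in Zhu et al.\,(2019)), $\hat{V}_2^{\star}$ is asymptotically equivalent to $\frac{1}{2n^2(n-1)(n-3)}\sum_{k\neq l}H(X_k,X_l)^2$. Setting $\phi_{kl}:=H(X_k,X_l)^2$ and invoking independence of $X_1,\dots,X_n$ under $H_0$, I get $\mathrm{cov}(\phi_{kl},\phi_{uv})=0$ whenever $\{k,l\}\cap\{u,v\}=\emptyset$, so only the diagonal and one-common-index covariances survive, giving the bound
\[
\mathrm{var}\Bigl(\sum_{k\neq l}\phi_{kl}\Bigr) \;=\; O\bigl(n^2\,\E[H^4(X,X')] + n^3\,\E[H^2(X,X'')H^2(X',X'')]\bigr).
\]
Using $V_2^2\asymp n^{-4}(\E[H^2(X,X')])^2$ (which follows from (\ref{eqn V order}) together with Assumption \ref{ass6}), the ratio $\mathrm{var}(\hat{V}_2^{\star})/V_2^2$ is controlled by
\[
\frac{1}{n^2}\cdot\frac{\E[H^4(X,X')]}{(\E[H^2(X,X')])^2}\;+\;\frac{1}{n}\cdot\frac{\E[H^2(X,X'')H^2(X',X'')]}{(\E[H^2(X,X')])^2},
\]
both of which are $o(1)$ by the first two conditions of Assumption \ref{assED_HDMSS}.

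For (ii), I would apply $\E[\Delta_n^2]\leq 2\E[(\text{cross})^2]+2\E[(\text{pure }\widetilde{R}^2)^2]$ together with Cauchy--Schwarz and the bound $\E[\bar{R}^4(X,X')] = O(\E[R^4(X,X')])$ (by power-mean/Jensen applied to the double-centering). This reduces $\mathrm{var}(\Delta_n)/V_2^2$ to a multiple of $n^4\tau_X^4\E[R^4(X,X')]/(\E[H^2(X,X')])^2$, which is $o(1)$ by Assumption \ref{assED_HDMSS4}. For $\hat{V}_3$ the argument is word-for-word identical with $X$ replaced by $Y$.

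The case of $\hat{V}_1$ is structurally the same but indexed bipartitely: writing $\hat{V}_1^{\star}\approx\frac{1}{(nm)^2}\sum_{k=1}^{n}\sum_{l=1}^{m}H(X_k,Y_l)^2$ after the analogous Taylor reduction using (\ref{expr for first term}), the independence between the two samples under $H_0$ kills all covariances $\mathrm{cov}(\phi_{kl},\phi_{k'l'})$ with both $k\neq k'$ and $l\neq l'$. The surviving one-common-index covariances are absorbed by Cauchy--Schwarz via $\E[H^2(X,Y)H^2(X,Y')]\leq \E[H^4(X,Y)]$, while the quadratic expansion of $(\sum_{k}\sum_{l}H(X_k,Y_l))^2$-type leftovers (coming from the non-diagonal part of the projection matrix $A$ in (\ref{dvec expr})) produces the fourth-order term $\E[H(X,Y)H(X',Y)H(X,Y')H(X',Y')]$ that is controlled by the third condition of Assumption \ref{assED_HDMSS}. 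The remainder piece of $\Delta_n$ for $\hat{V}_1$ is handled by Assumption \ref{assED_HDMSS4} exactly as in (ii) above.

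The main obstacle will be the combinatorial bookkeeping: carefully enumerating the covariance contributions of $\phi_{kl}\phi_{uv}$ according to the intersection pattern of the index pairs (diagonal, one shared, two shared in the symmetric case; and the four bipartite overlap cases for $\hat{V}_1$) and matching each surviving term to exactly one of the three moment conditions in Assumption \ref{assED_HDMSS}. Once this bookkeeping is done, the asymptotic negligibility of the $\mathcal{U}$-centering correction and of the Taylor remainder $\widetilde R$ are both routine consequences of Assumption \ref{assED_HDMSS4} and mirror the calculations already carried out in Lemma \ref{lemma_supp_2}.
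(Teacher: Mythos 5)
Your proposal follows essentially the same route as the paper: decompose $\mathrm{var}(\hat V_2)$ by the overlap pattern of the index pairs, pass from the $\mathcal{U}$-centered distances to the population double-centered quantities $\bar A_{kl}$ (equivalently $H(X_k,X_l)$) via the Taylor expansion, and match the surviving diagonal and one-shared-index contributions to the first two moment conditions of Assumption \ref{assED_HDMSS}, with the Taylor remainder absorbed by Assumption \ref{assED_HDMSS4}. Your orders of magnitude, your identification of $V_2^2\asymp n^{-4}(\E[H^2(X,X')])^2$, and your matching of each term to the correct assumption all agree with the paper.

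The one place where your plan understates the work is the step you label as "the standard interchange between $\mathcal{U}$-centering and population double-centering... at the cost of lower-order corrections." Your clean picture --- that $\mathrm{cov}(\phi_{kl},\phi_{uv})=0$ for disjoint pairs by independence --- holds only for the population-centered $\phi_{kl}=H(X_k,X_l)^2$. For the $\mathcal{U}$-centered $\widetilde A^2_{kl}$ that actually appear in $\hat V_2=\frac{2}{n(n-1)}\widetilde{\cal{D}_n^2}(X,X)$, the disjoint-pair covariances are genuinely nonzero, because each $\widetilde A_{kl}$ involves all $n$ sample points; the paper shows (via the expansion in (\ref{zhang 2018}) and the ten-term decomposition (\ref{eq 0.7}), followed by the Case 1--3 enumeration) that these covariances are $O(n^{-2}\E[\bar A^4_{kl}])$ rather than zero, and there are $O(n^4)$ such pairs, so their total contribution is of exactly the same size as your diagonal term $n^{-2}\,\E[H^4]/(\E[H^2])^2$ --- not smaller. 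So the "lower-order correction" is in fact a same-order correction that happens to be controlled by the same assumption; establishing this is the bulk of the paper's proof, not routine bookkeeping. Your final paragraph does acknowledge that this enumeration remains to be done, and once it is carried out your argument closes correctly, so I would call this a deferral of the main technical step rather than an error; just be aware that declaring the interchange "standard" is where a referee would push back.
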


\begin{proof}[Proof of Lemma \ref{lemma_supp_3}]
Again we deal with $\hat{V}_2$ first. To simplify the notations, denote $A_{ij} = K(X_i, X_j)$ and $\widetilde{A}_{ij} = \widetilde{D}^X_{ij}$\, for \,$1 \leq i \neq j \leq n$. Observe that
\begin{align}\label{var decomp}
\begin{split}
\text{var}\left( \widetilde{\cal{D}^2_n}(X,X) \right) \;&=\; \text{var} \left( \frac{1}{n(n-3)} \dis \sum_{i \neq j} \widetilde{A}^2_{ij} \right)\\
& \asymp \; \frac{1}{n^4} \, \left[\dis \sum_{i<j} \text{var}(\widetilde{A}^2_{ij}) \;+\; \dis \sum_{i<j<j'} \text{cov}(\widetilde{A}^2_{ij}, \widetilde{A}^2_{jj'}) \;+\; \dis \sum_{\substack{i<j, i'<j'\\ \{i,j\} \cap \{i',j'\}= \phi}} \text{cov}(\widetilde{A}^2_{ij}, \widetilde{A}^2_{i'j'})  \right]\,.
\end{split}
\end{align}
As in the proof of Lemma \ref{lemma_supp_2}, we can write
\begin{equation}\label{zhang 2018}
\begin{split}
\widetilde{A}_{ij} \;=&\; \frac{n-3}{n-1} \bar{A}_{ij} \,-\, \frac{n-3}{(n-1)(n-2)} \dis \sum_{l \notin \{i,j\}} \bar{A}_{il} \,-\, \frac{n-3}{(n-1)(n-2)} \dis \sum_{k \notin \{i,j\}} \bar{A}_{kj}
\\&+\, \frac{1}{(n-1)(n-2)} \dis \sum_{k,l \notin \{i,j\}} \bar{A}_{kl}\,,
\end{split}
\end{equation}
where the four summands are uncorrelated with each other. Using the power mean inequality, it can be shown that
\begin{align*}
\E\,(\widetilde{A}^4_{ij}) \;\leq \; C\, \E\,(\bar{A}^4_{ij}) \;=\; C\, \E\,\left[ \bar{K}^4(X,X') \right],
\end{align*}
for some positive constant $C$, where $\bar{K}(X,X')=K(X,X') - E[K(X,X')|X] - E[K(X,X')|X'] + E[K(X,X')]$ (similarly define $\bar{L}(X,X')$). Therefore the first summand in equation (\ref{var decomp}) scaled by $\widetilde{V_2}^2$ is $o(1)$ as $n, p \to \infty$, provided $$ \frac{1}{n^2} \, \frac{\E\,\left[ \bar{K}^4(X,X') \right]}{\widetilde{V_2}^2} \;=\; o(1)\,,$$ where $\widetilde{V_2}$ is defined in equations (\ref{V tilde def supp 1}) and (\ref{V tilde def supp 2}). Note that $$ \bar{K}(X,X') \;=\; \frac{\tau_X}{2}\, \bar{L}(X,X') \;+\; \tau_X\,\bar{R}(X,X')\,.$$ 
Using the power mean inequality we can write $$ \frac{1}{n^2} \, \frac{\E\,\left[ \bar{K}^4(X,X') \right]}{\left(\E\,\left[H^2(X,X')\right]\right)^2} \,\leq \, C_0\, \frac{1}{n^2} \,\frac{\tau^4_X \,\E\,\left[ \bar{L}^4(X,X') \right]}{\left(\E\,\left[H^2(X,X')\right]\right)^2} \,+\, C_0'\,\frac{1}{n^2} \,\frac{\tau^4_X\,\E\,\left[ \bar{R}^4(X,X') \right]}{\left(\E\,\left[H^2(X,X')\right]\right)^2}\, $$ for some positive constants $C_0$ and $C_0'$. It is easy to see that  
\begin{align}\label{eqn used later}
\bar{L}(X_k,X_l) \,&=\, \frac{1}{\tau^2_X}\,\bar{K}^2(X_k,X_l) \,=\, \frac{1}{\tau^2_X}\, \dis \sum_{i=1}^p d^X_{kl}(i) \,=\, \frac{1}{\tau_X}\, H(X_k, X_l)\,.
\end{align}
From equation (\ref{eqn used later}) it is easy to see that the condition $$\frac{1}{n^2} \,\frac{\tau^4_X \,\E\,\left[ \bar{L}^4(X,X') \right]}{\left(\E\,\left[H^2(X,X')\right]\right)^2} = o(1)\quad \text{is equivalent to} \quad \frac{1}{n^2} \,\frac{\E\,\left[ H^4(X,X') \right]}{\left(\E\,\left[H^2(X,X')\right]\right)^2} = o(1).$$
For the third summand in equation (\ref{var decomp}), observe that
\begin{align}\label{eq 0.7}
\begin{split}
\widetilde{A}^2_{ij}=& O(1)\bar{A}^2_{ij} + O\left(\frac{1}{n^2}\right) \dis \sum_{l,l' \notin \{i,j\}} \bar{A}_{il} \bar{A}_{il'} + O\left(\frac{1}{n^2}\right) \dis \sum_{k,k' \notin \{i,j\}} \bar{A}_{kj} \bar{A}_{k'j} + O\left(\frac{1}{n^4}\right) \dis \sum_{k,k',l,l' \notin \{i,j\}} \bar{A}_{kl} \bar{A}_{k'l'} \\
&  +  O\left(\frac{1}{n}\right) \,\bar{A}_{ij} \dis \sum_{l \notin \{i,j\}} \bar{A}_{il} \,+\,  O\left(\frac{1}{n}\right) \,\bar{A}_{ij} \dis \sum_{k \notin \{i,j\}} \bar{A}_{kj} \,+\,  O\left(\frac{1}{n^2}\right) \,\bar{A}_{ij} \dis \sum_{k,l \notin \{i,j\}} \bar{A}_{kl}
\\&+O\left(\frac{1}{n^2}\right) \, \dis \sum_{k,l \notin \{i,j\}} \bar{A}_{il} \bar{A}_{kj} +  O\left(\frac{1}{n^3}\right) \, \dis \sum_{k,l,l' \notin \{i,j\}} \bar{A}_{il} \bar{A}_{kl'} \,+\,  O\left(\frac{1}{n^3}\right) \, \dis \sum_{k,k',l \notin \{i,j\}} \bar{A}_{kl} \bar{A}_{k'j}\,.
\end{split}
\end{align}
Likewise $\widetilde{A}^2_{i'j'}$ admits a similar expression as in equation (\ref{eq 0.7}). We claim that when $\{i,j\} \cap \{i',j'\}= \phi $, the leading term of $\text{cov}(\widetilde{A}^2_{ij}, \widetilde{A}^2_{i'j'})$ is $O \left(\frac{1}{n^2}\, \E\,(\bar{A}^4_{ij})\right)$. To see this first note  that $\bar{A}_{ij}$ is independent of $\bar{A}_{i'j'}$  when $\{i,j\} \cap \{i',j'\}= \phi $. 
Using the double-centering properties, it can be verified that $$\text{cov}\left(\bar{A}^2_{i'j'}, \bar{A}_{ij} \dis \sum_{l \notin \{i,j\}} \bar{A}_{il} \right)\; = \; \text{cov} \left(\bar{A}^2_{i'j'}, \bar{A}_{ij} \dis \sum_{k \notin \{i,j\}} \bar{A}_{kj} \right) \;=\; \text{cov} \left(\bar{A}^2_{i'j'}, \bar{A}_{ij} \dis \sum_{k,l \notin \{i,j\}} \bar{A}_{kl} \right) = 0.$$ 

To compute the quantity \,$\text{cov}\left( \bar{A}^2_{i'j'} \;,\, O\left(\frac{1}{n^2}\right)\dis \sum_{l,l' \notin \{i,j\}} \bar{A}_{il} \bar{A}_{il'} \right)$,
consider the following cases:
\begin{enumerate}
\item[Case 1\,.] When $l=l'=i'$\, or\, $l=l'=j'$\, or \,$l=i', l'=j'$, $\text{cov} \left(\bar{A}^2_{i'j'}, \bar{A}_{il} \bar{A}_{il'}\right)$ boils down to $\text{cov}(\bar{A}^2_{i'j'}, \bar{A}^2_{ii'})$\, or\, $\text{cov}(\bar{A}^2_{i'j'}, \bar{A}^2_{ij'})$ or $\text{cov}(\bar{A}^2_{i'j'}, \bar{A}_{ii'} \bar{A}_{ij'})$.
\item[Case 2\,.] When $l=i, l'\notin \{i,j,i',j'\}$\, or \,$l=j', l'\notin \{i,j,i',j'\}$, $\text{cov} \left(\bar{A}^2_{i'j'}, \bar{A}_{il} \bar{A}_{il'}\right)$ boils down to $\text{cov}(\bar{A}^2_{i'j'}, \bar{A}_{ii'} \bar{A}_{il'})$ or $\text{cov}(\bar{A}^2_{i'j'}, \bar{A}_{ij'} \bar{A}_{il'})$, which can be easily verified to be zero. 
\item[Case 3\,.] When $\{l, l'\} \cap \{i', j'\} = \phi$, $\text{cov} \left(\bar{A}^2_{i'j'}, \bar{A}_{il} \bar{A}_{il'}\right)$ is again zero.
\end{enumerate}
Similar arguments can be made about 
$$\text{cov}\left( \bar{A}^2_{i'j'} \;,\, O\left(\frac{1}{n^2}\right)\dis \sum_{k,k' \notin \{i,j\}} \bar{A}_{kj} \bar{A}_{k'j} \right)~~\text{ and }~~\text{cov}\left( \bar{A}^2_{i'j'} \;,\, O\left(\frac{1}{n^2}\right)\dis \sum_{k,l \notin \{i,j\}} \bar{A}_{il} \bar{A}_{kj}\right).$$ 
With this and using H{\"o}lder's inequality, it can be verified that when $\{i,j\} \cap \{i',j'\}= \phi $, the leading term of $\text{cov}(\widetilde{A}^2_{ij}, \widetilde{A}^2_{i'j'})$ is $O \left(\frac{1}{n^2}\, \E\,(\bar{A}^4_{ij})\right)$. Therefore the third summand in equation (\ref{var decomp}) scaled by $\widetilde{V_2}^2$ can be argued to be $o(1)$ in similar lines of the argument for the first summand in equation (\ref{var decomp}).

For the second summand in equation (\ref{var decomp}), in the similar line we can argue that the leading term of $\text{cov}(\widetilde{A}^2_{ij}, \widetilde{A}^2_{jj'})$ is $$
O \left(\frac{1}{n}\right) \E \left[\bar{A}_{ij}^4 \right] \;+\; O(1) \, \E \left[ \bar{A}_{ij}^2 \bar{A}_{jj'}^2 \right]\,.$$ Therefore the leading term of \,$\frac{1}{n^4}\dis \sum_{i<j<j'}\text{cov}(\widetilde{A}^2_{ij}, \widetilde{A}^2_{jj'})$ is $$
O \left(\frac{1}{n^2}\right) \E \left[\bar{A}_{ij}^4 \right] \;+\; O\left(\frac{1}{n}\right) \, \E \left[ \bar{A}_{ij}^2 \bar{A}_{jj'}^2 \right]\,.$$ For the second term above, using the power mean inequality we can write
\begin{align*}
\frac{1}{n} \, \frac{\E\,\left[ \bar{A}_{ij}^2 \,\bar{A}_{jj'}^2 \right]}{\left(\E\,\left[H^2(X,X')\right]\right)^2} \;&\leq \; C_3 \, \frac{1}{n} \,\frac{\tau^4 \,\E\,\left[ \bar{L}^2(X,X')\,\bar{L}^2(X',X'') \right]}{\left(\E\,\left[H^2(X,X')\right]\right)^2} \,+\, C_3'\frac{1}{n} \,\frac{\tau^4\,\E\,\left[\bar{L}^2(X,X')\, \bar{R}^2(X',X'') \right]}{\left(\E\,\left[H^2(X,X')\right]\right)^2}\,\\
& \qquad + \, C_3''\,\frac{1}{n} \,\frac{\tau^4\,\E\,\left[\bar{R}^2(X,X')\, \bar{R}^2(X',X'') \right]}{\left(\E\,\left[H^2(X,X')\right]\right)^2}\\
&= \, C_3 \, \frac{1}{n} \,\frac{ \,\E\,\left[ H^2(X,X')\,H^2(X',X'') \right]}{\left(\E\,\left[H^2(X,X')\right]\right)^2} \,+\, C_3'\frac{1}{n} \,\frac{\tau^2\,\E\,\left[H^2(X,X')\, \bar{R}^2(X',X'') \right]}{\left(\E\,\left[H^2(X,X')\right]\right)^2}\,\\
& \qquad + \, C_3''\,\frac{1}{n} \,\frac{\tau^4\,\E\,\left[\bar{R}^2(X,X')\, \bar{R}^2(X',X'') \right]}{\left(\E\,\left[H^2(X,X')\right]\right)^2}
\end{align*}
for some positive constants $C_3, C_3'$ and $C_3''$. Using H{\"o}lder's inequality it can be seen that the second summand in equation (\ref{var decomp}) scaled by $\widetilde{V_2}^2$ is $o(1)$ as $n, p \to \infty$ under Assumptions \ref{assED_HDMSS} and \ref{assED_HDMSS4}. This completes the proof that $$\frac{\text{var}(\hat{V}_2)}{V_2^2} = o(1)\,.$$ A similar line of argument and the simple observation that
\begin{align*}
\hat{K}(X_k, Y_l)\;&=\;K(X_{k},Y_{l})-\frac{1}{n}\sum^{n}_{a=1}K(X_{a},Y_{l})-\frac{1}{m}\sum^{m}_{b=1}K(X_{k},Y_{b})+\frac{1}{nm}\sum_{a=1}^{n}\sum^{m}_{b=1}K(X_{a},Y_{b})\,\\
&= \;\bar{K}(X_{k},Y_{l})-\frac{1}{n}\sum^{n}_{a=1} \bar{K}(X_{a},Y_{l})-\frac{1}{m}\sum^{m}_{b=1} \bar{K}(X_{k},Y_{b})+\frac{1}{nm}\sum_{a=1}^{n}\sum^{m}_{b=1}\bar{K}(X_{a},Y_{b})\,
\end{align*}
will show that under Assumptions \ref{ass0.5}, \ref{assED_HDMSS} and \ref{assED_HDMSS4}, $$\frac{\text{var}(\hat{V}_1)}{V_1^2} = o(1)\qquad \textrm{and} \qquad \frac{\text{var}(\hat{V}_3)}{V_3^2} = o(1)\,.$$
\end{proof}

\begin{lemma}\label{lemma_supp_4}
Under $H_0$ and Assumptions \ref{ass0.5}, \ref{assED_HDMSS} and \ref{assED_HDMSS4}, as $n, m$ and $p \to \infty$, we have \,$\hat{V}/V \overset{P}{\to} 1 \,.$
\end{lemma}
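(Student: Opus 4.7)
The plan is to combine the results of Lemmas \ref{lemma_supp_2} and \ref{lemma_supp_3} via Chebyshev's inequality to establish componentwise ratio consistency $\hat V_i/V_i \overset{P}{\to} 1$ for $i=1,2,3$, and then assemble these three pieces into the overall statement. No fresh probabilistic estimate is required beyond those lemmas; the present lemma is essentially a bookkeeping step.

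First I would fix $i \in \{1,2,3\}$ and apply Chebyshev's inequality to get, for every $\varepsilon>0$,
\begin{align*}
P\!\left(\frac{|\hat V_i - E[\hat V_i]|}{V_i} > \varepsilon\right) \;\leq\; \frac{\mathrm{var}(\hat V_i)}{\varepsilon^2 V_i^2} \;=\; o(1),
\end{align*}
by Lemma \ref{lemma_supp_3}. Coupled with the deterministic bias bound $|E[\hat V_i] - V_i|/V_i = o(1)$ from Lemma \ref{lemma_supp_2}, this yields $(\hat V_i - V_i)/V_i = o_p(1)$, i.e., $\hat V_i = V_i(1 + o_p(1))$ for each $i$.

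Next I would assemble the three pieces using convexity. Writing
\begin{align*}
\frac{\hat V}{V} - 1 \;=\; \sum_{i=1}^{3} \frac{V_i}{V}\left(\frac{\hat V_i}{V_i} - 1\right),
\end{align*}
one observes that the weights $V_i/V \in [0,1]$ sum to one, so the right-hand side is dominated in absolute value by $\max_{1\leq i\leq 3}|\hat V_i/V_i - 1|$. Since each summand is $o_p(1)$ by the previous step, so is the maximum, giving $\hat V/V \overset{P}{\to} 1$.

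The main obstacle in this whole argument has already been overcome inside Lemmas \ref{lemma_supp_2} and \ref{lemma_supp_3}, where the remainder terms $\bar R(X,X')$, $\bar R(Y,Y')$ and $\bar R(X,Y)$ from the Taylor expansion of $K$ had to be controlled against $E[H^2(X,X')]$ using Assumption \ref{assED_HDMSS4}, and cross-product covariances of the $\mathcal U$-centered $\tilde A_{ij}^2$ had to be tamed using Assumption \ref{assED_HDMSS}. Given those lemmas, the proof of Lemma \ref{lemma_supp_4} itself presents no additional difficulty.
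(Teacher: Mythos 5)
Your proposal is correct and follows essentially the same route as the paper: the paper also reduces the claim to the componentwise bounds $\mathrm{var}(\hat V_i)/V_i^2 = o(1)$ and $(\E[\hat V_i]-V_i)^2/V_i^2 = o(1)$ supplied by Lemmas \ref{lemma_supp_2} and \ref{lemma_supp_3}, merely phrasing the conclusion as $\E[(\hat V/V - 1)^2] = o(1)$ rather than passing through Chebyshev and the convex-combination bound. The two arguments are interchangeable bookkeeping.
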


\begin{proof}
It is enough to show that $$ \E\,\left[\left(\frac{\hat{V}}{V} - 1\right)^2\right] = o(1)\;, \;\; \textrm{i.e.}\,, \;\; \frac{\text{var}(\hat{V}) + \left(\E\,[\hat{V}] - V\right)^2}{V^2} = o(1)\,.$$It suffices to show the following
\begin{align*}
\frac{\text{var}(\hat{V}_i)}{V_i^2} = o(1) \;\;\;\; \textrm{and} \;\;\;\; \frac{\left(\E\,[\hat{V}_i] - V_i\right)^2}{V_i^2} = o(1), \quad 1 \leq i \leq 3.
\end{align*}
The proof can be completed using Lemmas \ref{lemma_supp_2} and \ref{lemma_supp_3}.
\end{proof}

\vspace{0.1in}

\begin{proof}[Proof of Theorem \ref{th_ED_HDMSS}]

The proof essentially follows from Lemma \ref{lemma_supp_1} and \ref{lemma_supp_4}.

\end{proof}

\begin{proof}[Proof of Proposition \ref{unbiased and ustat form}]
The proof of the first part follows similar lines of the proof of Proposition 1 in Sz\'ekely et al.\,(2014), replacing the Euclidean distance between $X$ and $X'$, viz. $\Vert X-X' \Vert_{\tilde{p}}$\,, by $K(X,X')$. The second part of the proposition has a proof similar to Lemma 2.1 in Yao et al.\,(2018) and Section 1.1 in the Supplement of Yao et al.\,(2018).
\end{proof}

\begin{proof}[Proof of Theorem \ref{ACdCov U-stat prop}]
The first two parts of the theorem immediately follow from Proposition 2.6 and Theorem 2.7 in Lyons\,(2013), respectively and the parallel U-statistics theory (see for example Serfling\,(1980)). The third part follows from the first part and the fact that $\cal{D}$ is non-zero for two dependent random vectors.
\end{proof}

\begin{proof}[Proof of Theorem \ref{D pop taylor}]
Following the definition of $\cal{D}(X,Y)$ and applying Proposition \ref{K taylor : ED}, we can write
\begin{align*}
\frac{1}{\tau_{XY}}\,\cal{D}^2 (X,Y) \; &= \; \E \,\frac{K(X,X')}{\tau_X}\, \frac{K(Y,Y')}{\tau_Y} \, + \, \E \,\frac{K(X,X')}{\tau_X} \, \E \,\frac{K(Y,Y')}{\tau_Y} \, - \, 2\, \E \,\frac{K(X,X')}{\tau_X}\, \frac{K(Y,Y'')}{\tau_Y}\\
&= \;\; \E\, \left(1 + \frac{1}{2} L(X, X') + R(X, X') \right) \, \left(1 + \frac{1}{2} L(Y, Y') + R(Y, Y')  \right)\\
& \qquad + \; \E\, \left(1 + \frac{1}{2} L(X, X') + R(X, X') \right) \, \E\,\left(1 + \frac{1}{2} L(Y, Y') + R(Y, Y')  \right)\\
& \qquad - \; 2\, \E\, \left(1 + \frac{1}{2} L(X, X') + R(X, X') \right) \, \left(1 + \frac{1}{2} L(Y, Y'') + R(Y, Y'')  \right)\,\\
&= \;\; L \; + \; R,
\end{align*}
where
\begin{align*}
L \; &= \; \frac{1}{4} \, \left[ \,\E\, L(X,X') L(Y,Y') \,\, + \,\,\E\,L(X,X') \, \E\,L(Y,Y') \,\, - \,\, 2\, \E\, L(X,X') L(Y,Y'') \, \right],
\end{align*}
and
\begin{align*}
R \; =& \; \E\, \left[ \, \frac{1}{2} L(X,X') R(Y,Y') \,+\, \frac{1}{2} R(X,X') L(Y,Y') \, +\, R(X,X') R(Y,Y') \,\right]\\
&  -\,2\,\E\, \left[ \, \frac{1}{2} L(X,X') R(Y,Y'') \,+\, \frac{1}{2} R(X,X') L(Y,Y'') \, +\, R(X,X') R(Y,Y'') \,\right]
\\&+ \E\, R(X,X') \, \E\,R(Y,Y').
\end{align*}
Some simple calculations yield \begin{align*}
L \; &=\; \frac{1}{4\tau^2_{XY}} \, \left\{ \,\E\, [K^2(X,X') K^2(Y,Y')] \,\, + \,\,\E\,[K^2(X,X')] \, \E\,[K^2(Y,Y')] \,\, - \,\, 2\, \E\, [K^2(X,X') K^2(Y,Y'')] \, \right\}\\
&= \; \frac{1}{4\tau^2_{XY}} \,  \dis\sum_{i=1}^p \sum_{j=1}^q \Big\{\,\E\, [\rho_i(X_{(i)}, X_{(i)}') \, \rho_j(Y_{(j)}, Y_{(j)}')] \, + \,  \E\, [\rho_i(X_{(i)}, X_{(i)}')] \, \E\,[\rho_j(Y_{(j)}, Y_{(j)}')] \\ &\qquad \qquad \qquad \qquad -\,2 \,\E\, [\rho_i(X_{(i)}, X_{(i)}') \, \rho_j(Y_{(j)}, Y_{(j)}'')]  \Big\}\\
&= \; \frac{1}{4\tau^2_{XY}} \,  \dis\sum_{i=1}^p \sum_{j=1}^q D^2_{\rho_i,\,\rho_j}(X_{(i)}, Y_{(j)})\,.
\end{align*}
To observe that the remainder term is negligible, note that under Assumption \ref{ass D pop taylor},
\begin{align*}
\E\,[ L(X, X') R(Y,Y')] \; & \leq \; \left(\, \E\, [L(X, X')^2] \, \E\, [R(Y, Y')^2] \,\right)^{1/2} \; = \; O(a_p' b_q'^2)\;,\\
\E\,[ R(X, X') L(Y,Y')] \; & \leq \; \left(\, \E\, [R(X, X')^2 ]\, \E\, [L(Y, Y')^2] \,\right)^{1/2} \; = \; O(a_p'^2 b_q')\;,\\
\E\,[ R(X, X') R(Y,Y')] \; & \leq \; \left(\, \E\, [R(X, X')^2] \, \E\, [R(Y, Y')^2] \,\right)^{1/2} \; = \; O(a_p'^2 b_q'^2)\;,
\end{align*}
Clearly, $\mathcal{R} = \tau_{XY} R = O(\tau_{XY}\, a_p'^2 b_q' + \tau_{XY}\, a_p' b_q'^2)$.
\end{proof}

\begin{proof}[Proof of Theorem \ref{th MDD}]
The proof is essentially similar to the proof of Theorem \ref{D pop taylor}. Note that using Proposition \ref{K taylor : ED}, we can write
\begin{align*}
\frac{1}{\tau_Y}\,\cal{D}^2 (X,Y) \; &= \; \E \,K(X,X')\, \frac{K(Y,Y')}{\tau_Y} \, + \, \E \,K(X,X') \, \E \,\frac{K(Y,Y')}{\tau_Y} \, - \, 2\, \E \,K(X,X')\, \frac{K(Y,Y'')}{\tau_Y}\\
&= \;\; \E\, K(X,X') \, \left(1 + \frac{1}{2} L(Y, Y') + R(Y, Y')  \right)\\
& \qquad + \; \E\, K(X,X') \, \E\,\left(1 + \frac{1}{2} L(Y, Y') + R(Y, Y')  \right)\\
& \qquad - \; 2\, \E\, K(X,X') \, \left(1 + \frac{1}{2} L(Y, Y'') + R(Y, Y'')  \right)\,\\
&= \;\; L \; + \; R,
\end{align*}
where
\begin{align*}
L \; &=\; \frac{1}{2\tau^2_{Y}} \,  \dis \sum_{j=1}^q \Big\{\,\E\, [K(X,X') \, \rho_j(Y_{(j)}, Y_{(j)}')] \, + \,  \E\, [K(X,X') \, \E\,[\rho_j(Y_{(j)}, Y_{(j)}')] \; -\,2 \,\E\, [K(X,X') \, \rho_j(Y_{(j)}, Y_{(j)}'')]  \Big\}\\
&= \; \frac{1}{2\tau^2_{Y}} \,   \sum_{j=1}^q D^2_{K,\,\rho_j}(X, Y_{(j)})\,,
\end{align*}
and
\begin{align*}
R \; =& \; \E\, \left[ \,  K(X,X') R(Y,Y') \,\right]\;+\; \E\,\left[ K(X,X')\right] \, \E \left[ R(Y,Y')\right]\;  -\; 2\,\E\, \left[ \, K(X,X') R(Y,Y'') \,\right]\,.
\end{align*}
Under the assumption that $\E\,[R^2(Y,Y')] = O(b_q'^4)$, using H{\"o}lder's inequality it is easy to see that $\tau_Y R = O(\tau_{Y}\, b_q'^2) = o(1)$.

\end{proof}

\begin{proof}[Proof of Theorem \ref{ACdCov taylor thm}]

Following equation (\ref{expr to be used later}), we have for $1\leq k\neq l \leq n$
\begin{align*}
\widetilde{D}^X_{kl} \;&=\; \frac{\tau_X}{2} \widetilde{L}(X_k, X_l) \;+\; \tau_X \widetilde{R}(X_k, X_l) \;=\; \frac{1}{2\tau_X} \dis \sum_{i=1}^p \wrho_i(X_{k(i)}, X_{l(i)}) \;+\; \tau_X \widetilde{R}(X_k, X_l)\,,\\
\widetilde{D}^Y_{kl} \;&=\; \frac{\tau_Y}{2} \widetilde{L}(Y_k, Y_l) \;+\; \tau_Y \widetilde{R}(Y_k, Y_l) \;=\; \frac{1}{2\tau_Y} \dis \sum_{j=1}^q \wrho_i(Y_{k(j)}, Y_{l(j)}) \;+\; \tau_Y \widetilde{R}(Y_k, Y_l)\,.
\end{align*}
From equation (\ref{ustat dcov}) in the main paper it is easy to check that
\begin{align*}
\widetilde{\cal{D}^2_n} (X,Y) \;&=\; \frac{1}{4\tau_{XY}} \dis \sum_{i=1}^p \sum_{j=1}^q \widetilde{D^2_n}_{\,;\,\rho_i, \rho_{j}} (X_{(i)}, Y_{(j)}) \;+\;   \frac{\tau_{XY}}{2n(n-3)} \dis \sum_{k\neq l} \widetilde{L}(X_k, X_l) \widetilde{R}(Y_k, Y_l) \\
&\qquad + \; \frac{\tau_{XY}}{2n(n-3)} \dis \sum_{k\neq l} \widetilde{L}(Y_k, Y_l) \widetilde{R}(X_k, X_l) \;+\; \frac{\tau_{XY}}{n(n-3)} \dis \sum_{k\neq l} \widetilde{R}(X_k, X_l)\, \widetilde{R}(Y_k, Y_l) \,.
\end{align*}
Under Assumption \ref{ass2.1}, using H{\"o}lder's inequality and power mean inequality, it can be verified that
\begin{align*}
\dis \sum_{k\neq l} \widetilde{L}(X_k, X_l) \widetilde{R}(Y_k, Y_l) \; & \leq \; \left(\, \dis \sum_{k\neq l} \widetilde{L}(X_k, X_l)^2 \, \sum_{k\neq l} \widetilde{R}(Y_k, Y_l)^2 \,\right)^{1/2} \; = \; O_p(a_p b^2_q)\;,\\
\dis \sum_{k\neq l} \widetilde{L}(Y_k, Y_l) \widetilde{R}(X_k, X_l) \; & \leq \; \left(\, \dis \sum_{k\neq l} \widetilde{L}(Y_k, Y_l)^2 \, \sum_{k\neq l} \widetilde{R}(X_k, X_l)^2 \,\right)^{1/2} \; = \; O_p(a^2_p b_q)\;,\\
\dis \sum_{k\neq l} \widetilde{R}(X_k, X_l) \widetilde{R}(Y_k, Y_l) \; & \leq \; \left(\, \dis \sum_{k\neq l} \widetilde{R}(X_k, X_l)^2 \, \sum_{k\neq l} \widetilde{R}(Y_k, Y_l)^2 \,\right)^{1/2} \; = \; O_p(a_p^2 b^2_q)\;.
\end{align*}
This completes the proof of the theorem.
\end{proof}

\begin{proof}[Proof of Theorem \ref{th MDD sample}]
Following equation (\ref{expr to be used later}), we have for $1\leq k\neq l \leq n$
\begin{align*}
\widetilde{D}^Y_{kl} \;&=\;  \frac{1}{2\tau_Y} \dis \sum_{j=1}^q \wrho_j(Y_{k(j)}, Y_{l(j)}) \;+\; \tau_Y \widetilde{R}(Y_k, Y_l)\,,
\end{align*}
and therefore \begin{align*}
\widetilde{\cal{D}^2_n} (X,Y) \;&=\; \frac{1}{2\tau_{Y}} \dis  \sum_{j=1}^q \widetilde{D^2_n}_{\,;\,K, \rho_{j}} (X, Y_{(j)}) \;+\;   \frac{\tau_{Y}}{n(n-3)} \dis \sum_{k\neq l} \widetilde{K}(X_k, X_l) \widetilde{R}(Y_k, Y_l) \,.
\end{align*}
Using power mean inequality, it can be verified that $ \sum_{k\neq l} \widetilde{K}(X_k, X_l) \widetilde{R}(Y_k, Y_l)\,=\, O_p(b^2_q)$. This completes the proof of the theorem.
\end{proof}

\begin{proof}[Proof of Theorem \ref{ACdcov:dist_conv}]
The proof follows similar lines of the proof Theorem 2.2.1 in Zhu et al.\,(2019), with the distance metric being the one from the class of metrics we proposed in equation (\ref{Kdef}).
\end{proof}

\begin{proof}[Proof of Theorem \ref{HDLSS dist conv}]
The proof of the theorem follows similar lines of the proof of Proposition 2.2.2 in Zhu et al.\,(2019).
\end{proof}

\begin{proof}[Proof of Theorem \ref{ACdCov hd}]
The decomposition into the leading term follows the similar lines of the proof of Theorem \ref{ACdCov taylor thm}. The negligibility of the remainder term can be shown by mimicking the proof of Theorem 3.1.1 in Zhu et al.\,(2019).
\end{proof}

\begin{proof}[Proof of Theorem \ref{HDMSS dist conv}]
It essentially follows similar lines of Proposition 3.2.1 in Zhu et al.\,(2019).
\end{proof}

{\small

}

\end{alphasection}

\end{document}